\documentclass[table,svgnames,dvipsnames]{article}

\usepackage{soul}
\usepackage{url}
\usepackage[hidelinks]{hyperref} 
\usepackage[small]{caption}
\usepackage{graphicx}
\usepackage{booktabs}
\usepackage{amsmath,multirow}
\usepackage{amsthm,thm-restate,thmtools}
\usepackage{amssymb}
\usepackage{enumerate}
\usepackage{xcolor,xfrac}
\usepackage[english]{babel}
\usepackage{charter}
\usepackage{authblk}
\usepackage{framed}
\usepackage[normalem]{ulem}  
\usepackage{tikz}
\usetikzlibrary{shapes.arrows}
\usetikzlibrary{shapes.geometric} 
\usepackage{amsfonts} 
\usepackage[T1]{fontenc}
\usepackage[utf8]{inputenc}
\usepackage[top=1 in,bottom=1in, left=1 in, right=1 in]{geometry}  
\usepackage{enumitem}
\usepackage{tabularx}
\hypersetup{
     colorlinks   = true,
     linkcolor    = red, % color of internal links
     urlcolor     = teal, % color of external links
   citecolor    = blue % color of links to bibliography
}
\usepackage{mathtools}
\usepackage{nicefrac} 
\usepackage{pifont}
\usepackage{bm}
\usepackage{dsfont}
\usepackage{algorithm}
\usepackage{graphicx}
\usepackage[noend]{algpseudocode}
\usepackage{subcaption}
\usepackage{gensymb}
\usepackage{pgfplots}
\usepackage{bbm}
\usepackage{comment}
\usepackage{xr}
\usepackage[sort,numbers]{natbib}

\usepackage{tkz-graph}

\usepackage[utf8]{inputenc}
\usepackage[T1]{fontenc}
\usepackage{amsmath,amssymb,amsthm,mathtools}
\usepackage{enumitem}
\usepackage{cleveref}

\theoremstyle{plain}
\newtheorem{theorem}{Theorem}[section]
\newtheorem{lemma}[theorem]{Lemma}
\newtheorem{proposition}[theorem]{Proposition}
\newtheorem{corollary}[theorem]{Corollary}

\theoremstyle{definition}
\newtheorem{definition}[theorem]{Definition}
\newtheorem{example}[theorem]{Example}

\theoremstyle{remark}
\newtheorem{remark}[theorem]{Remark}

\crefname{openproblem}{Open Problem}{Open Problems}
\Crefname{openproblem}{Open Problem}{Open Problems}

\newcounter{algctr}
\crefname{algctr}{Algorithm}{Algorithms}
\Crefname{algctr}{Algorithm}{Algorithms}

\newcommand{\N}{N}
\newcommand{\Rec}{R}
\newcommand{\charity}{C}
\newcommand{\val}{v}
\newcommand{\size}{s}
\newcommand{\bud}{B}
\newcommand{\dens}{\rho}
\newcommand{\densup}{\rho^{\max}}
\newcommand{\densdown}{\rho^{\min}}
\newcommand{\spread}{\Gamma}

\newcommand{\viol}{\mathrm{viol}}
\newcommand{\Alg}{\mathrm{Alg}}
\newcommand{\horizon}{T}
\DeclareMathOperator*{\argmin}{arg\,min}

\newcommand{\eps}{\varepsilon}

\providecommand{\Icom}{\mathcal I^{\mathrm{com}}_\gamma}

\providecommand{\horizon}{T}
\providecommand{\bud}{B}
\providecommand{\val}{v}
\providecommand{\size}{s}
\providecommand{\charity}{C}
\providecommand{\Rec}{R}
\providecommand{\N}{N}
\providecommand{\densup}{\rho^{\max}}
\providecommand{\densdown}{\rho^{\min}}
\makeatletter\@ifundefined{c@algctr}{\newcounter{algctr}}{}\makeatother

\allowdisplaybreaks

\usepackage{newfloat}
\usepackage{listings}
\DeclareCaptionStyle{ruled}{labelfont=normalfont,labelsep=colon,strut=off} % DO NOT CHANGE THIS
\floatstyle{ruled}
\newfloat{listing}{tb}{lst}{}
\floatname{listing}{Listing}

\usepackage{booktabs}

\title{Online Fair Division with Budget Constraints}
\author{Saar Cohen\textsuperscript{\rm *}, Nicholas Teh\textsuperscript{\rm *}, Paul W. Goldberg, Michael J. Wooldridge}
\affil[]{Department of Computer Science, University of Oxford, UK}
\date{\empty}

\begin{document}

\maketitle

\begin{abstract}
    We study an online variant of discrete fair division under generalized assignment budget constraints. Goods arrive one at a time and must be assigned irrevocably to a feasible agent or to charity, which holds all unallocated goods, while fairness is evaluated only against budget-feasible subsets of every recipient's bundle. We first show that, without additional structure, no deterministic online algorithm can guarantee any fixed approximation to feasible envy-freeness, even in highly symmetric instances. We then identify bounded density spread as a structural condition that restores meaningful guarantees, obtaining approximation algorithms for arbitrary item sizes and showing that, under common valuations and sufficiently small goods, these guarantees can be strengthened to an optimal deterministic frontier. We further study resource augmentation, where the online algorithm is allowed slightly larger budgets than the fairness benchmark, and characterize the resulting improvement in the achievable guarantees. Finally, we develop a learning-augmented framework based on predicting joint value-size types, proving consistency under perfect predictions, robustness to prediction error, and showing that separate predictions of value and size marginals are insufficient to recover strong fairness guarantees.
\end{abstract}

\section{Introduction}

\def\thefootnote{*}\footnotetext{Equal contribution.}

Fair division asks how resources should be allocated among agents with
heterogeneous preferences so that no agent feels unfairly treated.
In its classical form, all goods are available before the allocation is
computed, and every good is assigned to some agent. Many practical
allocation problems, however, must satisfy capacity or budget
constraints in addition to fairness. For instance, a food bank cannot
send a household more perishable goods than it can store, a cloud
scheduler cannot place more jobs on a machine than its available memory
permits, and a course-allocation mechanism must respect limits on the
number of credit hours assigned to each student.

Many of these applications are naturally modeled by fair division under generalized assignment constraints, introduced by \citet{barman2023guaranteeing}. In this framework, each good may have a different value and consume a different amount of capacity for each agent, and each agent may receive only a bundle whose total size does not exceed her budget. Since the agents' combined capacity may be insufficient to accommodate all goods, the standard convention is to introduce a distinguished recipient called \emph{charity}, which holds all unallocated goods~\citep{chaudhury2021little,wu2021budget,caragiannis2019envy,gan2023approximation}. This also changes the appropriate notion of fairness. Another recipient's entire bundle may not fit within an agent's budget, so comparing against that bundle is no longer meaningful. Instead, envy is evaluated only with respect to subsets that are feasible for the comparing agent. An allocation is therefore \emph{feasibly envy-free} if every agent weakly prefers her own bundle to every budget-feasible subset of every other recipient's bundle, including charity~\citep{barman2023guaranteeing}. For indivisible goods, this notion is naturally relaxed by allowing a bounded number of goods to be removed from the comparison set before evaluating envy.

A central assumption in this literature is that all goods, together
with the agents' values and sizes for them, are known \textit{before} the
allocation is computed. An algorithm can thus optimize over the
complete instance and coordinate its capacity decisions globally. In
many practical settings, however, this is unrealistic.
Goods arrive over time and must be committed \textit{immediately}. For example, donated items
are shelved as they arrive, jobs are scheduled when they are submitted,
and perishable resources cannot wait for a global optimization.

In this work, we therefore introduce the \textit{online} version of budget-constrained fair
division, in which indivisible goods arrive one at a time. When a
good arrives, the algorithm learns its complete value-size profile and
must irrevocably assign it either to an agent whose remaining budget can
accommodate it or to charity. The algorithm has no information about
future arrivals unless such information is explicitly supplied through
predictions. Fairness is evaluated with respect to all recipients,
including charity, and only against subsets that fit within the
comparing agent's original budget.

This problem is not merely the classical online fair-division problem
with an additional feasibility constraint. Budget constraints and
charity create two interacting difficulties that are absent from the
unconstrained model. First, accepting a good irrevocably consumes scarce
capacity, so an early assignment may prevent an agent from receiving a
more valuable good later. Second, a good that is not assigned to an
agent is placed in charity rather than disappearing from the instance.
Since fairness is evaluated against feasible subsets of the charity
bundle, rejecting a good may itself create future envy. Every online
decision therefore balances preserving capacity against avoiding future
envy.

These observations naturally raise the question of what additional
structure or algorithmic power is sufficient to recover meaningful
fairness guarantees in online budgeted fair division. Our contributions provide
several complementary answers to this question and are listed below.

\medskip
\textbf{Limits of online budgeted fair division.}
We prove that no deterministic online algorithm can guarantee any fixed
approximation to feasible envy-freeness in the unrestricted adversarial
model, even for two agents,
equal budgets, common valuations, uniform item sizes, and a known
horizon.

\medskip
\textbf{Bounded density spread restores tractability.}
For arbitrary item sizes, we derive capacity-aware online
algorithms whose fairness guarantees depend only on the variation in
value per unit of budget. Under common valuations, sizes common across
agents, and globally small goods, we further obtain matching upper and
lower bounds that exactly characterize the optimal deterministic
approximation frontier.

\medskip
\textbf{Value of additional online capacity.}
We study a \textit{resource-augmentation} model in which the online algorithm
may allocate using budgets that are slightly larger than the original
budgets against which fairness is evaluated. For arbitrary item sizes,
we quantify the resulting improvement in achievable fairness
guarantees and establish matching lower bounds. Under common valuations, sizes common across agents, and globally small
goods, we completely characterize the optimal tradeoff
between resource augmentation and fairness.

\medskip
\textbf{Learning-augmented online
budgeted fair division.}
Using predictions of future joint value-size types, we develop an online
algorithm that exactly reproduces any offline guarantee when predictions are
correct and degrades gracefully when they are not. We also prove that
this robustness guarantee is essentially tight and that predicting
values and sizes separately is fundamentally insufficient.

\section{Related Work}
\label{sec:related work}
%====================================================================

\textbf{Fair division under constraints and charity.}
Fair division under cardinality, matroid, connectivity, and budget
constraints has received considerable attention; see the survey
of~\citet{suksompong2021constraints}. Closest to our work is the
generalized assignment model of \citet{barman2023guaranteeing}, in which
goods have agent-specific values and sizes, agents are subject to
budgets, and unallocated goods are assigned to \emph{charity}. Building
on earlier work on constrained fair
division~\cite{chaudhury2021little,wu2021budget,caragiannis2019envy,gan2023approximation},
they formulate fairness through feasible envy, comparing each agent's
allocation only with budget-feasible subsets of other recipients'
bundles. These works study the \textit{offline} setting, whereas we introduce
its \textit{online} version, where goods arrive sequentially and allocation
decisions are irrevocable. \citet{Elkind2024} subsequently studied an analogous model for the setting with chores.

\medskip
\textbf{Online fair division.}
Online fair division was introduced through applications such as food
banks by~\citet{AleksandrovEtAl2015}; see the survey
of~\citet{AleksandrovWalsh2020Survey}. Subsequent work studies the
fairness-efficiency tradeoff under online
arrivals~\cite{BenadeEtAl2024,benade2025dynamic}, dynamic
reallocation~\cite{he2019achieving}, and online variants of envy-based
fairness notions~\cite{neoh2026online,choo2026approximate}. These works
often assume that every arriving good is allocated to an agent and
that fairness is evaluated with respect to entire bundles. 
Our model instead incorporates budget constraints and charity,
leading to capacity-aware feasible-envy comparisons; see
Appendix~\ref{supp:related} for a broader discussion.

\medskip
\textbf{EF1, EFX, and their feasible analogues.}
Envy-freeness up to one good (EF1)~\cite{budish2011combinatorial,lipton2004approximately}
and up to any good (EFX)~\cite{caragiannis2019unreasonable,neoh2025efx}
are standard relaxations of envy-freeness for indivisible goods.
Under budget constraints, however, another recipient's bundle may be
infeasible for the comparing agent. To address this issue,
\citet{barman2023guaranteeing} introduced feasible envy-freeness (FEF)
and its relaxation FEFx (building on related feasible-envy notions for other constraint classes), which restrict comparisons to budget-feasible
subsets and include charity as a recipient. We adopt these fairness
notions.

\medskip
\textbf{Learning-augmented online algorithms.}
Algorithms with machine-learned predictions aim to achieve
\emph{consistency} under accurate predictions while remaining
\emph{robust} to prediction error~\cite{purohit2018improving,lykouris2021competitive};
see the survey of~\citet{Mitzenmacher_Vassilvitskii_2021}. Learning-augmented
techniques have recently been applied to online fair
division~\cite{banerjee2022online,barman2022universal,cohen2023general,an2024best,huang2025long,zhou2023multi,neoh2026online,spaeh2023online,balkanski2023strategyproof,cohen2024plant,choo2026approximate,wang2026online}.
Unlike these works, which predict aggregate values,
normalization information, value frequencies, or maximum
item values, we predict joint value-size types,
reflecting that budget feasibility depends on value-size
correlations; see Appendix~\ref{supp:related} for further
discussion.

\medskip
\textbf{Budgeted allocation and knapsack.}
Our model shares the capacity-constrained allocation structure of
generalized assignment, online knapsack, and AdWords-style
problems~\cite{martello1990knapsack,mehta2007adwords}. Unlike these
problems, however, our objective is fairness rather than welfare or
revenue. The logarithmic frontier that we obtain in the small-item
setting is closely related to the optimal competitive ratio for online
knapsack with bounded value-to-size
ratios~\cite{zhou2008budget}, although our multi-agent fairness analysis
requires additional techniques.
%====================================================================
%====================================================================
\section{Preliminaries}
\label{sec:model}
%====================================================================

For a positive integer $z$, let $[z]=\{1,\dots,z\}$. We study an online allocation problem in which a fixed set of agents receives indivisible goods that arrive one at a time, and each arrival must be assigned irrevocably before the next good is revealed.
Formally, there is a fixed set of agents $\N=[n]$. Each agent
$i\in\N$ has a strictly positive budget $\bud_i$, which limits the
total size of the goods she may receive.
Goods arrive sequentially in discrete rounds. Let
$G=\{g_1,\dots,g_T\}$ denote the finite set of goods, where
$g_t$ is the good arriving in round $t$. By a slight abuse of
notation, we also use $G$ to refer to this arrival sequence
when the order is relevant. For each round $t\in[\horizon]$, we denote by
$G^t=\{g_1,\dots,g_t\}$ the set of goods that have arrived by the end
of round $t$. Goods arriving in different rounds are treated as
distinct, even if they have identical values and sizes.

Each agent $i\in N$ has an additive valuation
$v_i:2^G\to\mathbb R_{\ge 0}$ and an additive size function
$s_i:2^G\to\mathbb R_{\ge 0}$. For every good $g\in G$,
we assume $v_i(g)\ge 0$ and $s_i(g)>0$.
For every bundle $S\subseteq G$, $v_i(S)=\sum_{g\in S}v_i(g)$ and $s_i(S)=\sum_{g\in S}s_i(g)$.
A bundle $S$ is feasible for agent $i$ if $s_i(S)\le B_i$.
A good $g$ is individually feasible for agent $i$ if $s_i(g)\le B_i$.
An instance of our problem is thus specified by $I=(\N,G,(\bud_i,\val_i,\size_i)_{i\in\N})$.

Budget constraints may make it impossible, or undesirable, to assign
every good to an agent.  We therefore include a distinguished
recipient $\charity$, called \emph{charity}, which holds all goods
that are not assigned to agents.  Let
$\Rec=\N\cup\{\charity\}$ denote the set of recipients.

An allocation after round $t$ is a tuple
$A^t=(A_r^t)_{r\in\Rec}$ that partitions $G^t$.  It is feasible
if $\size_i(A_i^t)\le\bud_i$ for every agent $i$; charity is not
subject to a budget constraint. We write $A=A^{\horizon}$ for the
final allocation and abbreviate its components by $A_i$ and
$A_{\charity}$.

The interaction is online and irrevocable.
After observing $g_t$, the algorithm must immediately choose one
recipient $r_t\in\Rec$. Assigning $g_t$ to agent $i$ is
permitted only if $\size_i(A_i^{t-1})+\size_i(g_t)\le\bud_i$,
while assignment to charity is always permitted.  Once a good has
been assigned, its recipient cannot change. 

We distinguish between unknown and known horizons.  In the
\emph{unknown-horizon} model, the number of arrivals is not revealed
to the algorithm, and a guarantee must remain valid whenever the
finite sequence stops.  In the \emph{known-horizon} model, the total
number of rounds is announced before the first arrival.
We also use the standard distinction between adaptive and oblivious
adversaries. An \textit{adaptive adversary} may choose later arrivals after
observing the algorithm's earlier decisions, while an \textit{oblivious
adversary} fixes the entire input sequence before its execution.
Unless explicitly stated otherwise, algorithms in this
paper are \textit{deterministic}.

%--------------------------------------------------------------------
\subsection{Valuation, Size, and Density Structure}
\label{sec:structural-classes}
%--------------------------------------------------------------------

Our general model allows values and sizes to depend on both the agent
and the good.  Several of our results impose additional structure.
We collect the recurring special cases here so that later sections
can state their assumptions compactly.
Valuations are \emph{common} if there is a single function
$\val:G\to\mathbb R_{\ge 0}$ such that
$\val_i(g)=\val(g)$ for every $i\in\N$ and $g\in G$.
They are \emph{scaled common} if there is a base value function
$\val$ and positive constants $\beta_i$ satisfying
$\val_i(g)=\beta_i\val(g)$ for every agent and good.  Common
valuations are the special case in which all scales are equal.

The size assumptions require a separate distinction.  Saying that
sizes are common across agents does not mean that all goods have the
same size.

\begin{definition}[Sizes common across agents and uniform item sizes]
\label{def:common-uniform-sizes}
Sizes are \emph{common across agents} if there is a function
$s:G\to\mathbb R_{>0}$ such that $\size_i(g)=s(g)$ for every agent $i\in\N$ and good $g\in G$.
Thus, different goods may still have different sizes.
Sizes are \emph{uniform across goods} if there is a constant
$\delta>0$ such that $\size_i(g)=\delta$ for every agent $i\in\N$ and good $g\in G$.
Uniform item sizes therefore imply sizes common across agents, but
the converse need not hold.
\end{definition}
For $\sigma_{\max}\ge0$, an instance is globally
$\sigma_{\max}$-small if $s_i(g)\le \sigma_{\max}B_i$ for every $i \in N$ and $g \in G$.
When budgets are $1$ and sizes are common across agents, this
condition reduces to $s(g)\le\sigma_{\max}$ for every $g \in G$.

Whenever $\val_i(g)>0$, define the density
$\dens_i(g)=\val_i(g)/\size_i(g)$, the value agent $i$
derives per unit of budget consumed. Thus, higher-density
goods are more budget-efficient. Many of our guarantees
depend not on absolute densities but on their variation
within an instance, captured by the ratio of the largest and
smallest positive densities.

\begin{definition}[Density spread]
\label{def:density-spread}
For an agent $i$ having at least one good $g$ such that
$v_i(g)>0$ and $s_i(g)\le B_i$, define $\rho_i^{\min} := \min\{ \rho_i(g): g\in G, v_i(g)>0, s_i(g)\le B_i \}$
and $\rho_i^{\max} := \max\{ \rho_i(g): g\in G, v_i(g)>0, s_i(g)\le B_i\}$.
The density spread of agent $i$ is $\Gamma_i
:=
\frac{\rho_i^{\max}}{\rho_i^{\min}}
\ge 1$.
If no good satisfies $v_i(g)>0$ and $s_i(g)\le B_i$, set
$\Gamma_i:=1$; in this case, every set feasible for $i$ has
value zero for $i$.
The density spread of the instance is $\Gamma:=\max_{i\in N}\Gamma_i$.
\end{definition}

Only positively valued goods that are individually feasible
enter $\Gamma_i$, since other goods cannot belong to any
feasible comparison set for agent $i$.

%--------------------------------------------------------------------
\subsection{Feasible Envy-Based Fairness Notions}
\label{sec:fairness-notions}
%--------------------------------------------------------------------

Under generalized assignment constraints, another recipient's bundle
may be infeasible for the comparing agent. Following
\citet{barman2023guaranteeing} and related work on constrained fair
division~\cite{chaudhury2021little,wu2021budget,caragiannis2019envy,gan2023approximation},
we therefore compare an agent's bundle only with budget-feasible subsets
of every recipient's bundle, including charity. The strongest notion
requires the agent to weakly prefer her own bundle to every such
comparison set.

\begin{definition}[Feasible envy-freeness]
\label{def:fef}
An allocation $A$ is \emph{feasibly envy-free} (FEF) if $\val_i(A_i)\ge \val_i(S)$ for any agent $i\in\N$, recipient
$r\in\Rec$, and set $S\subseteq A_r$ feasible for $i$.
\end{definition}

Exact FEF is generally too demanding for indivisible goods under
generalized assignment constraints, as feasible envy-free allocations
need not exist~\cite{barman2023guaranteeing}. We therefore consider
relaxations that permit the comparison set to lose a bounded number of
goods. Our primary notion is a multiplicative approximation of
FEF$ k$, which generalizes feasible envy-freeness up to one good
and related notions studied in the constrained fair-division
literature~\cite{barman2023guaranteeing,wu2021budget,chaudhury2021little}.

\begin{definition}[Multiplicative $\alpha$-FEF$ k$]
\label{def:afefk}
Let $\alpha\in[0,1]$ and let $k\ge 0$ be an integer.  An allocation
$A$ is \emph{$\alpha$-FEF$ k$} if, for every agent
$i\in\N$, recipient $r\in\Rec$, and set $S\subseteq A_r$
feasible for $i$, there exists a set $X\subseteq S$ with
$|X|\le k$ such that $\val_i(A_i) \ge \alpha\,\val_i(S\setminus X)$.
When $\alpha=1$, we simply write FEF$ k$.
\end{definition}
For an online algorithm, an $\alpha$-FEF$k$ guarantee is
\emph{prefix-wise} if, after every round $t\in[T]$, the
current allocation $A^t$ of $G^t$ is $\alpha$-FEF$k$.
The most prominent case is FEF$1$, which permits the removal of at
most one good from the comparison set.  A related, stronger notion
requires the inequality to survive the removal of \emph{every} good.

\begin{definition}[FEFx]
\label{def:fefx}
An allocation $A$ is \emph{feasibly envy-free up to any good}
(FEFx) if, for every agent $i\in N$, every recipient $r\in R$,
and every strict subset $S\subsetneq A_r$ that is feasible for $i$, $v_i(A_i)\ge v_i(S)$.
\end{definition}

Multiplicative guarantees are natural when values may be rescaled.
For the prediction-based results later in the paper, it is also useful to
measure fairness loss additively.

\begin{definition}[Additive $\eta$-FEF$ k$]
\label{def:etafefk}
Let $\eta\ge 0$ and let $k\ge 0$ be an integer.  An allocation
$A$ is \emph{additive $\eta$-FEF$ k$} if, for any agent
$i\in\N$, recipient $r\in\Rec$, and set $S\subseteq A_r$
feasible for $i$, there exists $X\subseteq S$ with
$|X|\le k$ such that $\val_i(A_i) \ge \val_i(S\setminus X)-\eta$.
\end{definition}

The notions presented above satisfy the natural hierarchy inherited
from their classical counterparts. In particular,
FEF$0$ implies FEF$1$, which in turn implies
FEF$2$, and more generally FEF$ k$ implies
FEF${k'}$ whenever $k\le k'$. Likewise, FEFx implies
FEF$1$ by definition. Finally, if every feasible comparison set has
cardinality at most $k$, then every feasible allocation is
FEF$ k$, as the entire comparison set may be removed.
This explains why allowing $k$ to depend freely on
the instance would make the notion uninformative.  

\begin{remark}[$k$ must be instance-independent]
\label{remark:kappa-trivial}
For an instance $I$, let its \emph{maximum feasible comparison
cardinality} be $\kappa(I) := \max_{i\in\N} \max\{ |S|: S\subseteq G,\size_i(S)\le\bud_i\}$. Since every feasible comparison set contains at most
$\kappa(I)$ goods, every feasible allocation is trivially
FEF${\kappa(I)}$: one may simply remove the entire
comparison set. Thus, meaningful FEF$ k$ guarantees require $k$ to remain independent of the instance. 
\end{remark}

%====================================================================
\section{No Fixed $\alpha$-FEF$k$ Guarantee in General}\label{sec:lb}
%====================================================================

We first show that no deterministic algorithm can guarantee any fixed
multiplicative $\alpha$-FEF$k$ relaxation, even against an
oblivious adversary and regardless of the values of $\alpha$ and
$k$. Thus, without additional structural assumptions, even very
weak relaxations of feasible envy-freeness are unattainable online.
Perhaps surprisingly, this impossibility already holds on highly
symmetric instances. The hard instances involve only two agents with
equal budgets, common valuations, uniform item sizes, and even a known
horizon. The only source of difficulty is the interaction between
irrevocable budget consumption and charity-inclusive feasible envy.

The proof uses phases whose values grow sufficiently quickly that
each phase dominates the total value accumulated in all previous
phases. To avoid an immediate violation of $\alpha$-FEF$k$, the
algorithm must devote a positive amount of capacity to every phase.
Since capacities are finite, these requirements eventually become
incompatible. Although the proof is conveniently described through
a phase simulation, determinism allows the resulting finite hard
sequence to be fixed before its actual execution.

\begin{restatable}{theorem}{MainImpossibility}
\label{thm:main-lb}
Fix any integer $k\ge 0$ and any $\alpha\in(0,1]$.
For every deterministic online algorithm in the
\emph{irrevocable} model, there exists a fixed finite input
sequence on which the final allocation is not
$\alpha$-FEF$k$. Equivalently, no deterministic online
algorithm guarantees $\alpha$-FEF$k$ even against an
\emph{oblivious} adversary. The impossibility holds even
when simultaneously: \textup{(1)} there are only two agents, $n=2$; \textup{(2)} the budgets are equal, $\bud_1=\bud_2=1$; \textup{(3)} valuations are common; \textup{(4)} sizes are uniform across goods; \textup{(5)} the horizon $\horizon$ is known to the algorithm in advance.
\end{restatable}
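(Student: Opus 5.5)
The plan is an adaptive phase construction, padded to a known horizon. Because the algorithm is deterministic, the resulting sequence can then be fixed in advance, which gives an oblivious adversary.

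\textbf{Parameters.} Fix $k\ge 0$ and $\alpha\in(0,1]$. Take two agents with $\bud_1=\bud_2=1$, common valuations, and uniform size $\delta$, where $\delta$ is chosen so that $m:=\lfloor 1/\delta\rfloor \ge k+1$. Then each agent holds at most $m$ goods, and any $k+1$ goods form a feasible comparison set. By Remark~\ref{remark:kappa-trivial} this choice of $m$ is forced. The construction uses $m+1$ phases, each consisting of $M:=m+k+1$ goods. Every good in phase $j$ has value $V_j$, with $V_1=1$ and $V_j> mV_{j-1}/\alpha$. Any agent holds at most $m$ goods from phases $1,\dots,j-1$, so its value from those phases is at most $mV_{j-1}<\alpha V_j$. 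We announce the horizon $\horizon=(m+1)M$ up front.

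\textbf{The adversary.} We simulate the algorithm phase by phase. After phase $j$ ends, suppose some agent $i$ received no good of phase $j$. The adversary then fills all remaining rounds up to $\horizon$ with goods of value $0$ and size $\delta$; this is allowed since the model permits $\val(g)\ge 0$. These padding goods do not change anyone's value, so in the final allocation $\val(A_i)\le mV_{j-1}<\alpha V_j$.

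The other agent received at most $m$ goods of phase $j$, so charity holds at least $M-m=k+1$ of them. Let $S$ be any $k+1$ of these goods; $S$ is feasible for $i$ because $(k+1)\delta\le 1$. Removing any $X\subseteq S$ with $|X|\le k$ leaves at least one good of value $V_j$. Hence $\val(A_i)<\alpha\,\val(S\setminus X)$ for every such $X$, and the final allocation is not $\alpha$-FEF$k$.

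\textbf{Why a violation must occur.} If the padding event never occurs, then both agents receive at least one good in each of the $m+1$ phases. That means some agent holds $m+1$ goods of size $\delta$, of total size $(m+1)\delta>1$, contradicting feasibility. So the padding event, and with it the violation, happens by phase $m+1$ at the latest.

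\textbf{Fixing the sequence.} The algorithm is deterministic, so the adversary's choices, namely which phase triggers padding, are a function of the algorithm alone. The whole length-$\horizon$ sequence can therefore be computed before execution. This gives the oblivious, known-horizon statement, and all of conditions (1)–(5) hold.

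\textbf{Main obstacle.} The delicate point is the final-allocation evaluation. A naive fixed sequence of $m+1$ full phases fails: an agent that skips an early phase can still fill its budget with goods from later, far more valuable phases and escape envy. The padding with zero-value goods is exactly what blocks this. It freezes every agent's value immediately after the first phase some agent skips, while keeping the horizon fixed and known.
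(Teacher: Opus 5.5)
Your proof is correct, and it uses the same skeleton as the paper's proof. Both arguments rely on phases of identical goods whose values grow fast enough to dominate whatever an agent can hold from earlier phases, and on a stop-and-pad rule that keeps the announced horizon fixed. Both also use a capacity pigeonhole to show the stop must occur, and determinism to turn the simulated run into a fixed oblivious input.

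Where you differ is the bookkeeping. The paper compares against full-budget charity sets of $Q$ goods. It stops once some agent takes fewer than $\tau=\lfloor\alpha(Q-k)/4\rfloor$ goods of a phase. This needs $3Q$ goods per phase and $L$ phases with $L\tau>Q$. It also needs a three-way split of the stuck agent's value (current phase, earlier phases, dummy goods), with each part below $\tfrac{\alpha(Q-k)}{4}V_t$, and this forces $Q\ge k+4/\alpha$. You stop once some agent takes \emph{no} good of a phase, and you compare against only $k+1$ charity goods. All of the dependence on $\alpha$ goes into the growth condition $V_j>mV_{j-1}/\alpha$, so any $m\ge k+1$ works and the value accounting takes two lines.

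This has a concrete payoff. With $\delta=1/q$, your construction shows hardness on instances with $\kappa(I)=q$ for every integer $q>k$ and every $\alpha\in(0,1]$. That strengthens Corollary~\ref{cor:kappa}, where the paper explicitly claims only the existence of some sufficiently large $q$.

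Your zero-value padding is permitted, since the model only requires $v(g)\ge0$. The paper's small positive dummy values $d_t$ just keep every good positively valued, at the cost of an extra term in its bound; you could do the same with small positive padding values. Your scheme also adapts directly to the augmented-budget bound of Theorem~\ref{thm:aug-lb}. There you replace $m$ by the online capacity $\lfloor(1+\varepsilon)/\delta\rfloor$ in the phase length, the number of phases, and the growth rate, while keeping comparison sets of $k+1$ goods that fit the original budget.
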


The proof appears in Appendix~\ref{app:proof:thm:main-lb}.
\Cref{thm:main-lb} immediately yields two useful
specializations. Setting $\alpha=1$ shows that exact
FEF$ k$ is impossible for every fixed value of $k$.

\begin{corollary}
\label{cor:exact-fefk}
For every fixed integer
$k\ge 0$, no deterministic online algorithm guarantees exact
FEF$k$ on all instances, even against an \emph{oblivious}
adversary and even under assumptions
\textup{(1)--(5)} of \Cref{thm:main-lb}.
\end{corollary}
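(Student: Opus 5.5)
The corollary is an immediate specialization of \Cref{thm:main-lb}, so the plan is simply to instantiate that theorem with $\alpha=1$. Fix an integer $k\ge 0$ and a deterministic online algorithm. By \Cref{def:afefk}, $1$-FEF$k$ is exactly FEF$k$: for every agent $i$, recipient $r$, and feasible $S\subseteq A_r$, there is some $X\subseteq S$ with $|X|\le k$ and $\val_i(A_i)\ge \val_i(S\setminus X)$. Since $\alpha=1$ lies in the admissible range $(0,1]$ of \Cref{thm:main-lb}, that theorem supplies a fixed finite input sequence on which the algorithm's final allocation is not $1$-FEF$k$, and hence not FEF$k$.

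The remaining task is to check that the qualifiers carry over. The sequence given by \Cref{thm:main-lb} is fixed before execution, so the adversary is oblivious. The theorem also holds under conditions \textup{(1)--(5)} simultaneously, so the same instance has two agents, unit budgets, common valuations, uniform sizes, and a known horizon. Therefore no deterministic algorithm guarantees exact FEF$k$, even under all of these restrictions.

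The only point to watch is the quantifier order. The theorem fixes $k$ and $\alpha$ first and then produces a hard instance for each algorithm. This matches the corollary, which fixes $k$ and states that no algorithm succeeds on all instances. Beyond this substitution, I expect no real obstacle.
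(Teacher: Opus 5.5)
Your proposal is correct and matches the paper's argument: the paper obtains this corollary by instantiating \Cref{thm:main-lb} with $\alpha=1$, since $1$-FEF$k$ is by definition exact FEF$k$. The oblivious-adversary qualifier and restrictions \textup{(1)--(5)} carry over directly from the theorem, as you note.
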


The hard instances constructed in the proof of
\Cref{thm:main-lb} have an additional structural
property. Since every good has size $1/Q$ and every budget
is $1$, every feasible comparison set contains at most
$Q$ goods. Consequently, the same construction yields an
impossibility result parameterized by the maximum feasible
comparison cardinality. We stress that the result establishes
the existence of a sufficiently large value $q>k$; it does
\emph{not} claim that hardness holds for every $q>k$. The proof appears in Appendix~\ref{app:proof:cor:kappa}.

\begin{restatable}{corollary}{CardinalityImpossibility}
\label{cor:kappa}
For every fixed integer
$k\ge 0$, there exists an integer $q>k$ such that no
deterministic online algorithm guarantees exact FEF$k$ on all
instances satisfying $\kappa(I)=q$ (Remark~\ref{remark:kappa-trivial}), even against an
\emph{oblivious} adversary and even under assumptions
\textup{(1)--(5)} of \Cref{thm:main-lb}.
\end{restatable}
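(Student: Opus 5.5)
The plan is to reuse the hard instance family from the proof of \Cref{thm:main-lb} with $\alpha=1$ (as in \Cref{cor:exact-fefk}). The only additional work is to read off that the instance has $\kappa(I)=Q$, and to fix $Q$ as a function of $k$ alone, independently of the algorithm.

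In that construction there are two agents with budgets $B_1=B_2=1$, common valuations, and every good has size $1/Q$ for an integer $Q$. This $Q$ is chosen at the start as a function of $k$ and $\alpha$; here $\alpha=1$, so $Q=Q(k)$. The phase argument then produces, for any deterministic algorithm, a finite sequence whose final allocation violates FEF$k$.

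The first step is to record how $Q$ is used. Each phase must receive at least one slot of capacity $1/Q$ from some agent to avoid a violation, and the phase values grow geometrically. Each agent has only $Q$ slots, and the phases number more than $2Q$, or roughly $2Q$ plus a term depending on $k$. Hence $Q$ can be taken as any sufficiently large integer exceeding $k$, since a comparison set with at most $k$ goods must not be trivially removable. Set $q:=Q$; I expect $q=k+1$, or some explicit value such as $2k+2$, to suffice.

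The second step is to compute $\kappa(I)$. Since all sizes equal $1/Q$ and budgets are $1$, a set $S$ is feasible for agent $i$ exactly when $|S|\le Q$. Therefore $\max\{|S|: s_i(S)\le 1\}=\min(Q,|G|)$. The hard sequence contains at least $Q$ goods, since it must exhaust some agent's budget, and if needed we can append zero-value goods of size $1/Q$, which change no envy comparisons. So $\kappa(I)=Q=q$ exactly.

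Finally, $q$ depends only on $k$ and not on the algorithm. Consequently, for this fixed $q$, every deterministic algorithm fails FEF$k$ on some instance with $\kappa(I)=q$ that satisfies (1)--(5). The oblivious-adversary claim is inherited from \Cref{thm:main-lb}: determinism lets us simulate the algorithm and fix the sequence in advance.

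The main obstacle is confirming that the proof of \Cref{thm:main-lb} indeed fixes $Q$ before seeing the algorithm, rather than choosing it adaptively. If it does not, I would restructure the argument so that $Q=Q(k)$ is fixed and only the phase count and the phase values adapt.
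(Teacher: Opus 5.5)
Your proposal is correct and is essentially the paper's proof. You take $q:=Q$ from the proof of \Cref{thm:main-lb} with $\alpha=1$, and observe that sizes $1/Q$ with unit budgets make a set feasible iff $|S|\le Q$, so $\kappa(I)=Q$. Your quantifier concern is resolved: there $Q$ depends only on $k$ and $\alpha$, and only the stopping phase $t^\star$ depends on the algorithm.

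Some side remarks in your sketch are inaccurate, though none is load-bearing. The construction requires each agent to take at least $\tau=\lfloor (Q-k)/4\rfloor\ge 1$ goods from every phase, so it needs $Q\ge k+4$; it does not support $q=k+1$. Also, $|G|\ge Q$ holds simply because the sequence has $3QL$ goods, not because some budget is exhausted, so no padding is needed.
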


\Cref{thm:main-lb} is not implied by existing impossibility
results for online EF1 by \citet{neoh2026online}, which assume that every good is
allocated to an agent and measure envy with respect to
entire bundles. Our lower bound instead relies on the
interaction between budget constraints and charity-inclusive
feasible envy: accepting goods irrevocably consumes
capacity, while goods sent to charity remain valid objects
of comparison. The hardness therefore arises from
capacity-limited feasible comparisons rather than from
heterogeneous preferences or unknown horizons.

The construction of \Cref{thm:main-lb} has unbounded density
spread. Thus, the theorem establishes impossibility in the
absence of additional structure, but does not determine the
best guarantee under bounded density spread. Appendix~\ref{sec:density}
addresses this question by identifying the optimal limiting
frontier in the globally small common-valuation setting
through a separate lower-bound construction.
%====================================================================
\section{Greedy Algorithms under Bounded Density Spread}
\label{sec:greedy}

We now identify bounded density spread as a structural condition under
which meaningful online fairness guarantees become possible. For
arbitrary item sizes, we derive simple capacity-aware greedy algorithms
that achieve multiplicative guarantees under common, scaled-common, and
fully heterogeneous valuations. Appendix~\ref{sec:density} 
strengthens these results in the globally-small common-valuation model,
where a different algorithm yields a sharp deterministic frontier.

\subsection{Common Valuations}
\label{sec:Common valuations}

We begin with the common-valuation setting, where every agent agrees on
the value of each good. In this setting, we introduce
\textsc{Positive-GreedyFit} (\Cref{alg:pgf}), a deterministic online greedy algorithm that
forms the basis of all positive results in this section. The algorithm balances two
competing objectives. First, it never allocates a good of
zero value to an agent (line \ref{state:charity}), thereby preserving scarce budget capacity for
future arrivals. Second, whenever a positive-value good is
feasible for multiple agents, it assigns the good to an agent whose
current bundle has minimum value (lines \ref{state:feasible agents}--\ref{state:min}), thus balancing the accumulated
value across agents.  The algorithm is deterministic, processes each arrival in
$\mathcal{O}(n)$ time, and maintains feasibility by construction,
since goods are assigned only to agents with sufficient
remaining budget.

\begin{algorithm}[t]
\caption{\textsc{Positive-GreedyFit} (common valuations)}
\label{alg:pgf}
\begin{algorithmic}[1]
\State Initialize $A_i\gets\varnothing$ for all $i\in\N$ and
      $A_{\charity}\gets\varnothing$
\For{each arriving good $g$}
    \If{$\val(g)=0$} {assign $g$ to charity \label{state:charity}}
    \Else
        { Let $
        F_g
        \gets
        \{
        i\in\N :
        \size_i(A_i)+\size_i(g)\le\bud_i
        \}
        $ be the agents that can accommodate good $g$.\label{state:feasible agents}}
        \If{$F_g=\varnothing$} {assign $g$ to charity.}
        \Else{ assign $g$ to an arbitrary
            $i^*\in\argmin_{h\in F_g}\val(A_h)$,
            using a fixed tie-breaking order. \label{state:min}}
        \EndIf
    \EndIf
\EndFor
\State \Return
$
A=(A_1,\dots,A_n,A_{\charity})
$
\end{algorithmic}
\end{algorithm}

The following theorem shows that bounded density spread is sufficient to
recover a multiplicative FEF$1$ guarantee. Moreover, the guarantee
depends only on the density spread.

\begin{restatable}{theorem}{DensitySpreadGuarantee}
\label{thm:greedy-common}
\textsc{Positive-GreedyFit} guarantees
$\spread^{-1}$-FEF$1$ under common valuations. In particular, if
$\spread=1$, then the algorithm returns an exact
FEF$1$ allocation.
\end{restatable}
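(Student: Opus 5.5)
The plan is to fix an agent $i$, a recipient $r\in\Rec$, and a set $S\subseteq A_r$ with $\size_i(S)\le\bud_i$, and to exhibit one good whose removal leaves a set worth at most $\spread\,\val(A_i)$. Zero-valued goods contribute nothing to either side, so let $S^+=\{g\in S:\val(g)>0\}$. If $S^+=\varnothing$ the claim is trivial. Otherwise we remove $g^*$, the \emph{latest-arriving} good of $S^+$; say it arrives in round $t$. Every good of $S^+\setminus\{g^*\}$ arrived before round $t$ and already lay in $A_r^{t-1}$. We then split on whether $i$ could accommodate $g^*$ when it arrived, i.e., whether $i\in F_{g^*}$.

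\emph{Case 1: $i\notin F_{g^*}$.} This covers every $r=\charity$: since $\val(g^*)>0$, the algorithm sends $g^*$ to charity only when $F_{g^*}=\varnothing$. In this case $\size_i(A_i^{t-1})>\bud_i-\size_i(g^*)\ge \size_i(S)-\size_i(g^*)\ge \size_i(S^+\setminus\{g^*\})$. Next use densities, which is where $\spread$ enters. Every good in $A_i$ has positive value (line~\ref{state:charity}) and fits in $\bud_i$, so it is counted in $\densdown_i$; hence $\val(A_i)\ge\val(A_i^{t-1})\ge\densdown_i\,\size_i(A_i^{t-1})$. Every good of $S^+$ is positively valued and individually feasible for $i$, since $S$ is feasible; hence $\val(S^+\setminus\{g^*\})\le\densup_i\,\size_i(S^+\setminus\{g^*\})$. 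Chaining these gives $\val(A_i)\ge \spread_i^{-1}\val(S\setminus\{g^*\})\ge\spread^{-1}\val(S\setminus\{g^*\})$.

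\emph{Case 2: $i\in F_{g^*}$.} Then $g^*$ went to some agent $j=r$ in $\argmin_{h\in F_{g^*}}\val(A_h^{t-1})$, so $\val(A_j^{t-1})\le\val(A_i^{t-1})$. Because $S^+\setminus\{g^*\}\subseteq A_j^{t-1}$ and values are monotone over time, $\val(S\setminus\{g^*\})\le\val(A_j^{t-1})\le\val(A_i^{t-1})\le\val(A_i)$. This is even an exact FEF$1$ bound. If $j=i$, there is nothing to prove.

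The step needing the most care is Case 1, specifically checking that the density comparison is legitimate. Both $A_i$ and $S^+$ must consist only of goods that enter the definition of $\densdown_i,\densup_i$. This relies on line~\ref{state:charity} (no zero-valued good ever consumes an agent's budget) and on choosing $g^*$ as the latest positive good, so that the capacity inequality compares against exactly $S^+\setminus\{g^*\}$. Choosing $g^*$ as latest is also what makes Case 2 work. When $\spread=1$ both cases give $\val(A_i)\ge\val(S\setminus\{g^*\})$, which is exact FEF$1$. Since the argument only uses the state at round $t$, it in fact gives the guarantee prefix-wise.
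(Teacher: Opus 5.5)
Your proposal is correct and follows essentially the same argument as the paper's proof. You remove the last-arriving positively valued good of $S$, apply the minimum-value rule when agent $i$ could still accommodate it, and otherwise turn the capacity inequality into the factor $\Gamma_i^{-1}$ via $\rho_i^{\min}$ and $\rho_i^{\max}$ (your added remarks that charity comparisons always fall into the infeasible case and that the bound holds prefix-wise are both correct).
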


\emph{Proof sketch.}
In Appendix~\ref{app:proof:thm:greedy-common}, we fix a feasible comparison set and remove its last-arriving
positively valued good $f$. If $f$ was feasible for the
comparing agent when it arrived, the minimum-value rule
bounds the remaining comparison value. Otherwise,
infeasibility yields a size inequality, which translates to
the factor $\Gamma_i^{-1}$ via
$\rho_i^{\min}$ and $\rho_i^{\max}$.

\Cref{thm:greedy-common} is consistent with the
impossibility result of \Cref{thm:main-lb}. In the
family of instances used in the proof of
\Cref{thm:main-lb}, all goods have the same size but the phase
values increase rapidly, so the density spread of
that family can be arbitrarily large and the factor
$\Gamma^{-1}$ can be arbitrarily small. Thus, \Cref{thm:greedy-common} gives a nonzero instance-independent factor only when $\Gamma$ is
bounded in advance.

\textbf{Scaled-common valuations.}
The analysis of \textsc{Positive-GreedyFit} extends directly to
scaled-common valuations, where
$\val_i(g)=\beta_i\val(g)$ for a common base valuation $\val$.
Running the algorithm on $\val$ preserves both the ordering of goods and
the density spread, so the $\Gamma^{-1}$-FEF1 guarantee of
\Cref{thm:greedy-common} continues to hold. The formal statement and proof are
given in Appendix~\ref{app:proof:thm:scaled}.

\subsection{Fully Heterogeneous Valuations}
\label{sec:Fully heterogeneous valuations}

We now allow agent-specific additive valuations. The proof from
the common-valuation case cannot be used directly because agent
$i$ and the recipient of a good may assign different values to
that good.
The algorithm below compares the numerical quantities $v_i(A_i)$
across agents. We therefore assume in this subsection that the
reported numerical values are directly comparable across agents.
Rescaling only one agent's valuation may change both the
allocation produced by the algorithm and the bound below.

The algorithm assigns a good only to feasible agents that value it
positively. To compare valuations across agents, define
$\chi_i:=\max(\{1\}\cup\{v_i(g)/v_j(g):v_j(g)>0\})$,
which is a \emph{cross-comparability} bound that measures
how much larger agent $i$'s value for a good can be than
that assigned by any recipient who values it positively.
Now, define
$\Lambda_i:=\max\{\Gamma_i,\chi_i\}$ and
$\Lambda:=\max_{i\in\N}\Lambda_i$, where $\Gamma_i$
captures the loss due to capacity constraints and $\chi_i$
the additional loss from heterogeneous valuations. The
analysis incurs the former when the last relevant good is
infeasible for agent $i$, and the latter otherwise, yielding
a $\Lambda_i^{-1}$ guarantee.
\textsc{Hetero-GreedyFit} differs from
\textsc{Positive-GreedyFit} only by restricting eligible
agents to those assigning positive value to the arriving
good; see Appendix~\ref{app:hgf}. The following theorem
establishes the corresponding $\Lambda^{-1}$-FEF$1$
guarantee.

\begin{restatable}{theorem}{HeterogeneousGuarantee}
\label{thm:hetero}
\textsc{Hetero-GreedyFit}
guarantees $\Lambda^{-1}$-FEF$1$ under agent-specific additive valuations whose numerical values
are directly comparable across agents. In particular, if both
the density spread and the cross-comparability are bounded,
i.e., $\Lambda\le\lambda$ for some constant
$\lambda$, the algorithm guarantees
$\lambda^{-1}$-FEF$1$.
\end{restatable}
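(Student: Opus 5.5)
We fix an agent $i$, a recipient $r\in\Rec$ and a set $S\subseteq A_r$ that is feasible for $i$. We then remove from $S$ a single good, namely the last-arriving good $f\in S$ with $\val_i(f)>0$. If no such good exists, then $\val_i(S)=0$ and there is nothing to prove; if $r=i$, then $S\subseteq A_i$ and the bound is trivial. Let $t$ be the round in which $f$ arrived. Every good of $S\setminus\{f\}$ that $i$ values positively arrived before $t$ and lies in $A_r^{t-1}$. Since $\val_i(A_i^{t-1})\le\val_i(A_i)$, it suffices to show $\val_i(S\setminus\{f\})\le\Lambda_i\,\val_i(A_i^{t-1})$. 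The proof splits on whether $i$ was eligible for $f$ at round $t$, mirroring the proof of \Cref{thm:greedy-common}.

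\textbf{Case 1: $f$ fit into $i$'s remaining budget at round $t$.} Since $\val_i(f)>0$, agent $i$ was eligible, so the eligible set was nonempty. Hence $f$ did not go to charity, and $r=j$ for some agent $j\neq i$. The selection rule then gives $\val_j(A_j^{t-1})\le\val_i(A_i^{t-1})$.
\begin{itemize}
\item By the eligibility rule, every good in $A_j$ satisfies $\val_j(g)>0$.
\item So for every $g\in S\setminus\{f\}$ with $\val_i(g)>0$, the definition of $\chi_i$ gives $\val_i(g)\le\chi_i\val_j(g)$.
\end{itemize}
Summing over these goods and using that they lie in $A_j^{t-1}$,
\[
\val_i(S\setminus\{f\})\le\chi_i\,\val_j(S\cap A_j^{t-1})\le\chi_i\,\val_j(A_j^{t-1})\le\chi_i\,\val_i(A_i^{t-1}).
\]

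\textbf{Case 2: $f$ did not fit into $i$'s remaining budget at round $t$.} Then $\size_i(A_i^{t-1})+\size_i(f)>\bud_i$. Feasibility of $S$ for $i$ gives $\size_i(S\setminus\{f\})\le\bud_i-\size_i(f)<\size_i(A_i^{t-1})$. Now bound the two sides by densities:
\begin{itemize}
\item Every positively valued good of $S$ is individually feasible for $i$, since $S$ is. So each such good has density at most $\densup_i$, and $\val_i(S\setminus\{f\})\le\densup_i\,\size_i(S\setminus\{f\})$.
\item By the eligibility rule, every good in $A_i$ satisfies $\val_i>0$, and it is individually feasible because $A_i$ is feasible. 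So each has density at least $\densdown_i$, and $\val_i(A_i^{t-1})\ge\densdown_i\,\size_i(A_i^{t-1})$.
\end{itemize}
Chaining these,
\[
\val_i(S\setminus\{f\})\le\densup_i\,\size_i(S\setminus\{f\})<\densup_i\,\size_i(A_i^{t-1})\le\spread_i\,\val_i(A_i^{t-1}).
\]

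Combining the two cases yields $\val_i(S\setminus\{f\})\le\max\{\chi_i,\spread_i\}\,\val_i(A_i)=\Lambda_i\,\val_i(A_i)\le\Lambda\,\val_i(A_i)$. This is the claimed $\Lambda^{-1}$-FEF$1$ guarantee, and the bounded case $\Lambda\le\lambda$ follows immediately.

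The main point needing care is the lower bound on $\val_i(A_i^{t-1})$ in Case 2. It relies on the modification in \textsc{Hetero-GreedyFit}: agents receive only goods they value positively. Without that rule, $A_i$ could contain zero-valued goods that consume budget but are not controlled by $\densdown_i$, and the size-to-value translation would fail. Likewise, Case 1 needs $\val_j>0$ on all of $A_j$ for the ratio $\chi_i$ to apply, and it uses the direct numerical comparison $\val_j(A_j^{t-1})\le\val_i(A_i^{t-1})$ made by the algorithm, which is where cross-agent comparability of values is assumed.
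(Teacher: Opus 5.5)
Your proof is correct and follows essentially the same route as the paper: you remove the last-arriving good $f\in S$ with $v_i(f)>0$ and split on whether $f$ fit into $i$'s remaining budget at arrival, using the min-value rule together with $\chi_i$ in the first case and the density sandwich through $\rho_i^{\min}$ and $\rho_i^{\max}$ in the second. The only differences are presentational: you treat $r=i$ separately, whereas the paper absorbs it via $\chi_i\ge 1$, and you state explicitly that goods in $A_i$ are individually feasible, a point the paper leaves implicit.
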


The proof is similar to that of \Cref{thm:greedy-common} and thus deferred to Appendix~\ref{app:proof:thm:hetero}. The
guarantee of \Cref{thm:hetero} is meaningful when $\chi_i$ is bounded.
If the base valuation is not identically zero, scaled-common
valuations satisfy
$v_i(g)/v_j(g)=\beta_i/\beta_j$ on every positively valued good, so
$\chi_i=\max_j\beta_i/\beta_j$; otherwise, all fairness comparisons are
trivial. More generally, any valuation class with uniformly bounded
$v_i(g)/v_j(g)$ for positively valued goods yields a constant-factor
FEF$1$ guarantee. On scaled-common valuations, however,
\Cref{thm:scaled} remains sharper, as it exploits the common base
ranking and avoids the additional factor $\chi_i$. All three results,
\Cref{thm:greedy-common,thm:hetero,thm:scaled}, allow arbitrary good
sizes. Appendix~\ref{sec:sharp-frontier} strengthens the common-valuation
analysis for common sizes, unit budgets, and goods whose maximum size
tends to zero by determining the exact limiting factor, while
Appendix~\ref{sec:aux-density} gives a separate known-horizon lower
bound for arbitrary good sizes.

%======================================================================
\section{Resource Augmentation for Arbitrary Sizes}\label{sec:aug}
%======================================================================

The previous section showed that bounded density spread restores
meaningful multiplicative guarantees, but the approximation factor
deteriorates as the density spread increases. Moreover,
\Cref{thm:main-lb} rules out any fixed feasible-envy relaxation in
complete generality. A standard way to overcome such online
limitations is \emph{resource augmentation}
\citep{kalyanasundaram2000speed,roughgarden2021beyond}, in which the
algorithm is given more resources than the benchmark against which it
is evaluated. Since the difficulty in our setting stems from
irrevocable budget consumption, we augment the agents' online budgets.
Fairness is still evaluated with respect to the original budgets, so
only the algorithm's capacity constraints are relaxed. The results
below apply to arbitrary item sizes. Appendix~\ref{sec:density} shows
that under the additional globally-small assumption, resource
augmentation admits a substantially sharper characterization.

Formally, fix an augmentation parameter $\eps\ge0$. Each agent $i$
has original budget $\bud_i$, while the online algorithm
allocates using the enlarged \emph{algorithmic budget}
$(1+\eps)\bud_i$. Fairness, however, is still evaluated
with respect to the original budgets. We thus call an
allocation $A=(A_1,\dots,A_n,A_{\charity})$
\emph{bicriteria-feasible} if
$\size_i(A_i)\le(1+\eps)\bud_i$ for each agent $i$.
Throughout this section, every $\alpha$-FEF$k$ guarantee is
interpreted using the original budgets, i.e., any comparison
set must satisfy $\size_i(S)\le\bud_i$. Hence, resource
augmentation relaxes only the algorithm's capacity
constraints, not the fairness benchmark. When $\eps=0$, we
recover the original model.

We introduce
\textsc{Aug-GreedyFit}$(\eps)$, which modifies
\textsc{Positive-GreedyFit} by replacing each budget $\bud_i$
with $(1+\eps)\bud_i$. A good is
eligible for agent $i$ only if it is individually feasible
under the original budget $\bud_i$ and its assignment keeps
the total allocated size within $(1+\eps)\bud_i$. The first
condition excludes goods that can never appear in a feasible
comparison set.
See Appendix~\ref{app:aug-greedyfit} for a
pseudocode.

%----------------------------------------------------------------------
Resource augmentation enlarges the algorithm's capacity,
improving the achievable fairness guarantee. The following
result quantifies this tradeoff, interpolating between the
$\spread^{-1}$-FEF$1$ guarantee of
\Cref{thm:greedy-common} when $\eps=0$ and exact FEF$1$.

\begin{restatable}{theorem}{AugmentationUpperBound}
\label{thm:aug-ub}
Under common valuations, for every $\varepsilon\ge0$,
\textsc{Aug-GreedyFit} returns an allocation $A$
satisfying $s_i(A_i)\le(1+\varepsilon)B_i$ for every $i \in N$,
and $A$ is $\min\left\{
1,\frac{1+\varepsilon}{\Gamma}
\right\}\text{-FEF1}$ when fairness is evaluated using the original budgets. In particular, if $\varepsilon\ge\Gamma-1$, then $A$ is exact FEF1.
\end{restatable}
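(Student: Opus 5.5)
Bicriteria feasibility holds by construction: \textsc{Aug-GreedyFit} assigns a good to $i$ only if $\size_i(A_i)+\size_i(g)\le(1+\eps)\bud_i$, and bundles only grow through such assignments. For fairness, we follow the template of the proof of \Cref{thm:greedy-common}. Fix an agent $i$, a recipient $r$, and a set $S\subseteq A_r$ with $\size_i(S)\le\bud_i$. Discard zero-value goods of $S$, which affect neither side. If nothing remains we are done. Otherwise let $f$ be the last-arriving positively valued good in $S$ and take $X=\{f\}$. It suffices to show $\val(A_i)\ge\min\{1,(1+\eps)/\spread\}\,\val(S\setminus\{f\})$. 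Since $\size_i(f)\le\size_i(S)\le\bud_i$, the good $f$ passes the individual-feasibility test in the eligibility rule. All goods of $S\setminus\{f\}$ arrived before $f$.

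\emph{Case 1: $f$ was eligible for $i$ when it arrived}, i.e.\ $i\in F_f$. Then $f$ went to an agent of minimum current value among eligible agents, so $r\ne\charity$ and $r\ne i$ (if $r=i$ the bound is trivial). Let $t$ be the arrival time of $f$. Then $\val(A_i)\ge\val(A_i^{t-1})\ge\val(A_r^{t-1})\ge\val(S\setminus\{f\})$, using monotonicity of bundles and $S\setminus\{f\}\subseteq A_r^{t-1}$. This gives exact FEF1 in this case.

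\emph{Case 2: $f$ was not eligible for $i$.} Since $f$ is individually feasible under $\bud_i$, the only failure is capacity: $\size_i(A_i^{t-1})+\size_i(f)>(1+\eps)\bud_i$. Because $\size_i(S)\le\bud_i$,
\[
\size_i(A_i^{t-1})>(1+\eps)\bud_i-\size_i(f)\ge(1+\eps)\bud_i-\bud_i+\size_i(S\setminus\{f\})\ge (1+\eps)\,\size_i(S\setminus\{f\}),
\]
where the last step needs $\eps\bud_i\ge\eps\,\size_i(S\setminus\{f\})$, which holds since $\size_i(S\setminus\{f\})\le\bud_i$.

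Every good in $A_i$ has positive value and is individually feasible for $i$, so it has density at least $\densdown_i$. Every good in $S\setminus\{f\}$ has density at most $\densup_i$. Hence
\[
\val(A_i)\ge\densdown_i\,\size_i(A_i^{t-1})\ge(1+\eps)\densdown_i\,\size_i(S\setminus\{f\})\ge\frac{1+\eps}{\spread_i}\,\val(S\setminus\{f\}).
\]
Combining the two cases gives the factor $\min\{1,(1+\eps)/\spread\}$. When $\eps\ge\spread-1$ this factor is $1$, so $A$ is exact FEF1.

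The main obstacle is the size inequality in Case 2. Removing $f$ must leave a set whose size is at most a $1/(1+\eps)$ fraction of the used capacity. A naive bound only gives $\size_i(A_i)>\eps\bud_i$, which loses the additive $+1$. The step above recovers it by writing $(1+\eps)\bud_i-\size_i(f)\ge\eps\bud_i+\size_i(S\setminus\{f\})$. A second point to check is the zero-value goods in $A_i$: the algorithm never assigns them to agents, so the density lower bound applies to every good in $A_i$.
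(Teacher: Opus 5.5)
Your proof is correct and follows essentially the same route as the paper's: you pick the same last-arriving positively valued good $f$, use the same minimum-value argument when $f$ was eligible for agent $i$, and make the same density comparison when only the augmented-capacity test failed. The only difference is cosmetic and lies in Case 2. The paper bounds the ratio $\frac{(1+\varepsilon)B_i-s_i(f)}{B_i-s_i(f)}\ge 1+\varepsilon$, after first disposing of the case $v(S\setminus\{f\})=0$ so that $s_i(f)<B_i$. You instead derive $s_i(A_i^{t-1})>(1+\varepsilon)\,s_i(S\setminus\{f\})$ directly, which avoids that division and the extra subcase.
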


The proof appears in Appendix~\ref{app:proof:thm:aug-ub}.
The next theorem shows that no fixed finite extra budget overcomes the
unrestricted-density lower bound. This is stronger than ruling out only
small values of $\varepsilon$.

\begin{restatable}{theorem}{AugmentationLowerBound}
\label{thm:aug-lb}
Fix an integer $k\ge0$, a factor $\alpha\in(0,1]$, and any
fixed $\varepsilon\ge0$. Every deterministic online algorithm
maintaining \(s_i(A_i^t)\le(1+\varepsilon)B_i\) for every agent
\(i\in N\) and every prefix \(t\) fails to guarantee $\alpha$-FEF$k$ on some fixed
finite input sequence, even against an oblivious adversary.
This holds with two agents, equal unit budgets, common
valuations, uniform good sizes, a known horizon, and fairness
evaluated using the original budgets.
\end{restatable}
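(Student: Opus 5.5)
We would adapt the phase construction behind \Cref{thm:main-lb}, with the phase count chosen to exceed the augmented capacity. Take $n=2$, $\bud_1=\bud_2=1$, common valuations, and all goods of size $1/Q$ for an integer $Q>k$. Under augmentation, each agent can hold at most $M:=\lfloor(1+\eps)Q\rfloor$ goods, a finite number independent of the arrival sequence. We fix the horizon in advance as $\horizon=(M+1)Q$ and announce it. The sequence consists of up to $M+1$ phases. Phase $j$ consists of $Q$ goods of value $V_j:=(2Q/\alpha)^j$. We choose this growth so that everything before phase $j$ is negligible:
\[
\sum_{l<j}QV_l<\alpha V_j\le\alpha(Q-k)V_j .
\]

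\textbf{Key step: every agent must receive at least one good in every phase.} Suppose that at the end of phase $j$ some agent $i$ holds no phase-$j$ good. Then all $Q$ phase-$j$ goods lie in $A_{3-i}\cup A_{\charity}$. We then check two cases.
\begin{itemize}
\item If they all sit in one recipient $r$, take $S$ to be these $Q$ goods. Then $\size_i(S)=1=\bud_i$, so $S$ is feasible for $i$ under the \emph{original} budget. After removing any $k$ of them, $\val(S\setminus X)\ge(Q-k)V_j$. Meanwhile $\val(A_i)\le\sum_{l<j}QV_l<\alpha(Q-k)V_j$, so this is a violation of $\alpha$-FEF$k$.
\item If they are split between the other agent and charity, the larger part has at least $\lceil Q/2\rceil$ goods. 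To make this case go through, we instead take $Q>2k$ (say $Q\ge 2k+2$) and define $V_j:=(4Q/\alpha)^j$. The same computation then gives a violation with comparison set equal to that larger part.
\end{itemize}
Hence an algorithm that avoids a violation at the end of every phase gives each agent at least one good per phase. Over $M+1$ phases, each agent would then hold at least $M+1>M$ goods, which exceeds the augmented capacity. So some phase $j^\*\le M+1$ necessarily produces a violation.

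\textbf{Obliviousness and the known horizon.} Because the algorithm is deterministic, the adversary can simulate it offline on the phase sequence and locate the first phase $j^\*$ at which a violation appears. The fixed input is then phases $1,\dots,j^\*$ followed by zero-valued padding goods of size $1/Q$, bringing the length up to the announced $\horizon$. Padding goods change no agent's value, and they never remove the offending goods from the comparison recipient. Since values only matter through sums, the violating pair (agent $i$, set $S$) persists in the final allocation. This gives a fixed finite sequence on which the final allocation is not $\alpha$-FEF$k$, under conditions (1)--(5) and with fairness evaluated against the original budgets.

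\textbf{Main obstacle.} The step needing the most care is the interaction between the known horizon and the final-allocation requirement. We must ensure that an algorithm cannot exploit knowledge of $\horizon$ to ``repair'' an earlier violation later on. Padding with zero-value goods is what rules this out: later goods then add nothing to $\val(A_i)$. A secondary check is that the splitting case is handled by taking $Q>2k$ together with the stronger growth rate for $V_j$, as described above.
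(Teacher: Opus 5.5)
Your argument is correct. It shares the paper's skeleton (geometrically growing phase values, a capacity-counting contradiction, and a deterministic simulation that fixes the input in advance), but the key step is organized differently.

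\textbf{The paper's route.} Each phase contains $2M+Q$ goods, so at least $Q$ phase goods are always forced into charity. Each agent must then take at least $\tau=\lfloor\alpha(Q-k)/4\rfloor$ goods per phase, and $L$ phases with $L\tau>M$ exhaust capacity. The run is padded with dummy goods of positive value $d_t$. The violation is always against charity, and the bound on the failing agent's value goes through the total capacity $M$.

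\textbf{Your route.} You use phases of exactly $Q$ goods, so the bound $\val(A_i)\le\sum_{l<j}QV_l$ comes from the phase sizes rather than from capacity. You need only one good per agent per phase, hence $M+1$ phases and horizon $(M+1)Q$. You find the witness by pigeonhole on the two remaining recipients, so the violated comparison may be against the other agent rather than charity. Zero-valued padding then freezes the violation.

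\textbf{What each buys.} Your version is shorter and shows that the obstruction does not depend on overflowing charity within each phase. The paper's version reuses the template of Theorem~\ref{thm:main-lb} verbatim and keeps the witness in charity, which matches its emphasis on charity-inclusive envy.

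\textbf{Two simplifications.} The case split is unnecessary: the larger of the two shares always has at least $\lceil Q/2\rceil$ goods, which covers both cases. The faster growth $(4Q/\alpha)^j$ is also not needed. Once $Q>2k$, the original choice $V_j=(2Q/\alpha)^j$ already gives $\val(A_i)<\alpha V_j\le\alpha(\lceil Q/2\rceil-k)V_j$.
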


The proof appears in Appendix~\ref{app:proof:thm:aug-lb}.
Theorem~\ref{thm:aug-lb} shows that no finite augmentation
$\varepsilon$ yields a positive instance-independent guarantee
without bounding the density spread. Under $\Gamma\le\gamma$,
however, Theorem~\ref{thm:aug-ub} achieves exact FEF1 whenever
$\varepsilon\ge\gamma-1$.
Appendix~\ref{sec:density} further sharpens this frontier under
common valuations, common sizes, and unit budgets: as the maximum
item size tends to zero, the optimal limiting guarantee is
$\min\{1,\frac{1+\varepsilon}{1+\ln\gamma}\}$ in both the known- and unknown-horizon models. Thus, every
$\varepsilon>\ln\gamma$ suffices for exact FEF1 once goods are
sufficiently small. Appendix~\ref{app:counter} shows that this
stronger guarantee relies on the threshold policy, since
\textsc{Aug-GreedyFit} is not exact for density spread exceeding
$1$, even with arbitrarily small goods.

\section{Learning-Augmented Online Algorithms}
\label{sec:la}
%====================================================================
We next overcome the limitations of the online model using
predictions about future arrivals. Our algorithm reserves
capacity for a planned offline allocation while remaining
robust to prediction errors. Unlike preceding results, its guarantees are independent
of the density spread and instead degrade gracefully with
prediction error. 

\subsection{The Type-Count Prediction Model}

Goods with identical values but different
sizes may play different roles in feasible allocations. We therefore
predict their joint value-size profiles. Formally, fix a finite type
set $\mathcal T$ containing every predicted and realized type. A
\emph{type} is a value-size profile
$\theta=((v_{i,\theta})_{i\in N},(s_{i,\theta})_{i\in N})
\in\mathbb R_{\ge0}^n\times\mathbb R_{>0}^n$.
Under common valuations, $v_{i,\theta}=v_\theta$ for every $i\in N$.
The predictor supplies type counts
$\widehat f=(\widehat f_\theta)_{\theta\in\mathcal T}$, while
$f=(f_\theta)_{\theta\in\mathcal T}$ denotes the realized counts.
An arriving good has type $\theta$ iff its values and sizes match
$\theta$. The predicted multiset $\widehat M$ contains
$\widehat f_\theta$ copies of each type $\theta$. An offline planner
computes a feasible allocation
$\widehat A=(\widehat A_r)_{r\in\Rec}$ with quotas
$q_{r,\theta}$ satisfying
$\sum_{r\in\Rec}q_{r,\theta}=\widehat f_\theta$.

We measure prediction quality by the total value of goods whose type counts are mispredicted. The value-weighted prediction error for agent $i$ and the overall error are $\Delta_i(f,\widehat f)=\sum_{\theta\in\mathcal T}\val_{i,\theta}\,|f_\theta-\widehat f_\theta|$ and $\Delta_\infty(f,\widehat f)=\max_{i\in\N}\Delta_i(f,\widehat f)$. Under common valuations,
$\Delta(f,\widehat f)=
\sum_\theta v_\theta|f_\theta-\widehat f_\theta|$.

Given the predicted multiset, \textsc{Plan-Reserve-Fulfill}
(\Cref{alg:prf}) computes a target feasible allocation.
Online, it realizes this plan whenever possible by reserving
capacity for planned assignments, greedily allocating excess
goods whenever doing so preserves the reserved capacity, and
otherwise sending them to charity. A
realized type-$\theta$ good is \emph{excess} if it arrives
after all planned type-$\theta$ quotas have been filled.
For each recipient $r$ and type $\theta$, the algorithm
maintains the remaining planned quota
$q^{\mathrm{unf}}_{r,\theta}$, initialized to
$q_{r,\theta}$. For each agent $i$, it maintains the
\textit{reserved size}
$\mathrm{res}_i=\sum_{\theta\in\mathcal T}
q^{\mathrm{unf}}_{i,\theta}s_{i,\theta}$, the total size of planned goods assigned to $i$ whose quotas
remain unfilled, while ensuring
$s_i(A_i)+\mathrm{res}_i\le B_i$.

The online phase runs in $\mathcal{O}(n)$ time per arriving good once the offline quotas have been computed. The preprocessing time equals that of the chosen planner, e.g., pseudo-polynomial using the exact FEFx algorithm of \citet{barman2023guaranteeing} or polynomial using their FPTAS.

The key invariant is that reserved capacity never compromises
feasibility: reserved goods always fit within the budget, and
every good assigned to satisfy a planned quota is accepted
immediately. The proof appears in
Appendix~\ref{app:proof:lem:reserve}.

\begin{algorithm}[t]
\caption{\textsc{Plan-Reserve-Fulfill}}
\label{alg:prf}
\begin{algorithmic}[1]
\State Run the offline planner on $\widehat M$ to get
$\widehat A$ and quotas $q_{r,\theta}$.
\State Initialize $q^{\mathrm{unf}}_{r,\theta}\leftarrow q_{r,\theta}$
for every recipient $r$ and predicted type $\theta$,
$A_r\leftarrow\varnothing$ for all $r$, and
$\mathrm{res}_i\leftarrow
\sum_{\theta}q^{\mathrm{unf}}_{i,\theta}s_{i,\theta}$
for every agent $i$. When an unpredicted type $\theta$ first appears,
initialize $q^{\mathrm{unf}}_{r,\theta}\leftarrow0$ for every $r$.
\For{each arriving good $g$ of realized type $\theta$}
    \If{$q^{\mathrm{unf}}_{i,\theta}>0$ for some agent $i$}
        \State Pick such an agent via a fixed tie-breaking rule.
        \State Assign $g$ to $i$.
        \State $q^{\mathrm{unf}}_{i,\theta}\gets
        q^{\mathrm{unf}}_{i,\theta}-1$, $\mathrm{res}_i \leftarrow \mathrm{res}_i-s_{i,\theta}$.
    \ElsIf{$q^{\mathrm{unf}}_{\charity,\theta}>0$}
        \State Assign $g$ to charity and update $q^{\mathrm{unf}}_{\charity,\theta}\gets
        q^{\mathrm{unf}}_{\charity,\theta}-1$.
    \ElsIf{$\exists i \in N: s_i(A_i)+s_{i,\theta}+\mathrm{res}_i\le B_i$}
        \State Pick such an agent via a fixed tie-breaking rule.
        \State Assign $g$ to $i$.
    \Else { assign $g$ to charity.}
    \EndIf
\EndFor
\State \Return $(A_1,\dots,A_n,A_{\charity})$.
\end{algorithmic}
\end{algorithm}

\begin{restatable}{lemma}{ReserveFeasibility}
\label{lem:reserve}
Throughout the execution of
\textsc{Plan-Reserve-Fulfill}, the reserved-capacity invariant is
maintained. Consequently, the algorithm always returns a
budget-feasible allocation, and every good assigned to satisfy a
planned agent quota is accepted without violating feasibility.
\end{restatable}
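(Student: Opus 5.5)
The proof is an induction on rounds that keeps one invariant,
\[
\size_i(A_i)+\mathrm{res}_i\le\bud_i\quad\text{for every agent } i\in\N .
\]
Here $\mathrm{res}_i=\sum_{\theta}q^{\mathrm{unf}}_{i,\theta}s_{i,\theta}$ as maintained by \Cref{alg:prf}.

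\textbf{Base case.} Before any arrival we have $A_i=\varnothing$, so $\size_i(A_i)=0$. All quotas are unfilled, so $\mathrm{res}_i=\sum_\theta q_{i,\theta}s_{i,\theta}=\size_i(\widehat A_i)$. The offline planner returns a feasible allocation of $\widehat M$, which gives $\size_i(\widehat A_i)\le\bud_i$. A first appearance of an unpredicted type sets its quotas to zero, so it leaves every $\mathrm{res}_i$ unchanged.

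\textbf{Inductive step.} Assume the invariant holds before good $g$ of type $\theta$ arrives, and follow the four branches of the algorithm.
\begin{itemize}
\item \emph{Quota-fulfilment branch.} Some agent $i$ has $q^{\mathrm{unf}}_{i,\theta}>0$ and receives $g$. Then $\size_i(A_i)$ rises by $s_{i,\theta}$ and $\mathrm{res}_i$ falls by exactly $s_{i,\theta}$, so the sum is unchanged. Every other agent is untouched.
\item \emph{Charity-quota branch.} Only charity's bookkeeping changes, and charity has no budget.
\item \emph{Excess-allocation branch.} The algorithm explicitly checks $\size_i(A_i)+s_{i,\theta}+\mathrm{res}_i\le\bud_i$ before assigning, and $\mathrm{res}_i$ does not change, so the new sum is at most $\bud_i$.
\item \emph{Default branch.} The good goes to charity and no agent quantity changes.
\end{itemize}
This completes the induction.

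\textbf{Consequences.} Since $\mathrm{res}_i\ge 0$ (quotas are nonnegative integers and sizes are positive), the invariant gives $\size_i(A_i)\le\bud_i$ at every prefix. Hence every assignment respects the online feasibility rule $\size_i(A_i^{t-1})+\size_i(g_t)\le\bud_i$, and the final allocation is budget-feasible.

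For a quota-fulfilling assignment we need more: the good must fit at the moment it is assigned. Before the assignment, $q^{\mathrm{unf}}_{i,\theta}\ge1$, so $\mathrm{res}_i\ge s_{i,\theta}$. Combining this with the invariant gives
\[
\size_i(A_i)+s_{i,\theta}\le \size_i(A_i)+\mathrm{res}_i\le\bud_i .
\]
So a planned good is always accepted immediately without violating feasibility.

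\textbf{Main obstacle.} The step needing the most care is the quota-fulfilment branch. We must confirm that the size $s_{i,\theta}$ charged to the realized good equals the size reserved for it. This holds because types are defined as full value-size profiles: a realized good of type $\theta$ has exactly size $s_{i,\theta}$ for agent $i$, so the amount subtracted from $\mathrm{res}_i$ exactly offsets the size added to $A_i$. Without joint value-size types this bookkeeping would fail.
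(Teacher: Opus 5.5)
Your proof is correct and follows the paper's argument essentially step for step. It uses the same invariant $s_i(A_i)+\mathrm{res}_i\le B_i$, the same base case via feasibility of the planned allocation $\widehat A$, the same branch-by-branch preservation, and the same observation that $q^{\mathrm{unf}}_{i,\theta}\ge 1$ forces $\mathrm{res}_i\ge s_{i,\theta}$, so planned goods always fit. One small caveat on your closing remark: this lemma only needs a realized good's sizes to match the reserved sizes, so it is the size component of the type that matters here, while the joint value-size structure is needed for the fairness guarantees rather than for feasibility.
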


\subsection{Consistency under Perfect Predictions}

A fundamental requirement of learning-augmented algorithms is
\emph{consistency}: under perfect predictions, the online algorithm
should reproduce the planned offline allocation. For
\textsc{Plan-Reserve-Fulfill}, this means that every predicted quota
is eventually filled, so the online allocation matches the planned
allocation up to the identities of identical goods.

\begin{restatable}{theorem}{PerfectPredictionConsistency}
\label{thm:consistency}
If $f=\widehat f$, then \textsc{Plan-Reserve-Fulfill} fills every
planned quota and returns an allocation with the same per-recipient
type counts as the planned allocation $\widehat A$. 
\end{restatable}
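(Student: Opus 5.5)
The plan is a counting argument carried out type by type. Fix a type $\theta$, and note that under perfect predictions exactly $f_\theta=\widehat f_\theta=\sum_{r\in\Rec}q_{r,\theta}$ goods of type $\theta$ arrive; moreover, no unpredicted type ever appears, since $f_\theta=\widehat f_\theta=0$ for such $\theta$. The key claim is that every arriving type-$\theta$ good is routed to one of the two quota branches of \Cref{alg:prf}. That is, as long as some remaining quota $q^{\mathrm{unf}}_{r,\theta}$ is positive, the good decrements exactly one such quota: an agent quota if some $q^{\mathrm{unf}}_{i,\theta}>0$, and otherwise the charity quota. Consequently, the excess branch (the third and fourth cases) is only reached once $\sum_r q^{\mathrm{unf}}_{r,\theta}=0$.

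First, I would establish the invariant that, for each $\theta$, the number of type-$\theta$ goods that arrived so far plus $\sum_r q^{\mathrm{unf}}_{r,\theta}$ equals $\widehat f_\theta$, for as long as no type-$\theta$ good has entered an excess branch. I would prove this by induction over arrivals. Each arrival of type $\theta$ while quotas remain decrements the total remaining quota by exactly one. Arrivals of other types leave $q^{\mathrm{unf}}_{\cdot,\theta}$ untouched, since the algorithm only modifies the quotas of the realized type.

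Second, since only $f_\theta=\widehat f_\theta$ goods of type $\theta$ arrive, the $j$-th such arrival (for $j\le \widehat f_\theta$) sees a remaining total quota of $\widehat f_\theta-(j-1)\ge 1$. Hence it is always served by a quota branch, and after the last arrival every $q^{\mathrm{unf}}_{r,\theta}=0$. So no excess good ever exists, and the final number of type-$\theta$ goods held by each recipient $r$ is exactly the number of decrements of $q^{\mathrm{unf}}_{r,\theta}$, namely $q_{r,\theta}$. This matches the per-recipient type counts of $\widehat A$.

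The only step requiring care is that assignments in the agent-quota branch are actually executed rather than blocked by the budget, since the algorithm does not re-check feasibility there. For this I would invoke \Cref{lem:reserve}. The reserved-capacity invariant $s_i(A_i)+\mathrm{res}_i\le B_i$ ensures that, when $q^{\mathrm{unf}}_{i,\theta}>0$, the good's size $s_{i,\theta}$ is already counted in $\mathrm{res}_i$. Moving it from reserve to $A_i$ therefore preserves feasibility, so every planned assignment is accepted. With this, the counting argument above is complete. I do not expect a genuine obstacle: the argument reduces to bookkeeping once \Cref{lem:reserve} is available.
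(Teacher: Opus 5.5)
Your proposal is correct and follows essentially the same route as the paper. Both arguments count per type: exactly $\widehat f_\theta=\sum_{r}q_{r,\theta}$ goods of type $\theta$ arrive and the quota branches take priority, so every arrival decrements a quota and no good is ever treated as excess, while \Cref{lem:reserve} ensures that agent-quota assignments stay within budget. Your explicit induction invariant just spells out the bookkeeping that the paper states in a single sentence.
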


See Appendix~\ref{app:proof:thm:consistency} for a proof.
Theorem~\ref{thm:consistency} immediately lifts any offline fairness
guarantee determined solely by the per-recipient type counts to the
online setting. In particular, FEF, FEF$k$, $\alpha$-FEF$k$, additive
$\eta$-FEF$k$, and FEFx are all preserved; see
Appendix~\ref{app:proof:cor:fefx-lift}.

\subsection{Robustness to Prediction Errors}
\label{sec:Robustness}

We now quantify the effect of prediction errors. Planned goods
may fail to arrive, leaving quotas unfilled, while additional
realized goods create excess allocations. We measure additive
fairness loss by the \emph{$k$-violation}
$\operatorname{viol}_k(A)$, the minimum $\eta\ge0$ such
that $A$ is additive $\eta$-FEF$k$; see
Appendix~\ref{app:k-violation} for an equivalent explicit
expression. The key step is to relate this fairness loss to
the missing and excess goods caused by prediction errors.
Appendix~\ref{app:proof:thm:robust} establishes this
intermediate bound, which yields the following robustness
guarantee after relating those quantities to the prediction
error $\Delta_\infty(f,\widehat f)$; see
Appendix~\ref{app:proof:thm:robust-global} for the proof.

\begin{restatable}{theorem}{PredictionErrorRobustness}
\label{thm:robust-global}
Suppose the planned allocation $\widehat A$ is additive
$\eta_0$-FEF$k$, and let $A$ be the allocation returned by
\textsc{Plan-Reserve-Fulfill}. Then,
$\operatorname{viol}_k(A)\le
\eta_0+\Delta_\infty(f,\widehat f)$.
In the common-valuation case,
$\operatorname{viol}_k(A)\le
\eta_0+\Delta(f,\widehat f)$.
\end{restatable}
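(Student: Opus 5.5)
The plan is to compare the online allocation $A$ with the planned allocation $\widehat A$ through per-recipient type counts, and charge every discrepancy to the prediction error. Write $c_{r,\theta}$ for the number of type-$\theta$ goods in $A_r$ and $\widehat q_{r,\theta}=q_{r,\theta}$ for the planned counts. Two kinds of discrepancy can occur.

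\begin{itemize}
\item \emph{Missing goods}: quotas $q^{\mathrm{unf}}_{r,\theta}>0$ left at the end. There are at most $\max(0,\widehat f_\theta-f_\theta)$ of these per type.
\item \emph{Excess goods}: arrivals after all type-$\theta$ quotas are filled. There are at most $\max(0,f_\theta-\widehat f_\theta)$ of these per type.
\end{itemize}

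Since quotas are filled in arrival order, for each $\theta$ exactly $\min(f_\theta,\widehat f_\theta)$ goods fulfil quotas. Hence, summed over recipients, the missing and excess counts for type $\theta$ total exactly $|f_\theta-\widehat f_\theta|$. Lemma~\ref{lem:reserve} guarantees that every quota-fulfilling good is accepted, so there are no further discrepancies.

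Fix an agent $i$, a recipient $r$, and a set $S\subseteq A_r$ with $s_i(S)\le B_i$. The goal is to exhibit $X\subseteq S$ with $|X|\le k$ and $v_i(A_i)\ge v_i(S\setminus X)-\eta_0-\Delta_i(f,\widehat f)$.

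\textbf{Step 1 (from $A_r$ to $\widehat A_r$).} Split $S=S^{\mathrm{pl}}\cup S^{\mathrm{ex}}$ into quota-fulfilling goods and excess goods. The set $S^{\mathrm{pl}}$ corresponds, type by type, to a subset $\widehat S$ of $\widehat A_r$ with identical sizes, so $\widehat S$ is feasible for $i$. By the additive $\eta_0$-FEF$k$ property of $\widehat A$, there is $\widehat X\subseteq\widehat S$ with $|\widehat X|\le k$ and $v_i(\widehat A_i)\ge v_i(\widehat S\setminus\widehat X)-\eta_0$. Let $X$ be the corresponding goods in $S^{\mathrm{pl}}$.

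\textbf{Step 2 (from $\widehat A_i$ to $A_i$).} We have $v_i(A_i)\ge v_i(\widehat A_i)-v_i(\text{missing quotas of } i)$, since all filled quotas of $i$ are in $A_i$.

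\textbf{Combine.} These give
\[
v_i(A_i)\ge v_i(S\setminus X)-\eta_0-v_i(\text{missing}_i)-v_i(S^{\mathrm{ex}}).
\]

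\textbf{Step 3 (charging to the prediction error).} The missing quotas of $i$ and the excess goods in $A_r$ are disjoint kinds of discrepancy, and for each type $\theta$ their combined count is at most the total discrepancy for that type, $|f_\theta-\widehat f_\theta|$. In fact, for a given $\theta$ one of $f_\theta<\widehat f_\theta$ or $f_\theta>\widehat f_\theta$ holds, so only one kind can occur for that type. Therefore
\[
v_i(\text{missing}_i)+v_i(S^{\mathrm{ex}})\le\sum_\theta v_{i,\theta}|f_\theta-\widehat f_\theta|=\Delta_i(f,\widehat f)\le\Delta_\infty(f,\widehat f).
\]
This gives the general bound. Under common valuations $\Delta_i=\Delta$, which gives the second statement.

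\textbf{Main obstacle: when $r=i$.} Here $S\subseteq A_i$, so $S^{\mathrm{ex}}$ can include excess goods that $i$ itself received. This is harmless, since then $v_i(A_i)\ge v_i(S)$ directly. So the case is trivial, but it must be separated out. The general care needed in Step~3 is to make sure each mispredicted unit is charged only once, even though the excess goods for $r\ne i$ and the missing quotas of $i$ come from different recipients.
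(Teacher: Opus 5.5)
Your proposal is correct and follows essentially the same route as the paper. The paper first proves a local bound $\operatorname{viol}_k(A)\le\eta_0+\max_{i,r}(m_i+e_{i,r})$ using exactly your ingredients: the split $S=S^{\mathrm{pl}}\uplus S^{\mathrm{ex}}$, the transfer of $S^{\mathrm{pl}}$ to predicted copies, and the loss of $i$'s unfilled quotas. It then charges $m_i+e_{i,r}$ to $\Delta_i(f,\widehat f)$ through the positive and negative parts of $f_\theta-\widehat f_\theta$, as in your Step~3. Your separate treatment of $r=i$ is harmless but not needed, because the general argument already covers that case.
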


Combining \Cref{thm:consistency,thm:robust-global},
\textsc{Plan-Reserve-Fulfill} recovers the planner's
allocation under perfect predictions and otherwise degrades
gracefully with prediction error. Appendix~\ref{app:weak-advice}
shows that weaker advice consisting only of the minimum
positive density and a valid spread bound does not improve
the optimal spread-bounded frontier.
\subsection{Tightness of the Prediction-Error Bound}

The robustness guarantee of \Cref{thm:robust-global} degrades
additively with the prediction error. The following theorem shows that
this dependence is essentially tight: no deterministic
prediction-augmented online algorithm can guarantee a substantially
smaller additive dependence on the prediction error while remaining
perfectly consistent.

\begin{restatable}{theorem}{PredictionErrorLowerBound}
\label{thm:pred-lb}
Fix $k\ge1$ and $\zeta\in(0,1)$. Any
deterministic online algorithm that returns an exact
FEF$k$ allocation under correct type-count predictions satisfies $\operatorname{viol}_k(A) \ge (1-\zeta)\Delta(f,\widehat f)$ on some instance,
even for two agents with equal budgets, common
valuations, sizes common across agents, and a known horizon.
\end{restatable}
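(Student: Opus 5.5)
We will use a two-instance adversary argument. Fix $k\ge1$ and $\zeta\in(0,1)$, and a deterministic algorithm $\mathcal A$ that is exact FEF$k$ whenever $f=\widehat f$.

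\textbf{Setup.} Take two agents with unit budgets, common valuations and common sizes. Give $\mathcal A$ a single prediction $\widehat f$. Consider two realized sequences that share a common prefix $P$:
\begin{itemize}
\item Instance $I_1$ consists of $P$ followed by a suffix $S_1$, and is chosen so that $f=\widehat f$.
\item Instance $I_2$ consists of $P$ followed by a different suffix $S_2$, and is chosen so that $\Delta(f,\widehat f)$ is small, say $\Delta\approx V$ for a scale parameter $V$.
\end{itemize}
Since $\mathcal A$ is deterministic and sees the same prediction and the same prefix in both instances, it makes identical decisions on $P$. The horizon can be kept equal in both instances by padding with zero-value goods of tiny size, which cost nothing in $\Delta$.

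\textbf{Step 1 (consistency forces a commitment on $P$).} Let the prefix contain $m\gg k$ goods of value $V/m$ and size $1/m$. Let the predicted suffix $S_1$ contain $m$ more goods of the same type. On $I_1$ there are $2m$ goods of size $1/m$, so both agents can fit everything. Exact FEF$k$ then forbids sending more than about $k$ of them to charity, and it forces the two agents' values to be within $k$ goods of each other. Since the predicted sequence is the realized one here, the predicted counts of this type are at least $2m$. For the prefix, this forces $\mathcal A$ to fill roughly half of each agent's budget and to send essentially nothing to charity. The decisions on $P$ are identical in $I_2$, so this commitment carries over.

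\textbf{Step 2 (deviation exploits the commitment).} In $I_2$, the $m$ low goods of $S_1$ do not arrive. Instead, the suffix $S_2$ consists of goods of a new, unpredicted type: value $W$ each and size $1/2+\delta$. Each agent already holds about half its budget, so neither can accept such a good, and every one of them goes to charity. Send $k+1$ of these goods. Any single one of them forms a feasible comparison set. To defeat the removal of $k$ goods, I will instead use goods of size about $1/(2(k+1))$, so that $k+1$ of them fit in the budget but none fits in the remaining capacity of either agent. Each agent's remaining capacity after $P$ is at most about $1/2+O(k/m)$, and I will tune the sizes so that the greedy-accept scenario is also handled.

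\textbf{Step 3 (accounting).} The error is $\Delta(f,\widehat f)=mV/m+(k+1)W=V+(k+1)W$. The violation is at least the value of $k+1$ charity goods minus $k$ of them minus the agent's own value: $\operatorname{viol}_k\ge W-(\text{agent value})\approx W-V/2$. This gives a ratio of only about $1/(k+1)$, so the construction must change so that the violation scales with the full error.

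The fix is to make the missing predicted goods themselves the source of the error, while keeping the extra goods few but heavy. I would use a single unpredicted heavy good sent to charity whose value dominates. The loss from removing up to $k$ goods can be avoided by making the comparison set $k+1$ heavy goods whose total equals the error. The unavoidable overhead comes from the agent's own bundle and the missing predicted mass. Choosing $W/V\to\infty$ and $m\to\infty$ drives the ratio $\operatorname{viol}_k/\Delta$ toward $k/(k+1)$ when counting removal. This is still short of the target, so the removal loss has to be absorbed differently.

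I expect this balancing to be the main obstacle. The best candidate I see is to predict the heavy type as well. The prediction then includes $2(k+1)$ heavy goods that exact FEF$k$ forces $\mathcal A$ to reserve capacity for. The adversary then omits one of them, so the mispredicted count is only one good, of value $W$, while the removal of $k$ goods is charged against predicted goods that did arrive. Ensuring that the charity bundle contains $k+1$ heavy goods beyond the agent's value, at error only $W(1+o(1))$, is the delicate part of the construction. Finally, choosing the parameters $m$, $W/V$ and the sizes as functions of $\zeta$ yields $\operatorname{viol}_k\ge(1-\zeta)\Delta$.
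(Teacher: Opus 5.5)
Your proposal has a genuine gap: it never reaches a working construction, and the step meant to create the commitment fails. In Step 1 the predicted continuation $S_1$ consists of more goods of the same valuable type. Exact FEF$k$ of the \emph{final} allocation on $I_1$ therefore says nothing about the state after $P$. For example, send a third of $P$ to each agent and to charity, and do the same for $S_1$. The result gives $2m/3$ goods to every recipient and is exactly FEF$0$, since charity's whole bundle has size $2/3$ and the same value as each agent's. So neither ``essentially nothing to charity'' nor ``each agent half full'' is forced; here each agent is only one third full after $P$. Even if agents were half full, any good that fails to fit would have size exceeding about $1/2$. A feasible comparison set then contains at most one such good, and FEF$k$ with $k\ge1$ simply removes it. This is exactly why your accounting collapses to a ratio of about $1/(k+1)$. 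On top of that, the $m$ omitted low goods of $S_1$ add $V$ to $\Delta$. Your closing idea, predicting the heavy type and omitting one copy, is not carried out. Nothing in it shows that an arbitrary consistent algorithm is pushed into a tight configuration.

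The paper's proof uses two ideas you are missing. First, the predicted suffix is \emph{zero-valued}. You already note that such goods are free in $\Delta$, but you use them only as padding. Second, the prefix \emph{overflows} the agents' total capacity. Concretely, all goods have size $1/Q$, and the prediction is $3Q$ low goods of value $\delta$ followed by $Q+2k$ zero goods, with known horizon $4Q+2k$. On the correct instance, charity must hold at least $Q$ low goods after the prefix. Since nothing of positive value follows, envy toward those goods can never be repaired. Exact FEF$k$ therefore forces each agent to already hold at least $Q-k$ low goods, leaving room for at most $k$ more goods. In the deviating instance, the zero suffix is replaced by $Q+2k$ unpredicted goods of value $1$ and size $1/Q$. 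The missing zero goods cost nothing, so $\Delta=Q+2k$. At most $2k$ high goods are accepted, so charity holds a feasible set of $Q$ of them. Removing $k$ goods loses only $k$ of that value, while each agent's value is at most $Q\delta+k$. This gives $\operatorname{viol}_k(A)\ge Q-2k-Q\delta\ge(1-\zeta)(Q+2k)$ for $Q$ large and $\delta$ small. The removal loss is absorbed by using many small heavy goods, not a few large ones.
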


The proof appears in Appendix~\ref{app:proof:thm:pred-lb}.
The lower bound uses only exact FEF$k$ under correct predictions; it does
not assume that the algorithm follows a particular offline plan. On the
correct instance, exact FEF$k$ forces each agent to accept almost all of the
low-valued prefix. Replacing the predicted zero-valued suffix by
high-valued goods then leaves too little capacity to accept those goods,
creating a fairness violation proportional to the value-weighted prediction
error.

\subsection{Value and Size Marginals Do Not Suffice}

The following result shows that exact FEF$k$ cannot be achieved
using only value and size marginals, even when several natural
aggregate statistics are also predicted. Budget feasibility
depends on the correlation between values and sizes, making
joint type information essential.

\begin{restatable}{theorem}{MarginalsDoNotSuffice}
\label{thm:marginals}
Fix $k\ge1$. No deterministic online algorithm can guarantee
exact FEF$k$ given only the exact value and size multisets and
the horizon, even for two agents with equal budgets, common
valuations, and sizes common across agents. The result
continues to hold if the total value and maximum item value
are also provided.
\end{restatable}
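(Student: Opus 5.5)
We will build two instances that share the same value multiset, the same size multiset and the same horizon, but pair values with sizes differently. Since the total value and the maximum item value are functions of the value multiset, both instances also agree on these statistics. Take two agents with $B_1=B_2=1$, common valuations, and sizes common across agents. Because the algorithm is deterministic and receives identical advice on both instances, it behaves identically on any common prefix. The plan is to make both instances start with the same prefix, force the algorithm to commit capacity on that prefix, and then show that the suffix needed to complete one instance makes the algorithm fail on the other.

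\textbf{Construction.} Fix $k\ge1$ and a small $\delta>0$. Let $m$ be large compared with $k$ and let $h$ be a large value, say $h>(2k+2)m$. The common prefix consists of $2m$ goods of value $1$ and size $\delta$, with $2m\delta\le 1$. The suffix has $2(k+1)$ goods, and it draws from the multisets
\[
\text{values }\{h^{\times(k+1)},0^{\times(k+1)}\},\qquad \text{sizes }\{1^{\times(k+1)},\delta^{\times(k+1)}\}.
\]
In instance $I_1$, the suffix pairs value $h$ with size $\delta$ and value $0$ with size $1$. In instance $I_2$, it pairs value $h$ with size $1$ and value $0$ with size $\delta$. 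In both instances the suffix presents the high-valued goods first, in a fixed order.

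\textbf{Forcing commitments on the prefix.} Run the algorithm on the prefix. The first step is to argue that exact FEF$k$, required at the end of $I_1$, forces the algorithm to keep almost all budget free for the $k+1$ good-dense items: each agent must be able to accept enough $h$-valued small items, since otherwise charity holds $k+1$ feasible items of value $h$ and envy survives the removal of $k$ of them. This is automatically possible if $\delta$ is small, so this prefix by itself does not bind. The actual bite has to come from $I_2$, where a single $h$-valued good of size $1$ fills an entire budget. An agent who takes such a good cannot also hold any prefix goods, since their sizes add to more than $1$. An agent who refuses such goods sees charity holding singleton feasible sets of value $h$, but FEF$k$ lets her remove that one good, so there is no violation there.

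\textbf{Main obstacle and fix.} Singleton envy is always removable when $k\ge1$, so the large items must instead be arranged so that $k+1$ of them fit together into one budget. The fix is to give the large items size $1/(k+1)$ rather than $1$. Then in $I_2$ the $k+1$ high-valued goods jointly fill a budget. Since the prefix consumes positive size, an agent who accepted prefix goods cannot take all of them. If the goods it cannot take go to charity, or to the other agent, the comparing agent faces a feasible set of at least $k+1$ goods worth $h$ each, and loses against it unless it holds comparable value. With two agents and $k+1$ goods, one agent receives at most about half of them. That agent must then be compared against the other agent's bundle, which is not charity; this needs to be checked carefully, because the other agent's bundle may be infeasible for the comparing agent.

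\textbf{Converse case.} Alternatively, the algorithm leaves the prefix to charity. Then at the end of $I_1$ charity holds $2m>k$ unit-value goods of small size, and the agents hold only the $h$-valued goods. There is no envy in that case, so the pressure must come from $I_1$ in a different way: there the $0$-valued goods should have large size, and the prefix goods are the only value available. We will therefore adjust the construction so that the exact-FEF$k$ requirement on $I_1$ forces each agent to take $\Omega(m)$ prefix goods, and then use $I_2$ to exploit the capacity this consumes.

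Tuning these parameters, so that one instance forces prefix acceptance and the other then punishes it, is the step we expect to be hardest. We will iterate on the pairing of values and sizes, in the spirit of the proof of \Cref{thm:pred-lb}, until a valid pair of instances is found.
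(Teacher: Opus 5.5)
Your proposal is not yet a proof. Its last paragraph concedes that a valid pair of instances has not been found, and the constructions you sketch fail. In your $I_1$, nothing forces the algorithm to commit capacity on the prefix. The $2m$ prefix goods fit into a single budget ($2m\delta\le 1$), so charity is never compelled to hold a budget-filling set of them. Moreover, the value-$h$ goods arrive with the small size $\delta$, so an agent that skipped the prefix can simply catch up later, as you yourself observe in the converse case. The size-$1/(k+1)$ repair fails for a counting reason. With only $k+1$ high goods, the algorithm can give at least one to each agent; keeping a little capacity open is free, since your prefix fits in one budget. After that, no recipient other than the comparing agent holds more than $k$ high goods, so removing $k$ goods strips every comparison set down to prefix goods. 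Since $h$ exceeds the total prefix value, no violation can be certified.

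The missing idea has two parts. First, the prefix must overload the agents' total capacity. Second, in the forcing continuation the high values must be paired with a size strictly larger than the budget, so those goods can neither be assigned nor enter any feasible comparison set. The swapped pairing then punishes the forced commitment. Take $B_1=B_2=1$, $Q>2k$, $H=Q+2k$, and $\delta$ small enough that $Q\delta+k<Q-k$.
\begin{itemize}
\item The prefix is $3Q$ goods of value $\delta$ and size $1/Q$. The agents can hold at most $2Q$ of them, so at least $Q$ stay in charity.
\item Continuation~I appends $H$ goods of value $1$ and size $2$, and $H$ goods of value $0$ and size $1/Q$. No feasible positive-value good arrives after the prefix, so exact FEF$k$ forces each agent to already hold at least $Q-k$ prefix goods.
\item Continuation~II swaps the suffix sizes: $H$ goods of value $1$ and size $1/Q$, and $H$ goods of value $0$ and size $2$. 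The value and size multisets, horizon, total value and maximum value are unchanged, and so is the prefix.
\item By determinism, each agent in Continuation~II still holds at least $Q-k$ prefix goods and can take at most $k$ high goods. Hence at least $Q$ high goods land in charity. Each agent is worth at most $Q\delta+k$, so each violates exact FEF$k$ against those charity goods.
\end{itemize}
In effect, the roles of your $I_1$ and $I_2$ must be exchanged. The large size must also exceed the budget strictly, because a size-$1$ good can still be absorbed by an agent whose bundle is empty. Finally, the violation is always against charity, so the agent-to-agent comparisons you were worried about never need to be analyzed.
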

\emph{Proof sketch.}
Appendix~\ref{app:proof:thm:marginals} constructs two
instances with identical marginal predictions but different
value-size pairings, forcing different online decisions.

Unlike unconstrained online fair division, where predicting
future values can suffice \cite{neoh2026online}, budget
constraints make the pairing of values and sizes essential.
Theorem~\ref{thm:marginals} thus explains why our
prediction model uses joint value-size types instead of
separate value and size marginals. Joint type counts provide
one sufficient representation of this information, though not
the only possible one.

%====================================================================

\section{Conclusion and Future Work}
\label{sec:conclusion}

We initiated the study of online fair division under budget
constraints. We proved that deterministic online algorithms
admit no meaningful fairness guarantees in full generality,
and showed how bounded density spread, resource
augmentation, and learning-augmented algorithms restore
tractability. Together, these results delineate the boundary
between impossibility and tractability.
Several directions remain for future work. It would be interesting to
extend the sharp frontier characterizations beyond the small-item
setting, particularly to arbitrary and agent-specific item sizes, and
to better understand the power of randomization, recourse, richer
arrival models, and other forms of resource augmentation. It also
remains to develop richer prediction models, and to establish tighter bounds on the predictive information
required to recover strong fairness guarantees.

\bibliographystyle{plainnat} 
\bibliography{abb,bib}

\clearpage
\appendix

\section{Additional Related Work}
\label{supp:related}

\subsection{Online Fair Division}
\label{supp:online-fair-division}

As discussed in Section~\ref{sec:related work}, our work is closely
related to the literature on online fair division, where goods arrive
sequentially and must be allocated without knowledge of future arrivals.
The area was introduced by \citet{AleksandrovEtAl2015}, motivated by
applications such as food banks, and has since grown substantially.

A large body of work studies the interplay between fairness and
efficiency in online allocation. Beyond the original food bank model, which assumes binary additive valuations~\cite{AleksandrovEtAl2015},
subsequent works consider general additive valuations and investigate fairness notions such as vanishing envy~\cite{benade2018make}, the
compatibility of fairness and approximate Pareto
efficiency~\cite{zeng2020fairness}, and dynamic information models in which only partial information about future items is
available~\cite{benade2025dynamic}. Other works instead optimize social welfare objectives such as egalitarian welfare under online
arrivals~\cite{springer2022online}, while additional variants study
dynamic reallocations~\cite{friedman2015dynamic,he2019achieving,li2018dynamic}, the use of reordering buffers \cite{amanatidis2026buffers}, restricted valuations structures \cite{amanatidis2025onlineFD-2valued,wang2026online}
and related online allocation models.

These works differ fundamentally from our setting in two respects.
First, they assume that every arriving good is allocated to some agent,
whereas budget constraints in our model may require goods to be assigned
to charity. Second, fairness is evaluated with respect to entire bundles,
whereas our model compares only budget-feasible subsets of recipients'
bundles. Consequently, the techniques and guarantees developed for
classical online fair division do not directly extend to the
budget-constrained setting considered in this paper.

\subsection{Semi-Online Fair Division}
\label{sec:Semi-Online Fair Division}

A rapidly growing line of work studies \emph{semi-online} (or
learning-augmented) fair division, where the online algorithm has access
to side information about future arrivals, typically obtained from
historical data. Much of this literature considers divisible goods and
uses predictions to improve welfare objectives such as Nash social
welfare or more general welfare functions. Existing prediction models
include monopolist-value predictions~\cite{banerjee2022online},
normalization information~\cite{barman2022universal,huang2025long},
and related forms of aggregate information about future
items~\cite{cohen2023general,an2024best}.

Prediction models have also been studied for indivisible goods.
\citet{zhou2023multi} initiated the study of learning-augmented online
allocation of indivisible goods and chores under normalized valuations,
focusing on maximin share (MMS) fairness. Building on this work,
\citet{neoh2026online} investigated predictions of aggregate values and
value frequencies for achieving fairness notions such as MMS and EF1,
while several subsequent works considered related prediction models for
online allocation of indivisible
items~\cite{spaeh2023online,balkanski2023strategyproof,cohen2024plant}.

More recently, \citet{choo2026approximate} studied PROP1 in online fair
division with predictions. Besides establishing impossibility results
against adaptive adversaries, they considered prediction models based on
maximum item values and showed that these suffice to recover meaningful
PROP1 guarantees, while stronger notions such as EF1 and MMS remain
inapproximable under the same information model. Finally,
\citet{wang2026online} investigated semi-online allocation of
indivisible goods and chores under additive and submodular valuations,
obtaining fairness and efficiency guarantees in a substantially weaker
adversarial model than the fully adaptive setting considered here.

Our learning-augmented framework differs fundamentally from these
approaches. Rather than predicting aggregate values, normalization
information, value frequencies, or maximum item values, we predict
\emph{joint value-size types}. This richer prediction model is
necessitated by budget constraints: feasibility depends not only on the
marginal distributions of values and sizes, but also on their
correlation. Our necessity result shows that these correlations cannot,
in general, be recovered from separate value and size predictions.

\subsection{Temporal Fair Division}

Recently, \citet{elkind2025temporal} introduced the model of
\emph{temporal fair division}, in which indivisible goods or chores
arrive sequentially according to a fixed sequence that is known in
advance, and the objective is to maintain approximately fair allocations
throughout the process. \citet{cookson2025temporal} study a closely
related model, considering alternative fairness notions and their
simultaneous satisfaction over allocation prefixes and in the final
allocation.
Recently, \citet{Goldberg2026} study the problem of minimizing cumulative envy in this setting.

These models differ fundamentally from ours in the information
available to the algorithm. Temporal fair division assumes complete
knowledge of all future arrivals and agents' valuations, allowing the
allocation to be planned over the entire sequence. In contrast, we study
the online setting, in which future goods are unknown unless explicitly
predicted, and allocation decisions must be made irrevocably as goods
arrive. Moreover, temporal fair division does not consider the budget
constraints and charity-based feasible fairness notions that are central
to our work.

\subsection{Online Matching}

Online matching is one of the classical problems in online algorithms,
originating with the seminal work of \citet{karp1990optimal}, who
introduced the online bipartite matching problem and the Ranking
algorithm with its optimal competitive ratio of $1-1/e$. A large body
of subsequent work has extended this model in several directions,
including weighted matchings~\cite{feldman2009online}, fully online
matching where both sides of the graph arrive over
time~\cite{wang2015two,huang2018match}, repeated matching \cite{caragiannis2023repeatedmatching,Lim2026repeated}, and numerous other variants (e.g., slot assignment \cite{elkind2022temporalslot}).

Our work shares the irrevocable decision-making paradigm of online
matching: arriving objects must be assigned without knowledge of future
arrivals. However, the optimization objective is fundamentally
different. Online matching seeks to maximize the size or weight of the
matching, whereas our goal is to produce fair allocations under
generalized assignment budget constraints. Consequently, the notions of
fairness, the role of budget feasibility, and the resulting algorithmic
techniques differ substantially from those studied in the online
matching literature.

\subsection{Additional Related Online Problems}

Other related online settings include online coalition formation \cite{cohen2024onlinefriends,cohen2024online,cohen2025online,cohen2025decentralized,cohen2026delayed,cohen2023online,cohen2025fair},  concerns forming or learning partitions of agents, and temporal voting \cite{alouf2022better,ElkindObraztsovaTeh2024TemporalFairness,teh2026price,phillips2026strengthening,elkind2025not,elkind2025verifying,elkind2024temporal,zech2024multiwinner}, which concerns repeated
collective choices and notions of representation or fairness across time.
These models are related to ours primarily through their sequential
decision-making structure.
In contrast, our setting requires the irrevocable allocation of arriving indivisible goods under per-agent budgets, with unallocated goods retained by charity and fairness evaluated through budget-feasible comparisons.

\section{Omitted Proofs for Section~\ref{sec:lb}}
\label{app:omitted:lb}

\subsection{Proof of Theorem~\ref{thm:main-lb}}
\label{app:proof:thm:main-lb}
\MainImpossibility*

\begin{proof}
Fix $k\ge 0$ and $\alpha\in(0,1]$. We first describe a phase simulation that identifies a hard
stopping phase. Since the algorithm is deterministic, the stopping
phase identified by this simulation determines a fixed finite input
sequence; we formalize this point at the end of the proof.
We construct an adversarial
instance in which every good has the same size $1/Q$, where $Q$ is
chosen sufficiently large. Since each agent has budget $1$, no agent
can receive more than $Q$ goods.

Choose an integer $Q>k$ sufficiently large that
\begin{equation}
\label{eq:Qtau}
\tau
:=
\left\lfloor
\frac{\alpha(Q-k)}{4}
\right\rfloor
\ge 1.
\end{equation}
Such a choice is possible because
$\alpha(Q-k)/4\rightarrow\infty$ as $Q\rightarrow\infty$.
The quantity $\tau$ will serve as the minimum number of goods that
each agent must receive from every phase in order to avoid an immediate
violation of $\alpha$-FEF$ k$. Note $\tau\ge 1$
is an integer and, by the definition of the floor
function,
\begin{equation}
    \label{eq:tau}
    \tau
    \le
    \frac{\alpha(Q-k)}{4}.
\end{equation}

Choose an integer $L$ satisfying
\begin{equation}
\label{eq:Ltau}
L\tau>Q.
\end{equation}
Such a choice is possible because $\tau\ge1$. We will partition the construction into $L$ phases and set the (known) horizon to $\horizon=3QL$
so that each phase consists of exactly $3Q$ goods.

It remains to specify the values of the goods released in each phase.
Let $V_1,\dots,V_L$ be positive numbers defined recursively by
setting $V_1=1$ and, for every $t\ge2$, choosing $V_t$ large
enough that
\begin{equation}
\label{eq:phaseval}
Q\sum_{r<t}V_r
<
\frac{\alpha(Q-k)}{4}\,V_t.
\end{equation}
Thus, the total value of all goods that an agent could possibly retain
from earlier phases is dominated by the value contributed by only a
constant fraction of the goods released in phase $t$.

Such a sequence always exists. For example, one may take
\[
V_t=
\frac{8Q}{\alpha(Q-k)}
\sum_{r<t}V_r,
\quad t\ge2,
\]
which satisfies~\eqref{eq:phaseval} since it  gives
$Q\sum_{r<t}V_r=\tfrac{\alpha(Q-k)}{8}V_t<\tfrac{\alpha(Q-k)}{4}V_t$. In the remainder of the proof we use only
property~\eqref{eq:phaseval}, not the explicit construction.

For each phase $t\in[L]$, choose a positive number $d_t$ satisfying
\begin{equation}
\label{eq:dummy}
Qd_t
<
\frac{\alpha(Q-k)}{4}V_t.
\end{equation}
Such a choice is possible because $V_t>0$. These values will be used only if the adversary terminates the phase construction early.

The adversary proceeds in phases. During phase $t$, it releases $3Q$ goods, each of value $V_t$ and size $1/Q$. After the phase is completed, it checks whether the algorithm satisfies the condition described below. If so, the adversary proceeds to the next phase. If not, it terminates the construction by replacing all remaining phases  with dummy goods of value $d_t$ and size $1/Q$. Since exactly $3QL-3Qt$ goods remain to be released, the total number of goods is $\horizon=3QL$, regardless of the phase at which the construction terminates.

The first observation is that, after every completed phase, at least
$Q$ goods from that phase must be assigned to charity. Indeed, phase
$t$ contains $3Q$ goods, whereas the two agents together can hold at
most $2Q$ goods in total because every good has size $1/Q$ and both
budgets are equal to $1$. Consequently, at least $3Q-2Q=Q$
goods released during phase $t$ must remain in charity.

Suppose that, after completing phase $t$, one of the two agents,
say agent $i \in \{1,2\}$, has received fewer than $\tau$ goods released during
that phase. The adversary then terminates the phase construction and
completes the remaining rounds using dummy goods of value $d_t$.
We show that the resulting allocation violates
$\alpha$-FEF$ k$.

By the previous observation, charity contains at least $Q$ goods from
phase $t$. Let $S$ be any such set. Since every good has size
$1/Q$, $\size_i(S)=Q\cdot\frac1Q=1=\bud_i$,
so $S$ is feasible for agent $i$. Moreover, for every
$X\subseteq S$ with $|X|\le k$,
\begin{equation}
\label{eq:Sval}
\val(S\setminus X)
\ge
(Q-k)V_t,
\end{equation}
because $S\setminus X$ contains at least $Q-k$ goods, each of value
$V_t$.

It remains to upper-bound the final value $v(A_i)$ accumulated by agent $i$. The
goods held by $i$ fall into three categories: goods received during
phase $t$, goods received in earlier phases, and dummy goods released
after the construction terminates.

First, by assumption, agent $i$ receives fewer than $\tau$ goods
from phase $t$. Hence, the total contribution of these goods is less
than
\[
\tau V_t
\le
\frac{\alpha(Q-k)}{4}V_t,
\]
where the inequality follows from the definition of $\tau$, which satisfies $\tau\le\alpha(Q-k)/4$ by \eqref{eq:tau}.

Second, agent $i$ holds at most $Q$ goods in total. If $n_r \le Q$
denotes the number of goods retained from phase $r<t$, then
$\sum_{r<t}n_r\le Q$, and therefore
\[
\sum_{r<t}n_rV_r
\le
Q\sum_{r<t}V_r
<
\frac{\alpha(Q-k)}4V_t,
\]
where the final inequality follows from
\eqref{eq:phaseval}.

Finally, every dummy good has value $d_t$, and agent $i$ can hold
at most $Q$ goods altogether. Thus, the total value contributed by
dummy goods is at most $Qd_t
<
\frac{\alpha(Q-k)}4V_t$ by~\eqref{eq:dummy}.

Combining the three estimates gives us
\begin{equation}
\label{eq:Aival}
\val(A_i) <
3\cdot
\frac{\alpha(Q-k)}4V_t =
\frac{3\alpha(Q-k)}4V_t <
\alpha(Q-k)V_t.
\end{equation}

Together with~\eqref{eq:Sval}, this implies that for every
$X\subseteq S$ with $|X|\le k$,
\[
\val(A_i)
<
\alpha(Q-k)V_t
\le
\alpha\,\val(S\setminus X).
\]
Hence, no set $X\subseteq S$ with $|X|\le k$ satisfies the
$\alpha$-FEF$k$ inequality for the comparison set $S$, so \textbf{the final
allocation violates $\alpha$-FEF$ k$.}

To complete the argument, suppose that the failure condition never
occurs. Then, after every phase $t\in[L]$, each agent must hold at
least $\tau$ goods released during that phase. Since goods released in
different phases are distinct, by the end of phase $L$ each agent
holds at least $L\tau$ goods. By~\eqref{eq:Ltau}, $L\tau>Q$,
contradicting the fact that each agent can hold at most $Q$ goods.
Therefore, the failure condition must occur after some phase
$t\le L$, and the argument above shows that the resulting allocation
violates $\alpha$-FEF$ k$.

Although the phase construction was described adaptively, it yields
a fixed hard input for the deterministic algorithm. Simulate the
construction above and let $t^\star$ be the first phase at which the
failure condition occurs. Before the actual execution, fix the input
sequence consisting of phases $1,\ldots,t^\star$, followed by exactly
$3Q(L-t^\star)$ dummy goods of value $d_{t^\star}$ and size $1/Q$.

A fresh execution of the algorithm on this fixed sequence has the
same announced horizon $T=3QL$ and the same observed arrival prefix
through phase $t^\star$ as the simulation. Determinism therefore
forces the algorithm to make the same assignments through that phase.
The violation established above consequently occurs on this fixed input sequence. Hence, the lower bound holds even against an oblivious adversary.

Finally, observe that the constructed instance satisfies all the restrictions stated in the theorem. There are two agents with equal budgets $\bud_1=\bud_2=1$; every good has the same size
$1/Q$; valuations are common across agents; and the horizon is fixed to $\horizon=3QL$ and announced in advance. Thus, the impossibility holds even under assumptions~\textup{(1)}--\textup{(5)}.
\end{proof}

\subsection{Proof of Corollary~\ref{cor:kappa}}
\label{app:proof:cor:kappa}
\CardinalityImpossibility*

\begin{proof}
Apply \Cref{cor:exact-fefk} with the value $Q$
chosen in the proof of \Cref{thm:main-lb}, and let
$q=Q$. Since every good has size $1/Q$ and every budget
is $1$, a set $S$ is feasible for an agent if and only if
$|S|\le Q$. For an instance $I$, recall that its \emph{maximum feasible comparison
cardinality} is $\kappa(I) := \max_{i\in\N} \max\{ |S|: S\subseteq G,\size_i(S)\le\bud_i\}$. Hence, every hard instance satisfies
$\kappa(I)=Q=q$, and the same construction establishes
the claimed impossibility.
\end{proof}

\section{Omitted Proofs for Section~\ref{sec:greedy}}
\label{app:omitted:greedy}

\subsection{Proof of Theorem~\ref{thm:greedy-common}}
\label{app:proof:thm:greedy-common}
\DensitySpreadGuarantee*

\begin{proof}
Fix an agent $i$, a recipient $r$, and a nonempty feasible
set $S\subseteq A_r$. If $\val(S)=0$, then
$\val(S\setminus\{x\})=0$ for every $x\in S$, and the
claim is immediate. Assume therefore that $\val(S)>0$.

Let $f$ be the \textit{last-arriving} good in $S$ with
$\val(f)>0$, and set $x=f$. Every positively valued
good in $S\setminus\{f\}$ arrived before $f$, while
zero-value goods do not affect the value of
$S\setminus\{f\}$ under $\val(\cdot)$. We distinguish two cases according to
whether $f$ was feasible for agent $i$ when it arrived. For the analysis below, we write $A_h^{<f}$ for the bundle
held by agent $h$ immediately before processing a good
$f$.

\paragraph{Case 1: $f$ was feasible for agent $i$ when it arrived.} That is, suppose that $\size_i(A_i^{<f})+\size_i(f)\le\bud_i$.
Then, $i\in F_f$. Since $\val(f)>0$  and
$F_f\ne\varnothing$,
\textsc{Positive-GreedyFit} assigned $f$ to an agent rather
than to charity, so $r=j$ for some agent $j\in\N$ and thus $f\in A_j$.
Since both $i$ and $j$ were feasible when $f$
arrived, and $f$ was assigned to agent $j$, the greedy min-value choice in line \ref{state:min} implies $\val(A_j^{<f})
\le
\val(A_i^{<f})$.
Every positively valued good in $S\setminus\{f\}$ lies in $A_r=A_j$ and arrived
before $f$, and hence belongs to $A^{<f}_j$. Therefore
\[
\val(S\setminus\{f\})
\le
\val(A_j^{<f})
\le
\val(A_i^{<f})
\le
\val(A_i).
\]
Thus
\[
\val(A_i)
\ge
\val(S\setminus\{f\})
\ge
\spread_i^{-1}\val(S\setminus\{f\}),
\]
which is stronger than required (the factor satisfies $1\ge\spread_i^{-1}$).

\paragraph{Case 2: $f$ was not feasible for agent $i$ when it arrived.} That is, suppose instead that $\size_i(A_i^{<f})+\size_i(f)>\bud_i$.
Since $S$ is feasible for agent $i$ and $f\in S$,
\[
\size_i(S\setminus\{f\})
\le
\bud_i-\size_i(f)
<
\size_i(A_i^{<f}).
\]

Since \textsc{Positive-GreedyFit} assigns a good to an
agent only if it has positive value, every good in
$A_i^{<f}$ has density at least
$\densdown_i$. Hence
\[
\val(A_i^{<f})
=
\sum_{g\in A_i^{<f}}
\dens_i(g)\size_i(g)
\ge
\densdown_i\,
\size_i(A_i^{<f}).
\]
Likewise,
\[
\val(S\setminus\{f\})
\le
\densup_i\,
\size_i(S\setminus\{f\}),
\]
since every positively valued good in
$S\setminus\{f\}$ has density at most
$\densup_i$, while zero-value goods contribute nothing.
Combining these inequalities yields
\[
\val(S\setminus\{f\}) \le
\densup_i\,
\size_i(S\setminus\{f\})<
\densup_i\,
\size_i(A_i^{<f}) \le
\frac{\densup_i}{\densdown_i}
\val(A_i^{<f})=
\spread_i\,
\val(A_i^{<f}) \le
\spread_i\,
\val(A_i).
\]
Therefore, rearranging gives us
\[
\val(A_i)
\ge
\spread_i^{-1}
\val(S\setminus\{f\}).
\]

The desired inequality holds in both cases. Since
$\spread_i\le\spread$ for every agent $i$, the
allocation is $\spread^{-1}$-FEF$1$. If
$\spread=1$, then every
$\spread_i=1$, and the guarantee becomes exact
FEF$1$.
\end{proof}

\subsection{Proof of Theorem~\ref{thm:scaled}}
\label{app:proof:thm:scaled}
Common valuations require all agents to agree on the
value of every good. A natural relaxation allows agents to differ only
by an individual scaling factor while preserving the common ranking of
goods. Formally, each agent $i$ has valuation
$\val_i(g)=\beta_i\val(g)$, where $\val$ is a common base
valuation and $\beta_i>0$ is an agent-specific scaling factor.

The same algorithm from Section \ref{sec:Common valuations} can be applied without modification by using the
common base valuation $\val$ in the greedy algorithm. Indeed, multiplying
all values of a single agent by the positive constant $\beta_i$ does
not change the relative ordering of that agent's goods. Moreover, the
density spread is unchanged, since $\Gamma_i = \frac{\rho_i^{\max}}{\rho_i^{\min}}$ by \Cref{def:density-spread}
and the common factor $\beta_i$ cancels. Thus, we run
\textsc{Positive-GreedyFit} exactly as in the common-valuation
setting, using the base valuation $\val$.

\begin{restatable}{theorem}{ScaledCommonGuarantee}
\label{thm:scaled}
Under scaled-common valuations, \textsc{Positive-GreedyFit} run with the base
valuation $v$, or with any positive scalar multiple of $v$, guarantees
$\Gamma^{-1}$-FEF1.
\end{restatable}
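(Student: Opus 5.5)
The plan is to reduce \Cref{thm:scaled} to the proof of \Cref{thm:greedy-common}, checking that each step transfers once agent $i$'s values are written as $\val_i=\beta_i\val$.

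First, I note the invariance of the algorithm. Running \textsc{Positive-GreedyFit} on $c\val$ for any $c>0$ produces exactly the same allocation as running it on $\val$. The zero-value test in line~\ref{state:charity} is unchanged, the feasible set $F_g$ does not involve values, and the argmin in line~\ref{state:min} is invariant under positive scaling, with the same fixed tie-breaking. So it suffices to treat the run on $\val$ itself.

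Second, I fix an agent $i$, a recipient $r$, and a feasible $S\subseteq A_r$ with $\val_i(S)>0$, equivalently $\val(S)>0$. I let $f$ be the last-arriving good in $S$ with $\val(f)>0$ and remove $X=\{f\}$. Since $\val_i=\beta_i\val$, the target inequality $\val_i(A_i)\ge\Gamma^{-1}\val_i(S\setminus\{f\})$ is equivalent to $\val(A_i)\ge\Gamma^{-1}\val(S\setminus\{f\})$, because $\beta_i$ cancels. I then redo the two cases of \Cref{thm:greedy-common} with the base valuation.

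\emph{Case 1: $f$ was feasible for $i$ on arrival.} Then $f$ went to some agent $j$ with $\val(A_j^{<f})\le\val(A_i^{<f})$. This case transfers verbatim and gives an exact bound.

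\emph{Case 2: $f$ was not feasible for $i$ on arrival.} Then $\size_i(S\setminus\{f\})<\size_i(A_i^{<f})$. I use densities of the base valuation with respect to $\size_i$, namely $\val(g)/\size_i(g)=\dens_i(g)/\beta_i$. Their max/min ratio over positively valued goods individually feasible for $i$ is exactly $\Gamma_i$.

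The one point to check, and the only real obstacle, is the lower bound on $\val(A_i^{<f})$. The argument of \Cref{thm:greedy-common} needs every good in $A_i^{<f}$ to have positive value and to be individually feasible for $i$, so that its density is at least $\densdown_i/\beta_i$. Positivity holds since zero-base-value goods go to charity. Individual feasibility holds because $i$ actually received the good within budget $\bud_i$. So both density bounds apply, and the chain gives $\val(S\setminus\{f\})<\Gamma_i\val(A_i^{<f})\le\Gamma_i\val(A_i)$.

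Multiplying through by $\beta_i$ and using $\Gamma_i\le\Gamma$ completes the argument.
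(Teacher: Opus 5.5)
Your proposal is correct and follows the paper's argument: apply the analysis of \Cref{thm:greedy-common} to the base valuation $\val$, note that $\beta_i$ cancels from both sides of agent $i$'s comparison and that positive scaling leaves $\spread_i$ unchanged, then multiply through by $\beta_i$. The only difference is that you re-verify the two cases, and the algorithm's invariance under rescaling $\val$, explicitly rather than invoking \Cref{thm:greedy-common} as a black box.
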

\begin{proof}
Apply Theorem~\ref{thm:greedy-common} to the common base valuation $v$.
For every agent $i$, recipient $r$, and nonempty feasible set
$S\subseteq A_r$, there is a good $x\in S$ such that $v(A_i)\ge \Gamma_i^{-1}v(S\setminus\{x\})$. Since $v_i=\beta_i v$ and $\beta_i>0$, multiplying both sides by $\beta_i$
gives $v_i(A_i)\ge \Gamma_i^{-1}v_i(S\setminus\{x\})$.
Positive scaling does not change which goods are individually feasible or
the density ratio $\Gamma_i$. Therefore, the allocation is
$\Gamma^{-1}$-FEF1.
\end{proof}

\subsection{Pseudocode of Hetero-GreedyFit}
\label{app:hgf}

\begin{algorithm}[h!]
\caption{\textsc{Hetero-GreedyFit}}
\label{alg:hgf}
\begin{algorithmic}[1]
\State Initialize $A_i\gets\varnothing$ for all $i\in\N$ and
       $A_{\charity}\gets\varnothing$.
\For{each arriving good $g$}
    \State $F_g^{+}=\{i\in\N : \size_i(A_i)+\size_i(g)\le\bud_i \ \& \ \val_i(g)>0 \}$.\label{state:eligible}
    \If{$F_g^{+}=\varnothing$} {assign $g$ to charity.}
    \Else { assign $g$ to an arbitrary $i^*\in \arg\min_{i\in F_g^{+}} \val_i(A_i)$, using a fixed tie-breaker. \label{state:min-hetero}}
    \EndIf
\EndFor
\State \Return $A=(A_1,\dots,A_n,A_{\charity})$.
\end{algorithmic}
\end{algorithm}

\subsection{Proof of Theorem~\ref{thm:hetero}}
\label{app:proof:thm:hetero}
\HeterogeneousGuarantee*

\begin{proof}
Fix an agent $i$, a recipient $r$, and a nonempty
feasible set $S\subseteq A_r$. If
$\val_i(S)=0$, then the claim is immediate. Assume
therefore that $\val_i(S)>0$.

Let $f$ denote the last-arriving good in $S$ with
$\val_i(f)>0$, and set $x=f$. Every positively
valued good in $S\setminus\{f\}$ arrived before
$f$. We distinguish two cases according to whether
$f$ was feasible for agent $i$ when it arrived.

\paragraph{Case 1: $f$ was feasible for agent $i$ when it arrived.} That is,
suppose that $\size_i(A_i^{<f})+\size_i(f)\le\bud_i$.
Then $i\in F_f^+$ since
$\val_i(f)>0$ and $f$ was feasible for agent $i$ when it arrived. Therefore, the set $F_f^+$ is nonempty, so
\textsc{Hetero-GreedyFit} assigns $f$ to an agent,
say $j$. Consequently, $r=j$ and
$f\in A_j$. Moreover,
$\val_j(f)>0$, since only positively valuing
agents belong to $F_f^+$ (i.e., \textsc{Hetero-GreedyFit} assigns $f$
to $j$ only if $j\in F_f^+$). 

Since both $i$ and $j$ belong to $F_f^+$,
the min-value choice in line \ref{state:min-hetero} gives
\begin{equation}
    \label{eq:hetgreedy}
    \val_j(A_j^{<f})
    \le
    \val_i(A_i^{<f}).
\end{equation}

Now, consider any good
$h\in S\setminus\{f\}$ with
$\val_i(h)>0$. Then $h\in A_r=A_j$ and
$h$ arrived before $f$, so $h\in A^{<f}_j$. Moreover, $h$ was assigned to $j$ only
if $j\in F_h^+$, so $\val_j(h)>0$. Therefore, the definition of $\chi_i$ implies $\val_i(h) \le \chi_i\val_j(h)$.
Summing over the (positively $i$-valued) goods of
$S\setminus\{f\}$, and using the fact that these goods are distinct members of $A^{<f}_j$
with nonnegative $\val_j$,
\[
\val_i(S\setminus\{f\}) =
\sum_{\substack{
h\in S\setminus\{f\}\\
\val_i(h)>0}}
\val_i(h) \le
\chi_i
\sum_{\substack{
h\in S\setminus\{f\}\\
\val_i(h)>0}}
\val_j(h) \le
\chi_i
\val_j(A_j^{<f}) \stackrel{\eqref{eq:hetgreedy}}{\le}
\chi_i
\val_i(A_i^{<f}) \le
\chi_i
\val_i(A_i).
\]
Hence
\[
\val_i(A_i)
\ge
\chi_i^{-1}
\val_i(S\setminus\{f\})
\ge
\Lambda_i^{-1}
\val_i(S\setminus\{f\}).
\]

\paragraph{Case 2: $f$ was not feasible for agent $i$ when it arrived.} That is,
suppose instead that $\size_i(A_i^{<f})+\size_i(f)>\bud_i$.
Exactly as in the proof of
Theorem~\ref{thm:greedy-common},
\[
\size_i(S\setminus\{f\})
<
\size_i(A_i^{<f}).
\]
\textsc{Hetero-GreedyFit} assigns a good to $i$ only when it has positive value for
agent $i$, so every good
in $A^{<f}_i$ has positive value for
agent $i$ and density at least $\densdown_i$.
Hence,
\[
\val_i(A_i^{<f})
\ge
\densdown_i
\size_i(A_i^{<f}),
\]
whereas
\[
\val_i(S\setminus\{f\})
\le
\densup_i
\size_i(S\setminus\{f\}).
\]
Therefore
\[
\val_i(S\setminus\{f\})\ <\ \densup_i\,\size_i(A^{<f}_i)
\ \le\ \spread_i\,\val_i(A^{<f}_i)\ \le\ \spread_i\,\val_i(A_i),
\]
which implies
\[
\val_i(A_i)
\ge
\spread_i^{-1}
\val_i(S\setminus\{f\})
\ge
\Lambda_i^{-1}
\val_i(S\setminus\{f\}).
\]

The desired inequality holds in both cases. Since
$\Lambda_i\le\Lambda$ for every agent, we have $\Lambda_i^{-1} \ge \Lambda^{-1}$,
which implies that
\[
\val_i(A_i)\ge \Lambda^{-1}\val_i(S\setminus\{f\}).
\]
Taking the worst case over $i$, $r$, and $S$ shows that the
allocation is $\Lambda^{-1}$-FEF$1$.
\end{proof}

\section{Omitted Proofs for Section~\ref{sec:aug}}
\label{app:omitted:aug}

\subsection{Pseudocode of \textsc{Aug-GreedyFit}$(\eps)$}
\label{app:aug-greedyfit}

\begin{algorithm}[h!]
\caption{\textsc{Aug-GreedyFit}$(\eps)$}
\label{alg:aug-gf}
\begin{algorithmic}[1]
\State Initialize $A_i\gets\varnothing$ for all $i\in\N$ and
       $A_{\charity}\gets\varnothing$.
\For{each arriving good $g$}
    \If{$\val(g)=0$} {assign $g$ to charity.}
    \Else { let
        $ F_g := \{ i\in N: s_i(g)\le B_i \ \text{and}\ s_i(A_i)+s_i(g)\le(1+\varepsilon)B_i\}$.}
        \If{$F_g=\varnothing$} {assign $g$ to charity.}
        \Else { Assign $g$ to an arbitrary
            $i^*\in
            \arg\min_{h\in F_g}\val(A_h)$,
            using a fixed tie-breaking order.}
        \EndIf
    \EndIf
\EndFor
\State \Return
$
A=(A_1,\dots,A_n,A_{\charity})
$.
\end{algorithmic}
\end{algorithm}

\subsection{Proof of Theorem~\ref{thm:aug-ub}}
\label{app:proof:thm:aug-ub}
\AugmentationUpperBound*

\begin{proof} 
The size bound $s_i(A_i)\le(1+\varepsilon)B_i$ follows directly from the feasibility test in the algorithm. It therefore remains to prove the
fairness guarantee.

Fix an agent $i$, a recipient $r$, and a nonempty
set $S\subseteq A_r$ feasible for agent $i$ under the original budget,
i.e., $\size_i(S)\le\bud_i$. 
If $v(S)=0$, choose any $x\in S$. Then
\[
v(A_i)\ge 0=v(S\setminus\{x\}),
\]
so the required inequality holds. Henceforth assume $v(S)>0$.
Let $f$ denote the
last-arriving positively valued good in $S$, and set
$x=f$. Every other positively valued good in
$S\setminus\{f\}$ arrived before $f$. As in the proof of
\Cref{thm:greedy-common}, we distinguish two cases
according to whether $f$ was feasible for agent $i$
under the algorithmic budget. As before, let
$A_h^{<f}$ be the bundle held by agent $h$ immediately
before processing a good $f$.

\paragraph{Case 1: $f$ was feasible for agent $i$ under the augmented budget at its arrival.}
That is, suppose that $\size_i(A_i^{<f})+\size_i(f)\le(1+\eps)\bud_i$.
Since $f\in S$ and $S$ is feasible for agent $i$ under the
original budget, we also have $s_i(f)\le B_i$. Hence both
eligibility conditions in Algorithm~\ref{alg:aug-gf} hold for
agent $i$, so $i\in F_f$. The min-value rule therefore assigned
$f$ to some agent $j$ with $\val(A_j^{<f})\le\val(A_i^{<f})$.
Every positively valued good of $S\setminus\{f\}$ lies in $A_r=A_j$
and arrived before $f$, and therefore belongs to $A_j^{<f}$. Hence,
\[
\val(S\setminus\{f\})\le\val(A_j^{<f})\le\val(A_i^{<f})\le\val(A_i),
\]
which is the claim with factor $1$.
Note that this is stronger than the desired guarantee.

\paragraph{Case 2: $f$ was not feasible for agent $i$ under the augmented budget at its arrival.}
That is, suppose instead that $\size_i(A_i^{<f})+\size_i(f)>(1+\eps)\bud_i$.
Then
\[
\size_i(A_i^{<f})
>
(1+\eps)\bud_i-\size_i(f).
\]
Since $S$ is feasible under the original budget,
\[
    s_i(S\setminus\{f\})\le B_i-s_i(f).
\]
If $v(S\setminus\{f\})=0$, the required inequality is
immediate. We may therefore assume that
$v(S\setminus\{f\})>0$. In particular, since every good has
positive size, $s_i(f)<B_i$.

Let $c:=s_i(f)$. Since $f\in S$ and $S$ is feasible for $i$, we have
$s_i(f)\le B_i$. Therefore, if $f$ was not eligible for $i$,
the failed eligibility condition must be the total online budget
condition:
\[
    s_i(A_i^{<f})>(1+\varepsilon)B_i-c.
\]
Every good in $A_i^{<f}$ has positive value, so
\[
    v(A_i)
    \ge v(A_i^{<f})
    \ge \rho_i^{\min}s_i(A_i^{<f})
    >
    \rho_i^{\min}((1+\varepsilon)B_i-c).
\]
On the other hand,
\[
    v(S\setminus\{f\})
    \le
    \rho_i^{\max}s_i(S\setminus\{f\})
    \le
    \rho_i^{\max}(B_i-c).
\]
Therefore,
\[
    v(A_i) >
    \frac{\rho_i^{\min}}{\rho_i^{\max}}
    \frac{(1+\varepsilon)B_i-c}{B_i-c}
    v(S\setminus\{f\}) =
    \Gamma_i^{-1}
    \left(1+\frac{\varepsilon B_i}{B_i-c}\right)
    v(S\setminus\{f\})\ge
    \Gamma_i^{-1}(1+\varepsilon)
    v(S\setminus\{f\}).
\]

In both cases,
\[
\val(A_i)
\ge
\min\{1,\spread_i^{-1}(1+\eps)\}
\val(S\setminus\{f\}).
\]
Since
$\spread_i\le\spread$,
\[
\min\{1,\spread_i^{-1}(1+\eps)\}
\ge
\min\left\{1,\frac{1+\eps}{\spread}\right\},
\]
and taking the worst case over
$i$, $r$, and $S$
proves the stated
$\min\{1,(1+\eps)/\spread\}$-FEF$1$
guarantee.

Finally, if
$\eps\ge\spread-1$, then
\[
\spread_i^{-1}(1+\eps)
\ge
\spread^{-1}(1+\eps)
\ge
1
\]
for every agent $i$. Hence, both cases yield factor
$1$, and the allocation is exactly
FEF$1$.
\end{proof}

\subsection{Proof of Theorem~\ref{thm:aug-lb}}
\label{app:proof:thm:aug-lb}
\AugmentationLowerBound*

\begin{proof}
Fix $k$, $\alpha$, and $\varepsilon$. As in the proof of \Cref{thm:main-lb}, we first use a phase
simulation to identify a hard stopping phase and then fix the
resulting sequence before its actual execution.
Choose an integer
$Q>k$ sufficiently large that
\[
\tau
:=
\left\lfloor
\frac{\alpha(Q-k)}{4}
\right\rfloor
\ge 1.
\]
Every good will have size $1/Q$. Under online budget
$1+\varepsilon$, each agent can hold at most $M:=\left\lfloor(1+\varepsilon)Q\right\rfloor$ goods.

Choose an integer $L$ satisfying $L\tau>M$.
Each phase will contain $P:=2M+Q$ goods, and the known horizon is $T:=LP$.

Let $V_1:=1$. For every $t\ge 2$, choose $V_t>0$ so that
\[
M\sum_{r<t}V_r
<
\frac{\alpha(Q-k)}{4}V_t.
\tag{1}
\label{eq:extra-budget-phase-growth}
\]
For each $t\in[L]$, also choose $d_t>0$ so that
\[
Md_t
<
\frac{\alpha(Q-k)}{4}V_t.
\tag{2}
\label{eq:extra-budget-dummy}
\]

In phase $t$, release $P$ goods, each having common value $V_t$
and common size $1/Q$. After the phase, check whether each agent
has received at least $\tau$ goods from that phase. If some agent
has received fewer than $\tau$ such goods, fill every remaining
round with goods of common value $d_t$ and common size $1/Q$.

First observe that charity receives at least $Q$ goods from every
completed phase. Indeed, the two agents together can hold at most
$2M$ goods over the entire execution, whereas the phase contains
$2M+Q$ goods.

Suppose that after phase $t$, agent $i$ has received fewer than
$\tau$ goods from that phase. Let $S$ be any $Q$ phase-$t$
goods in charity. Then $S$ is feasible under the original budget,
because $s_i(S)=Q\cdot\frac1Q=1$.
For every $X\subseteq S$ with $|X|\le k$,
\[
v(S\setminus X)\ge (Q-k)V_t.
\tag{3}
\label{eq:extra-budget-charity-value}
\]

We now bound the final value of agent $i$. Its phase-$t$ goods
contribute less than
\[
\tau V_t
\le
\frac{\alpha(Q-k)}{4}V_t.
\]
The agent holds at most $M$ goods in total, so its goods from
earlier phases contribute at most
\[
M\sum_{r<t}V_r
<
\frac{\alpha(Q-k)}{4}V_t
\]
by \eqref{eq:extra-budget-phase-growth}. Its later dummy goods
contribute at most
\[
Md_t
<
\frac{\alpha(Q-k)}{4}V_t
\]
by \eqref{eq:extra-budget-dummy}. Therefore,
\[
v(A_i)
<
\frac{3\alpha(Q-k)}{4}V_t
<
\alpha(Q-k)V_t.
\]
Together with \eqref{eq:extra-budget-charity-value}, this gives
\[
v(A_i)<\alpha v(S\setminus X)
\]
for every $X\subseteq S$ with $|X|\le k$. Hence the final
allocation is not $\alpha$-FEF$k$.

It remains to show that such a phase must occur. If it never occurs,
then each agent receives at least $\tau$ goods in every phase.
After $L$ phases, each agent therefore holds at least
$L\tau>M$ goods, contradicting the definition of $M$.

To see that the hard input may be fixed in advance, simulate the
phase construction and let $t^\star$ be the first phase at which an
agent receives fewer than $\tau$ goods from that phase. Fix the
sequence consisting of phases $1,\ldots,t^\star$, followed by
exactly $P(L-t^\star)$ dummy goods of value $d_{t^\star}$ and size
$1/Q$. A fresh execution has the same announced horizon and the same
arrival prefix through phase $t^\star$, so determinism forces the
same assignments and hence the same violation. Thus, the lower bound
holds against an oblivious adversary.

All goods have common value across the two agents and uniform size
$1/Q$, the original budgets are equal to $1$, and the announced
horizon is $T=LP$. This proves the theorem.
\end{proof}

\section{Small-Good Limits under Common Valuations and Sizes
Common Across Agents}\label{sec:density}\label{sec:sharp-frontier}
%====================================================================

Sections~\ref{sec:greedy} and~\ref{sec:aug} establish broad greedy guarantees that apply to arbitrary item sizes, but they leave a quantitative gap
between the corresponding upper and lower bounds. In this appendix, we
show that this gap disappears under a more structured model consisting
of common valuations, sizes common across agents, and globally small
goods. In this setting, the limiting deterministic approximation frontier can be
characterized exactly.
Our analysis pairs a threshold-based algorithm with a matching
multi-agent lower bound, yielding a sharp limiting characterization of
the optimal deterministic guarantee. The upper bound controls
charity comparisons through the threshold policy, while balanced
acceptance ensures exact FEF$1$ for all agent--agent comparisons.

The frontier results in this appendix are established for a structured
class of instances that combines the assumptions introduced
individually in earlier sections.

\begin{definition}[Structured class with bounded density spread]
\label{def:structured-class}
For $\gamma\ge 1$, let $\mathcal I^{\mathrm{com}}_\gamma$ be the class
of instances satisfying:
\begin{enumerate}
    \item valuations are common, so $v_i=v$ for every agent $i$;
    \item sizes are common across agents, so $s_i=s$ for every agent $i$;
    \item every original budget equals $1$; and
    \item the realized density spread satisfies $\Gamma\le\gamma$ according
    to Definition~\ref{def:density-spread}.
\end{enumerate}
The algorithm is told the valid bound $\gamma$. The horizon model is stated
separately in each theorem.
\end{definition}
Throughout this appendix, $\Gamma$ denotes the realized density
spread of the instance, whereas $\gamma$ denotes the valid upper
bound supplied to the algorithm. The case $\gamma=1$ is already
covered by \Cref{thm:greedy-common}, which gives exact FEF$1$ for
arbitrary item sizes. The frontier theorems below therefore focus on
$\gamma>1$.

All original budgets are normalized to $1$, and sizes are common
across agents. Hence, the globally $\sigma_{\max}$-small condition
reduces to
\[
s(g)\le \sigma_{\max}
\quad
\text{for every }g\in G.
\]
The upper bounds hold after every prefix and therefore apply under
both known and unknown horizons. For each deterministic algorithm,
the lower bounds provide a fixed finite hard sequence under either
horizon model. All algorithms are required to respect the relevant
online budget at every prefix, and all results hold for arbitrary
$n\ge1$. The lower-bound construction realizes density spread
exactly $\gamma$, so the frontier is not an artefact of supplying a
loose spread bound.

\paragraph{Density normalization.}
If there is no positively valued good that is individually feasible
under the original unit budget, then every feasible comparison set
has value zero and all claims are immediate. Otherwise, define
\[
\underline\rho
:=
\min\{\rho(g):v(g)>0,\ s(g)\le1\}.
\]
For the analysis, replace the common value function $v$ by
$v/\underline\rho$. Since valuations and sizes are common across
agents, all agents share the same density function. This single
global scaling sends every positive density of an individually
feasible good to the interval $[1,\Gamma]\subseteq[1,\gamma]$.
It scales both sides of the threshold test
$v(A_i)<\alpha\rho(g)$ by the same positive constant and preserves
the ordering of current bundle values. Consequently, it preserves
the entire execution of \textsc{Threshold}($\alpha$), including its
minimum-current-value tie-breaking rule. The normalization is only
an analytical device; the algorithm requires no knowledge of
$\underline\rho$.

We now introduce the threshold policy Threshold$(\alpha)$
(Algorithm~\ref{alg:thr}), used to establish the upper-bound frontier.
The frontier upper bound is achieved by a threshold-based allocation
algorithm. An arriving good is assigned only to agents whose current
bundle value remains below a density-dependent threshold, whose
remaining online budget is sufficient to accommodate the good, and for
whom the good is individually feasible under the original unit budget.
Among all such eligible agents, the algorithm assigns the good to one
whose current bundle has minimum value.

\begin{algorithm}[t]
\caption{\textsc{Threshold}($\alpha$) with online budget $B'$}
\label{alg:thr}
\begin{algorithmic}[1]
\State Initialize $A_i\leftarrow\varnothing$ for all $i\in N$
and $A_{\charity}\leftarrow\varnothing$.
\For{each arriving good $g$}
    \State Let
        $P_g
        :=
        \{
        i\in N:
        v(g)>0,\ s(g)\le1,\ 
        s(A_i)+s(g)\le B',
        v(A_i)<\alpha\rho(g)
        \}$.
    \If{$P_g=\varnothing$}
        \State Assign $g$ to charity.
    \Else
        \State Assign $g$ to an agent $i^*\in\arg\min_{i\in P_g}v(A_i)$         using a fixed tie-breaking order.
    \EndIf
\EndFor
\end{algorithmic}
\end{algorithm}

The key property of Threshold$(\alpha)$ is that an eligible
agent never exhausts her remaining budget, provided every
item is sufficiently small. The following lemma quantifies
this residual-capacity invariant, which underlies all of the
upper-bound results in this appendix.

\begin{lemma}[Capacity and no filling]
\label{lem:capacity-no-fill}
Assume that every positively valued good that is individually
feasible under the original unit budget has density in
$[1,\gamma]$, and that every good assigned to agent $i$ has size at most
$\sigma_{\max}$. Run \textsc{Threshold}($\alpha$) with
$\alpha\in(0,1]$ and online budget $B'\ge1$. Then $s(A_i)
\le
\alpha(1+\ln\gamma)+(1+\gamma)\sigma_{\max}$.
Consequently, if
\begin{equation}
\alpha(1+\ln\gamma)+(2+\gamma)\sigma_{\max}
\le B',
\label{eq:threshold-room}
\end{equation}
then, for every prefix $t$, $s(A_i^t)+\sigma_{\max}\le B'$.
Hence every arriving good of size at most $\sigma_{\max}$
fits in agent $i$'s remaining online budget.
\end{lemma}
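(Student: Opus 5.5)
The plan is an online-knapsack style potential argument that tracks agent $i$'s accumulated size as a function of her accumulated value. List the goods assigned to $i$ in arrival order as $g_1,\dots,g_m$. Let $w_{j-1}$ be the value of $i$'s bundle just before $g_j$ is assigned and $w_j=w_{j-1}+v(g_j)$ the value just after, so $w_0=0$.

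\textbf{Two facts about each assigned good.} First, each assigned good has positive value, is individually feasible, and so has density $\rho(g_j)\in[1,\gamma]$. Hence $s(g_j)=v(g_j)/\rho(g_j)\le v(g_j)$, and $v(g_j)\le\gamma s(g_j)\le\gamma\sigma_{\max}$. Second, the threshold test forces $w_{j-1}<\alpha\rho(g_j)$, so $\rho(g_j)>w_{j-1}/\alpha$ and $s(g_j)<\alpha\,v(g_j)/w_{j-1}$. Combining the two,
\[
s(g_j)\le v(g_j)\,\phi'(w_{j-1}),\qquad \phi'(w):=\min\{1,\alpha/w\}.
\]

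\textbf{Potential function.} Define the concave potential $\phi(w)=w$ for $w\le\alpha$ and $\phi(w)=\alpha+\alpha\ln(w/\alpha)$ for $w\ge\alpha$. Its derivative is exactly $\min\{1,\alpha/w\}$, which is nonincreasing and takes values in $(0,1]$. Concavity gives $\phi(w_j)-\phi(w_{j-1})\ge v(g_j)\phi'(w_j)$, so
\[
s(g_j)\le \phi(w_j)-\phi(w_{j-1})+v(g_j)\bigl(\phi'(w_{j-1})-\phi'(w_j)\bigr).
\]
Summing over $j$, the first terms telescope to $\phi(w_m)$. For the error terms I use $v(g_j)\le\gamma\sigma_{\max}$, and the differences $\phi'(w_{j-1})-\phi'(w_j)$ are nonnegative and telescope to at most $\phi'(0)=1$. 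So the total error is at most $\gamma\sigma_{\max}$, and
\[
s(A_i)\le\phi(w_m)+\gamma\sigma_{\max}.
\]

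\textbf{Bounding the final value.} The last assigned good passed the threshold, so $w_{m-1}<\alpha\gamma$ and hence $w_m<\alpha\gamma+\gamma\sigma_{\max}$. If $w_m\le\alpha$, then $\phi(w_m)\le\alpha$ and we are done. Otherwise, using $\ln(1+x)\le x$,
\[
\phi(w_m)\le\alpha+\alpha\ln\gamma+\alpha\ln\!\bigl(1+\sigma_{\max}/\alpha\bigr)\le\alpha(1+\ln\gamma)+\sigma_{\max}.
\]
Together with the error term this gives $s(A_i)\le\alpha(1+\ln\gamma)+(1+\gamma)\sigma_{\max}$. Since $\phi$ is increasing, the same bound holds for every prefix $A_i^t$. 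Adding $\sigma_{\max}$ and invoking \eqref{eq:threshold-room} yields $s(A_i^t)+\sigma_{\max}\le B'$, which is the consequence claimed in the lemma.

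\textbf{Main obstacle.} The delicate step is the discretization error: a naive per-good bound $s(g_j)\le\alpha v(g_j)/w_{j-1}$ diverges when $w_{j-1}$ is small. The first linear branch of $\phi$ is there precisely to fix this. The error terms also have to telescope through $\phi'$ rather than be charged one overshoot per good; otherwise the constant in front of $\sigma_{\max}$ would come out larger than $1+\gamma$.
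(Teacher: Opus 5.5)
Your proof is correct and follows the paper's argument essentially step for step. Your potential $\phi$ is exactly $\int_0^w f$ for the paper's $f(w)=1/\max\{1,w/\alpha\}$, and your concavity inequality is the paper's comparison of the left Riemann sum with that integral. The two proofs also share the same telescoping error bound $\gamma\sigma_{\max}\bigl(f(0)-f(w_m)\bigr)\le\gamma\sigma_{\max}$ and the same two-case estimate of $\phi(w_m)$ via $w_m<\gamma(\alpha+\sigma_{\max})$ and $\ln(1+x)\le x$.
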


\begin{proof}
If no good is assigned to agent $i$, the claim is immediate.
Otherwise, let
$g_1,\dots,g_m$
denote the goods assigned to $i$, listed in their order of
acceptance. For each $t\in[m]$, denote
\[
s_t:=s_i(g_t),\quad
\rho_t:=\rho_i(g_t),\quad
u_t:=\rho_t s_t=v(g_t),
\]
so that $s_t\le\sigma_{\max}$ by assumption. Define $w_t:=\sum_{r=1}^t u_r$ and $w_0:=0$.
For each $t\in[m]$, define $\Delta_t:=w_t-w_{t-1}=u_t$.
When $g_t$ is assigned to $i$, the threshold condition implies $w_{t-1}<\alpha\rho_t$.
Since also $\rho_t\ge1$, $\rho_t
    \ge
    \max\left\{1,\frac{w_{t-1}}{\alpha}\right\}$.
Define $f(w):=
    \frac{1}{
        \max\{1,w/\alpha\}
    }$.
Then, the function $f$ is nonincreasing, and therefore $s_t
    =
    \frac{\Delta_t}{\rho_t}
    \le
    \Delta_t f(w_{t-1})$.
Summing over all accepted goods gives
\begin{equation}
\label{eq:left-riemann-capacity}
    s_i(A_i)
    \le
    \sum_{t=1}^m
    \Delta_t f(w_{t-1}).
\end{equation}
which is the left Riemann sum for $f$.
Indeed, we now compare this left Riemann sum with the corresponding
integral. For each $t$,
\[
    \Delta_t f(w_{t-1})
    =
    \int_{w_{t-1}}^{w_t}f(w)\,dw
    +
    \int_{w_{t-1}}^{w_t}
    (f(w_{t-1})-f(w))\,dw.
\]
Since $f$ is nonincreasing,
\[
    \int_{w_{t-1}}^{w_t}
    (f(w_{t-1})-f(w))\,dw \le
    \Delta_t
    (f(w_{t-1})-f(w_t))\le
    u_{\max}
    (f(w_{t-1})-f(w_t)),
\]
where $u_{\max}:=\max_{1\le t\le m}\Delta_t$.
Summing over $t$ telescopes:
\[
    \sum_{t=1}^m
    \Delta_t f(w_{t-1})
    \le
    \int_0^{w_m}f(w)\,dw
    +
    u_{\max}(f(0)-f(w_m)) \le
    \int_0^{w_m}f(w)\,dw+u_{\max}.
\]
Since $\rho_t\le\gamma$ and $s_t\le\sigma_{\max}$, $u_{\max}\le\gamma\sigma_{\max}$.
It remains to bound the integral by first bounding its upper integration limit
$w_m$.  The last accepted good satisfies $w_{m-1}<\alpha\rho_m\le\alpha\gamma$, and hence
\[
    w_m
    =
    w_{m-1}+\Delta_m
    <
    \alpha\gamma+\gamma\sigma_{\max}
    =
    \gamma(\alpha+\sigma_{\max}).
\]
Two cases remain. If $w_m\le\alpha$, then
\[
    \int_0^{w_m}f(w)\,dw
    =
    w_m
    \le
    \alpha
    \le
    \alpha(1+\ln\gamma)+\sigma_{\max}.
\]
Otherwise, if $w_m>\alpha$, then
\[
\begin{aligned}
    \int_0^{w_m}f(w)\,dw
    &=
    \alpha+
    \int_\alpha^{w_m}\frac{\alpha}{w}\,dw\\
    &=
    \alpha+\alpha\ln\frac{w_m}{\alpha}\\
    &<
    \alpha+
    \alpha\ln\left(
        \gamma\left(1+\frac{\sigma_{\max}}{\alpha}\right)
    \right)\\
    &=
    \alpha(1+\ln\gamma)
    +
    \alpha\ln\left(
        1+\frac{\sigma_{\max}}{\alpha}
    \right)\\
    &\le
    \alpha(1+\ln\gamma)+\sigma_{\max},
\end{aligned}
\]
where the last inequality uses $\ln(1+x)\le x$.
Combining this with
$u_{\max}\le\gamma\sigma_{\max}$ in
\eqref{eq:left-riemann-capacity} gives us
\[
    s_i(A_i)
    \le
    \alpha(1+\ln\gamma)
    +(1+\gamma)\sigma_{\max}.
\]
If condition~\eqref{eq:threshold-room} holds, then $s_i(A_i)\le B'-\sigma_{\max}$.

Since the occupied size is monotone over time, the same bound
holds at every prefix. Consequently, $s_i(A_i^t)+\sigma_{\max}\le B'$ 
for every round $t$, proving that every future arrival of
size at most $\sigma_{\max}$ remains feasible for agent
$i$.
\end{proof}

The lemma uses only the sizes and densities of goods assigned to an
agent. Since Threshold$(\alpha)$ rejects individually infeasible goods,
Definition~\ref{def:density-spread} supplies the required density bounds
for every accepted good.

The previous lemma ensures that every sufficiently small arriving good
fits for every agent. The next lemma then handles comparisons
between agents: assigning each accepted good to an eligible agent
with minimum current value maintains exact FEF1 between every
ordered pair of agents.

\begin{lemma}[Minimum-value assignment gives exact
agent-agent FEF1]
\label{lem:balanced-agent-envy}
Assume common additive valuations and sizes common across
agents. Run Threshold$(\alpha)$ with a common algorithmic
budget $B'$. Suppose that, whenever a good $g$ is assigned
to an agent at round $t$, it is feasible for every agent
immediately before the assignment; that is,
\begin{equation}
\label{eq:universal-arrival-feasibility}
    s(A_i^{t-1})+s(g)\le B'
    \quad
    \text{for every }i\in N.
\end{equation}
Then, after every prefix, every ordered pair of agents $i,j$
and every nonempty set $S\subseteq A_j$
that is feasible for agent $i$, satisfy the following property: there exists $x\in S$ such that $v(A_i)\ge v(S\setminus\{x\})$.
Consequently, every agent--agent comparison satisfies exact FEF$1$, for every
number of agents and independently of $\alpha$.
\end{lemma}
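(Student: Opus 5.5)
We will follow the template of Case~1 in the proof of \Cref{thm:greedy-common}; hypothesis~\eqref{eq:universal-arrival-feasibility} is exactly what rules out Case~2. Fix a prefix $t$, agents $i\neq j$, and a nonempty set $S\subseteq A_j^t$ feasible for $i$. If $v(S)=0$, any $x\in S$ works, since $v(A_i^t)\ge0$.

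Otherwise, let $x=f$ be the last-arriving good of $S$ with $v(f)>0$, and let $A_h^{<f}$ denote bundles immediately before $f$ is processed. Every positively valued good of $S\setminus\{f\}$ lies in $A_j$ and arrived before $f$. Such a good belongs to $A_j^{<f}$, because goods are never reassigned. Hence
\[
v(S\setminus\{f\})\le v(A_j^{<f}).
\]

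The key step is the comparison $v(A_j^{<f})\le v(A_i^{<f})$. When $f$ was assigned to $j$, we had $j\in P_f$, so $v(f)>0$, $s(f)\le1$, and $v(A_j^{<f})<\alpha\rho(f)$. Suppose, for contradiction, that $v(A_i^{<f})<v(A_j^{<f})$. Then $v(A_i^{<f})<\alpha\rho(f)$, so $i$ passes the threshold test. The conditions $v(f)>0$ and $s(f)\le1$ do not depend on the agent. By \eqref{eq:universal-arrival-feasibility}, $s(A_i^{<f})+s(f)\le B'$. Hence $i\in P_f$. The argmin rule would then forbid choosing $j$, since $i$ has strictly smaller current value. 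This contradiction gives $v(A_j^{<f})\le v(A_i^{<f})$.

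Note that sizes and values are common, so the threshold and the feasibility test transfer between agents. This transfer is the only delicate point, and it is also why the claim does not depend on $\alpha$: the threshold is monotone in the current value.

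We then conclude by monotonicity of bundles over time:
\[
v(S\setminus\{f\})\le v(A_j^{<f})\le v(A_i^{<f})\le v(A_i^t).
\]
This holds at every prefix $t$ after $f$'s arrival, and $f$ arrived by time $t$ because $f\in S\subseteq A_j^t$. Therefore the argument applies to every prefix and every number of agents, giving exact FEF$1$ for all agent--agent comparisons.
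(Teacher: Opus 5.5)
Your proof is correct and follows essentially the paper's route. At the moment a good is given to $j$, either $i$ was eligible, so the minimum-value rule gives $v(A_j^{<})\le v(A_i^{<})$, or, because hypothesis~\eqref{eq:universal-arrival-feasibility} rules out a budget failure, $i$ failed the threshold and so already had value at least $\alpha\rho>v(A_j^{<})$; your contradiction argument simply merges these two cases. The one difference is the removed good: the paper removes the last good $g^*$ assigned to $j$ overall, proving $v(A_i)\ge v(A_j\setminus\{g^*\})$ and then splitting on whether $g^*\in S$, whereas you remove the last good of $S$, which avoids that split (your $v(S)=0$ case is vacuous, since Threshold$(\alpha)$ only assigns positively valued goods, but harmless).
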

\begin{proof}
Fix a prefix, an ordered pair of agents $i,j$, and suppose
$A_j\neq\varnothing$; otherwise, there is no nonempty comparison set
$S\subseteq A_j$, and the claim is vacuous.

Let $g^*$ denote the last good assigned to agent $j$, and let $A_j^{<}:=A_j\setminus\{g^*\}$.
For each agent $h$, let $w_h^{<}$ denote the value of
agent $h$'s bundle immediately before $g^*$ was assigned.

Since $g^*$ was assigned to agent $j$, the threshold rule gives $w_j^{<}<\alpha\rho(g^*)$.
We first show that $w_i^{<}\ge w_j^{<}$.
If agent $i$ was eligible to receive $g^*$, then the
smallest-current-value tie-breaking rule selected agent $j$, implying $w_j^{<}\le w_i^{<}$.

Otherwise, agent $i$ was not eligible. By
\eqref{eq:universal-arrival-feasibility}, this cannot be due to
insufficient remaining online budget. The common original-feasibility
test $s(g^*)\le1$ also holds because $g^*$ was assigned to agent $j$,
and $v(g^*)>0$. Hence the only remaining eligibility condition that can
fail is the threshold condition. Therefore,
\[
w_i^{<}\ge\alpha\rho(g^*)>w_j^{<}.
\]

Thus, $w_i^{<}\ge w_j^{<}$ in either case. Since bundle values are
nondecreasing over time, we have the following at the considered prefix
\[
v(A_i)
\ge
w_i^{<}
\ge
w_j^{<}
=
v(A_j^{<}).
\]

Now, let $S\subseteq A_j$ be any nonempty set feasible for agent
$i$.
If $g^*\in S$, choose $x=g^*$. Then $S\setminus\{x\}\subseteq A_j^{<}$,
and therefore
\[
v(S\setminus\{x\})
\le
v(A_j^{<})
\le
v(A_i).
\]

Otherwise, $g^*\notin S$, so
$S\subseteq A_j^{<}$. Hence $v(S)
\le v(A_j^{<})
\le v(A_i)$.
Choosing any $x\in S$, non-negativity of the valuation implies
\[
v(S\setminus\{x\})
\le
v(S)
\le
v(A_i).
\]

Thus, in both cases, there exists a good $x\in S$ satisfying $v(A_i)\ge v(S\setminus\{x\})$,
proving exact agent--agent FEF$1$.
\end{proof}

The hypothesis of Lemma~\ref{lem:balanced-agent-envy} is verified
through Lemma~\ref{lem:capacity-no-fill}. In Theorems~\ref{thm:limiting-density-frontier}
and~\ref{thm:limiting-augmentation-frontier}, 
every good assigned to an agent satisfies the corresponding small-item condition,
and Lemma~\ref{lem:capacity-no-fill} guarantees that $s(A_i^{t-1})+\sigma_{\max}\le B'$ for every agent $i$ at every prefix. Consequently, every arriving
good that is eligible for assignment is feasible for every agent
immediately before it is assigned, establishing condition
\eqref{eq:universal-arrival-feasibility}.

\subsection{Matching Lower Bound}

The previous lemmas establish the ingredients needed for the frontier upper bound. We now show that this guarantee is optimal in both the known-horizon and unknown-horizon models. The hard sequence below is fixed and finite, and the proof specifies how it is used in either model.

\begin{theorem}
\label{thm:small-good-lower}
Fix $n\ge 1$, $\gamma>1$, $\alpha\in(0,1]$, and
$\varepsilon\ge 0$ such that $\alpha(1+\ln\gamma)>1+\varepsilon$.
For every $\sigma_{\max}>0$ and every deterministic online
algorithm that respects online budget $1+\varepsilon$, there is a
fixed finite instance such that:

\begin{enumerate}
    \item there are $n$ agents with original unit budgets;
    \item valuations are common;
    \item every good has the same size
    $\delta\le\sigma_{\max}$;
    \item the realized positive-density spread is exactly $\gamma$;
    \item the final allocation is not $\alpha$-FEF1.
\end{enumerate}

The statement holds both when the horizon is announced before the
first arrival and when it is not announced.
\end{theorem}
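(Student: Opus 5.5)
We adapt the classical online-knapsack lower bound with bounded value-to-size ratios \citep{zhou2008budget} to a charity-inclusive feasible-envy argument. All goods have size $\delta$, with $\delta$ small. The sequence proceeds in increasing-density stages along a fine geometric grid $1=\rho_0<\rho_1<\dots<\rho_m=\gamma$, where $\rho_{\ell+1}/\rho_\ell=1+\eta$. In stage $\ell$, the adversary releases many goods of density $\rho_\ell$: it releases enough that, whatever the algorithm does, charity ends up with at least $1/\delta$ of them. Since the $n$ agents can together hold at most $n(1+\eps)/\delta$ goods in total, releasing $(n(1+\eps)+1)/\delta$ goods per stage suffices. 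This gives every agent a feasible comparison set $S$ in charity of value $\rho_\ell$ (unit size). The $-1$ good in FEF$1$ costs only $\delta\rho_\ell\le\delta\gamma$, which vanishes as $\delta\to0$.

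Here is the key step. Fix an agent $i$. If after stage $\ell$ we have $v(A_i)<\alpha(\rho_\ell-\delta\gamma)$, we stop the sequence. We can stop immediately, or, in the known-horizon case, pad the remaining rounds with zero-value goods; these are sent to charity or held harmlessly, and they do not change the positive-density spread. The violation then persists, because later zero-value goods do not raise $v(A_i)$. The realized spread must equal exactly $\gamma$ even if we stop early. To ensure this, we open the sequence with one good of density $\gamma$ of tiny size. More cleanly, we include a single density-$\gamma$ good and a density-$1$ good at the start, and account for their value $O(\gamma\delta)$ in the error terms.

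Suppose no violation ever occurs. Then after each stage $\ell$, every agent satisfies $v(A_i)\ge \alpha\rho_\ell-O(\gamma\delta)$. Write $w_\ell$ for the value of agent $i$'s goods after stage $\ell$. The agent gained value $w_\ell-w_{\ell-1}$ at density at most $\rho_\ell$, so its size satisfies $s(A_i)\ge \sum_\ell (w_\ell-w_{\ell-1})/\rho_\ell$. Using $w_0\ge\alpha\cdot1$ (from stage $0$) and $w_\ell\ge\alpha\rho_\ell$, the sum is at least $\alpha+\sum_{\ell}\alpha(\rho_\ell-\rho_{\ell-1})/\rho_\ell$, up to rearranging via summation by parts. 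This tends to $\alpha(1+\ln\gamma)$ as $\eta\to0$. The summation by parts requires the lower bounds to be applied monotonically, which works because the weights $1/\rho_\ell$ are decreasing. This step is the main technical point.

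Choose $\eta$ and then $\delta\le\sigma_{\max}$ small enough that the total slack $O(\eta)+O(m\gamma\delta)$ is below $\alpha(1+\ln\gamma)-(1+\eps)>0$. Then $s(A_i)>1+\eps$, which contradicts the online budget. So the stopping condition triggers at some stage for some agent, and the stopped allocation is not $\alpha$-FEF$1$.

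Determinism converts the adaptive simulation into a fixed finite sequence, exactly as in the proof of \Cref{thm:main-lb}. For an unannounced horizon, we simply end the sequence at the stopping stage. For an announced horizon, we set $T$ to the maximum length and pad with zero-value goods. The algorithm's behaviour on the shared prefix is the same in both cases, since the announced $T$ is identical across all branches of the simulation.
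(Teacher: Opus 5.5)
Your proposal is correct and follows essentially the same route as the paper. Both use geometric density levels $\gamma^{j/M}$, release enough size-$\delta$ goods per level to leave a unit-size charity comparison set, stop at the first level where some agent falls below $\alpha$ times roughly that level's density, and bound the size of a never-failing agent from below by about $\alpha(1+\ln\gamma)$ via the same summation-by-parts argument (the paper's ascending-capacity lemma). Both also add a single density-$\gamma$ calibration good of size $\delta$ and use determinism with zero-value padding to fix the instance. Two minor remarks: your extra density-$1$ calibration good is unnecessary because level $0$ already has density $1$, and the $\delta$-error is in fact $O(\gamma\delta)$ rather than $O(m\gamma\delta)$, since the Abel coefficients sum to $1/\rho_0=1$.
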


\begin{proof}
For $M\ge 1$, define $S_M:=M\left(1-\gamma^{-1/M}\right)$.
Since $S_M\to\ln\gamma$, first choose $M$ and then choose an
integer $Q$ sufficiently large that, with $\delta:=\frac1Q$,
all three conditions
\begin{equation*}
\delta<\min\left\{\frac14,\sigma_{\max}\right\},
\label{eq:small-delta}
\end{equation*}
\begin{equation}
\gamma\delta<\alpha(1-2\delta),
\label{eq:small-positive-target}
\end{equation}
and
\begin{equation}
\alpha(1-2\delta)(1+S_M)-(\gamma-1)\delta
>
1+\varepsilon
\label{eq:small-capacity-contradiction}
\end{equation}
hold. Such a choice is possible because
$\alpha(1+\ln\gamma)>1+\varepsilon$.

Release one initial good $q$ having common size $\delta$,
common density $\gamma$, and common value $v(q)=\gamma\delta$.

For $j=0,\ldots,M$, define $h_j:=\gamma^{j/M}$.
A level-$j$ good has common size $\delta$, common density
$h_j$, and common value $h_j\delta$.

Set $P
:=
\left\lceil
(1+n(1+\varepsilon))Q
\right\rceil$
and $T:=1+(M+1)P$.
In the known-horizon model, announce $T$ before the first arrival.
In the unknown-horizon model, reveal no horizon.

Simulate the algorithm on the master sequence consisting of the
initial good $q$, followed by $P$ level-$0$ goods, then $P$
level-$1$ goods, and so on through $P$ level-$M$ goods.

After each level $j$, charity contains at least $Q$ level-$j$
goods. Indeed, the agents together can hold at most
$n(1+\varepsilon)Q$ goods, because every good has size $1/Q$.
Hence the number of level-$j$ goods in charity is at least
\[
P-n(1+\varepsilon)Q\ge Q.
\]
Let $S_j$ be any $Q$ such goods. Then $s(S_j)=1$, and for
every $X\subseteq S_j$ with $|X|\le 1$,
\begin{equation}
v(S_j\setminus X)
\ge
(Q-1)h_j\delta
=
h_j(1-\delta)
\ge
h_j(1-2\delta).
\label{eq:small-charity-set}
\end{equation}

We claim that there are a level $j$ and an agent $i$ such that
\begin{equation}
v(A_i^j)
<
\alpha h_j(1-2\delta),
\label{eq:small-bad-level}
\end{equation}
where $A_i^j$ denotes agent $i$'s bundle immediately after
level $j$.

Suppose otherwise. Fix an agent $i$. Let
\[
y_i
:=
\begin{cases}
\delta, & \text{if $q$ is assigned to $i$},\\
0,      & \text{otherwise},
\end{cases}
\]
and let $x_{i,\ell}$ be the total size of level-$\ell$ goods
assigned to $i$. The assumed failure of
\eqref{eq:small-bad-level} gives, for every $j=0,\ldots,M$,
\[
\gamma y_i+\sum_{\ell\le j}h_\ell x_{i,\ell}
\ge
\alpha h_j(1-2\delta).
\]
Equivalently,
\[
\sum_{\ell\le j}h_\ell x_{i,\ell}
\ge
c_{i,j},
\quad
c_{i,j}
:=
\alpha h_j(1-2\delta)-\gamma y_i.
\label{eq:small-lp-targets}
\]
Since $y_i\le\delta$, condition
\eqref{eq:small-positive-target} gives $c_{i,0}>0$.
The sequence $c_{i,0},\ldots,c_{i,M}$ is nondecreasing.
Corollary~\ref{cor:ascending-lp} therefore gives
\[
\sum_{\ell=0}^M x_{i,\ell}
\ge
\alpha(1-2\delta)(1+S_M)-\gamma y_i.
\]
Including the initial good, the total size assigned to $i$ is at
least
\begin{equation*}
y_i+\sum_{\ell=0}^M x_{i,\ell} \ge
\alpha(1-2\delta)(1+S_M)-(\gamma-1)y_i \ge
\alpha(1-2\delta)(1+S_M)-(\gamma-1)\delta
>
1+\varepsilon,
\end{equation*}
where the final inequality is
\eqref{eq:small-capacity-contradiction}. This contradicts the online
budget. Hence \eqref{eq:small-bad-level} holds.

Let $j^*$ be the first level for which
\eqref{eq:small-bad-level} holds for some agent $i$. Define the
actual input sequence as follows:

\begin{enumerate}
    \item release the initial good $q$;
    \item release all level goods through the end of level
    $j^*$;
    \item release exactly $(M-j^*)P$ zero-value goods, each
    having common size $\delta$.
\end{enumerate}

This sequence has exactly
\[
1+(j^*+1)P+(M-j^*)P
=
1+(M+1)P
=
T
\]
arrivals. In the known-horizon model, the master simulation and the
actual execution have the same announced horizon and the same
arrival prefix through level $j^*$. In the unknown-horizon
model, they also have the same observed prefix. Since the algorithm
is deterministic, it makes the same assignments through that level.

All later goods have value zero, so the failing agent's value does
not increase. By \eqref{eq:small-charity-set}, for every
$X\subseteq S_{j^*}$ with $|X|\le 1$,
\[
v(A_i)
<
\alpha h_{j^*}(1-2\delta)
\le
\alpha v(S_{j^*}\setminus X).
\]
Thus the final allocation is not $\alpha$-FEF1.

Every good has size $\delta\le\sigma_{\max}$. The initial good has
density $\gamma$, the level-$0$ goods have density $1$, and all
other positive densities lie in $[1,\gamma]$. Hence the realized
positive-density spread is exactly $\gamma$.
\end{proof}

The calibration good is introduced solely to ensure that every hard
instance has realized density spread exactly $\gamma$. Since its size
$\delta$ may be chosen arbitrarily small, the lower bound of
Theorem~\ref{thm:small-good-lower} applies to the standard density-spread
parameter $\frac{\rho_{\max}}{\rho_{\min}}$,
which depends only on the extreme positive densities. The proof does
not establish the same lower bound under the stronger requirement that
a fixed positive fraction of the total size or value lie near both
density endpoints. Indeed, the logarithmic capacity obstruction is
generated by the ascending phases, whereas the calibration good serves
only to certify the realized spread. Requiring non-negligible endpoint
mass would therefore define a different model.

\subsection{The Limiting Density Frontier}

The upper and lower bounds developed in the previous two subsections
show that the approximation guarantee depends on the maximum item size
$\sigma_{\max}$. Our goal is to characterize the best deterministic
guarantee that remains achievable as the globally small-item assumption
becomes increasingly accurate, that is, as
$\sigma_{\max}\rightarrow0$. The appropriate notion of optimality is
therefore a limiting one: rather than asking for the best guarantee at
a fixed positive item size, we ask for the largest approximation factor
that can be achieved uniformly once the item-size bound is sufficiently
small.

\begin{definition}[Limiting supremum factor]
\label{def:limiting-supremum}
For a fixed model and $\sigma>0$, let $F(\sigma)$ be the
supremum of all factors $\alpha\in[0,1]$ for which some
deterministic online algorithm guarantees $\alpha$-FEF$1$ on
every instance in the model whose maximum item size is at most
$\sigma$.

If $0<\sigma_1\le\sigma_2$, then the size-$\sigma_1$ instance
class is contained in the size-$\sigma_2$ class. Hence $F(\sigma_1)\ge F(\sigma_2)$.
Since $F(\sigma)\in[0,1]$, the one-sided limit $\lim_{\sigma\rightarrow 0}F(\sigma)$ exists. We call this limit the \emph{limiting supremum factor}.
This terminology does not assert that the boundary factor is
attained for any fixed positive item-size bound.
\end{definition}

We now combine the matching upper and lower bounds to identify the
optimal deterministic approximation factor in the structured model.
Throughout this theorem, the algorithmic budget coincides with the
original unit budget. The approximation factor below is understood in the limiting-supremum
sense of Definition~\ref{def:limiting-supremum}; in particular, the
theorem does not assert that the boundary value is attained for any
fixed positive item-size bound.

\begin{theorem}[Limiting density frontier]
\label{thm:limiting-density-frontier}
Fix $n\ge 1$ and $\gamma>1$.  Consider the class
$\Icom$ of common-valuation instances with sizes common across
agents in the sense of Definition~\ref{def:common-uniform-sizes},
original unit budgets, under either the known-horizon or the
unknown-horizon model.
\begin{enumerate}[label=\textnormal{(\alph*)}]
    \item For every $\alpha\in
        \left(
            0,
            \frac1{1+\ln\gamma}
        \right)$,
    there exists $\sigma_0(\alpha,\gamma)>0$ such that
    Threshold$(\alpha)$ is prefix-wise
    $\alpha$-FEF1 against all recipients on every instance of
    $\Icom$ whose goods have size at most
    $\sigma_0(\alpha,\gamma)$.
    \item For every $\alpha\in
        \left(
            \frac1{1+\ln\gamma},
            1
        \right]$,
    and every $\sigma_{\max}>0$, no deterministic online algorithm
    guarantees $\alpha$-FEF1 on all instances in
    $\Icom$ whose goods have size at most
    $\sigma_{\max}$.  The hard instance supplied by Theorem~\ref{thm:small-good-lower} has realized spread
    exactly $\gamma$.
\end{enumerate}
The factor $\alpha=0$ is trivially achievable and is therefore
omitted. Consequently, for the function $F$ associated with this
model in Definition~\ref{def:limiting-supremum}, $\lim_{\sigma\rightarrow 0}F(\sigma)
= \frac{1}{1+\ln\gamma}$.
\end{theorem}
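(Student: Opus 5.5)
Part (b) follows from \Cref{thm:small-good-lower} with $\varepsilon=0$. If $\alpha>\frac1{1+\ln\gamma}$, then $\alpha(1+\ln\gamma)>1$. So for every $\sigma_{\max}>0$ and every deterministic algorithm respecting the unit online budget, that theorem gives a fixed finite instance in $\Icom$ with goods of size $\delta\le\sigma_{\max}$ and realized spread exactly $\gamma$ whose final allocation is not $\alpha$-FEF1. This holds in both horizon models. Hence $F(\sigma)\le\frac1{1+\ln\gamma}$ for every $\sigma>0$. Together with part (a), which gives $F(\sigma)\ge\alpha$ for all $\sigma\le\sigma_0(\alpha,\gamma)$ and every $\alpha<\frac1{1+\ln\gamma}$, monotonicity of $F$ yields $\lim_{\sigma\to0}F(\sigma)=\frac1{1+\ln\gamma}$.

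For part (a), run Threshold$(\alpha)$ with $B'=1$ after the density normalization, so that densities lie in $[1,\gamma]$. Choose $\sigma_0>0$ small enough that
\[
\alpha(1+\ln\gamma)+(2+\gamma)\sigma_0\le1 \qquad\text{and}\qquad \alpha\sigma_0\le\ldots,
\]
where the first condition is possible because $\alpha(1+\ln\gamma)<1$ strictly, and any further smallness condition needed below is imposed in the second slot. \Cref{lem:capacity-no-fill} then gives $s(A_i^t)+\sigma_0\le1$ for every agent and prefix. So every good of size at most $\sigma_0$ fits every agent at every time, which verifies \eqref{eq:universal-arrival-feasibility}. \Cref{lem:balanced-agent-envy} then yields exact (hence $\alpha$-) FEF1 for all agent--agent comparisons at every prefix.

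The main step is the comparison against charity. Fix a prefix, an agent $i$, and a feasible $S\subseteq A_{\charity}$ with $s(S)\le1$. Goods of value zero contribute nothing, and individually infeasible goods cannot lie in $S$. So every positively valued $g\in S$ was sent to charity even though it fit every agent's budget. Eligibility therefore failed only through the threshold: $v(A_i^{<g})\ge\alpha\rho(g)$, hence $v(A_i)\ge\alpha\rho(g)$ at the current prefix, since bundle values only grow. Let $g^\star$ be the good of maximum density in $S$. Then
\[
v(S)\le\rho(g^\star)\,s(S)\le\rho(g^\star)\le v(A_i)/\alpha,
\]
which gives $v(A_i)\ge\alpha\,v(S)\ge\alpha\,v(S\setminus\{x\})$ for any $x\in S$. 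No removal is even needed.

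I expect this charity step to be the delicate point. It needs the capacity lemma to hold \emph{uniformly}, so that rejection is always due to the threshold for every agent simultaneously, including agent $i$ itself. This is exactly what the strict inequality $\alpha<\frac1{1+\ln\gamma}$ buys, via the slack absorbing $(2+\gamma)\sigma_0$. I would double-check that the capacity lemma's hypotheses (densities in $[1,\gamma]$ and sizes at most $\sigma_0$ for accepted goods) are met by the algorithm's individual-feasibility filter. I would also check that prefix-wise validity gives the result under both horizon models, since Threshold$(\alpha)$ never uses $T$.
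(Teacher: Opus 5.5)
Your proposal is correct and follows the paper's proof essentially step for step. Part (b) comes from Theorem~\ref{thm:small-good-lower} with $\varepsilon=0$; part (a) uses Lemma~\ref{lem:capacity-no-fill} with $B'=1$, Lemma~\ref{lem:balanced-agent-envy} for agent--agent pairs, and the maximum-density good of $S$ for the charity comparison. The placeholder condition ``$\alpha\sigma_0\le\ldots$'' is never needed and should be dropped, since the paper's explicit choice $\sigma_0(\alpha,\gamma)=\frac{1-\alpha(1+\ln\gamma)}{2+\gamma}$ is all the argument requires.
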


\begin{proof}
We first prove part \textnormal{(a)}. Define
\[
    \sigma_0(\alpha,\gamma)
    :=
    \frac{
        1-\alpha(1+\ln\gamma)
    }{
        2+\gamma
    }.
\]
This number is positive because $\alpha<\frac1{1+\ln\gamma}$.
If the input contains no goods, every fairness condition is trivial.
Assume henceforth that $G\ne\varnothing$.
Fix an instance whose goods all have size at most
$\sigma_0(\alpha,\gamma)$, and let $\bar\sigma
    :=
    \max_{g\in G}s(g)$.
Then $\bar\sigma\le\sigma_0(\alpha,\gamma)$.

Every good assigned to an agent has size at most $\bar\sigma$, so
Lemma~\ref{lem:capacity-no-fill}, applied with $B'=1$ and $\sigma_{\max}=\bar\sigma$, gives
\[
    s(A_i^t)+\bar\sigma\le1
    \quad
    \text{for every agent $i$ and prefix $t$}.
\]
Indeed,
\[
    \alpha(1+\ln\gamma)
    +(2+\gamma)\bar\sigma \le
    \alpha(1+\ln\gamma)
    +(2+\gamma)\sigma_0(\alpha,\gamma) =
    1.
\]
The preceding inequality is the key invariant of the upper-bound
analysis: every agent always retains enough residual capacity to
accept any future good satisfying the prescribed size bound. We use
this invariant to analyze charity and agent--agent comparisons
separately.

\paragraph{Comparisons with charity.}
Fix a prefix $t$, an agent $i$, and a feasible comparison set
$S\subseteq A_{\charity}^t$.  If $v(S)=0$, then $v(A_i)\ge0=\alpha v(S)$, so the desired inequality is immediate.  Assume $v(S)>0$, and let $\rho^* := \max_{g\in S:v(g)>0}\rho(g)$.
Choose $g^*\in S$ attaining $\rho^*$, and let
$t^*$ be its arrival round.  Since $g^*\in S$, it satisfies $s(g^*)\le1$ under the original
unit budget. The residual-capacity invariant also makes it feasible for every agent under the online budget. Since $g^*$ has positive value but was sent to charity, agent $i$ must therefore have failed the
threshold value condition: $v(A_i^{t^*-1}) \ge \alpha\rho^*$.
Agent values are nondecreasing over time, so at the currently
considered prefix,
\[
    v(A_i^t)
    \ge
    v(A_i^{t^*-1})
    \ge
    \alpha\rho^*.
\]
Moreover,
\[
    v(S)
    =
    \sum_{g\in S:v(g)>0}\rho(g)s(g)
    \le
    \rho^* s(S)
    \le
    \rho^*.
\]
Therefore $v(A_i^t)\ge\alpha v(S)$.

\paragraph{Agent--agent comparisons.}
The residual-capacity invariant established above implies, at every prefix, $s(A_i^t)+\bar\sigma\le1$ for every $i$.
Since every arriving good has size at most $\bar\sigma$,
condition~\eqref{eq:universal-arrival-feasibility} holds at every
assignment. Lemma~\ref{lem:balanced-agent-envy} therefore implies
exact agent--agent FEF$1$.
Both hold at every prefix, so Threshold$(\alpha)$ is prefix-wise
$\alpha$-FEF$1$ against all recipients on $\Icom$.

It remains to prove part~\textup{(b)}.
Applying Theorem~\ref{thm:small-good-lower} with
$\varepsilon=0$, its hypothesis holds because
$\alpha>1/(1+\ln\gamma)$.
Hence, for every deterministic online algorithm $\mathcal A$ and every
$\sigma_{\max}>0$, Theorem~\ref{thm:small-good-lower} supplies a
fixed finite instance $\mathcal H_{\mathcal A}\in\Icom$ whose goods
all have size at most $\sigma_{\max}$ and on which $\mathcal A$
fails to achieve $\alpha$-FEF$1$.
\end{proof}

As in Definition~\ref{def:limiting-supremum}, no claim in Theorem \ref{thm:limiting-density-frontier} is made that the
boundary factor is attained for any fixed positive item-size bound. The lower bound in part \textnormal{(b)} has quantifier order $\forall\mathcal A\ \exists\mathcal H_{\mathcal A}$,
where $\mathcal H_{\mathcal A}$ is a fixed finite input sequence
selected before the actual execution of $\mathcal A$.  Thus, the theorem
does not claim that one universal sequence defeats every deterministic
algorithm simultaneously.

\subsection{The Limiting Augmentation Frontier}

The previous theorem characterizes the best limiting factor as the maximum good size tends to zero without
resource augmentation. We now allow the algorithm to use an
algorithmic budget $1+\varepsilon$ while fairness continues to be
evaluated with respect to the original budgets. The resulting
best limiting factor as the maximum good size tends to zero shifts upward by exactly the augmentation factor.
\begin{theorem}[Limiting augmentation frontier]
\label{thm:limiting-augmentation-frontier}

Fix $n\ge1$, $\gamma>1$, and
$\varepsilon\ge0$, and let $\alpha^*(\varepsilon)
=
\min\left\{
1,
\frac{1+\varepsilon}{1+\ln\gamma}
\right\}$.

Under either the known-horizon or unknown-horizon model, the limiting
supremum deterministic approximation factor for the structured class
$\Icom$, under algorithmic budget $1+\varepsilon$ and fairness
evaluated with respect to the original unit budgets, is
$\alpha^*(\varepsilon)$.

More precisely,

\begin{enumerate}[label=\textnormal{(\alph*)}]
    \item For every $\alpha\in(0,1]$ satisfying
    \begin{equation}
    \label{eq:augmentation-strict-upper}
        \alpha(1+\ln\gamma)<1+\varepsilon,
    \end{equation}
    let $\sigma_0(\alpha,\varepsilon,\gamma)
        :=
        \frac{
            (1+\varepsilon)-\alpha(1+\ln\gamma)
        }{
            2+\gamma
        }$.
    Then $\sigma_0>0$, and Threshold$(\alpha)$, using algorithmic
    budget $1+\varepsilon$, is prefix-wise $\alpha$-FEF1 against all
    recipients with respect to the original unit budgets whenever every
    good has size at most $\sigma_0$.
    \item
    For every $\alpha\in(0,1]$ satisfying $\alpha(1+\ln\gamma)>1+\varepsilon$, and for every $\sigma_{\max}>0$, no deterministic online algorithm that
    maintains $s(A_i^t)\le1+\varepsilon$ for every agent $i$ and prefix $t$ guarantees $\alpha$-FEF1 on all instances in
    $\Icom$ whose goods have size at most
    $\sigma_{\max}$.  The hard instance may be chosen to have realized
    spread exactly $\gamma$.
    \item
    In particular, if $\varepsilon>\ln\gamma$,
    then part \textnormal{(a)} applies with $\alpha=1$ and $\sigma_0
        =
        \frac{\varepsilon-\ln\gamma}{2+\gamma}>0$.
    Thus, Threshold$(1)$ is prefix-wise exactly FEF1 against all
    recipients for sufficiently small goods.
    \item
    If $\varepsilon<\ln\gamma$, then part \textnormal{(b)} applies with $\alpha=1$, and exact FEF1 is impossible for every positive item-size bound.
\end{enumerate}
No claim is made regarding boundary attainment when $\varepsilon=\ln\gamma$.
\end{theorem}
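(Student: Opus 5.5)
The proof mirrors that of Theorem~\ref{thm:limiting-density-frontier}, with the online budget $B'=1+\varepsilon$ replacing $B'=1$. Comparisons are still made against sets of size at most the original unit budget.

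For part~(a), fix $\alpha$ satisfying \eqref{eq:augmentation-strict-upper}; then $\sigma_0>0$ immediately. Take any instance in $\Icom$ with all goods of size at most $\sigma_0$, and let $\bar\sigma\le\sigma_0$ be its maximum size. After the density normalization, every accepted good is positively valued and individually feasible under the unit budget, so its density lies in $[1,\gamma]$. Apply Lemma~\ref{lem:capacity-no-fill} with $B'=1+\varepsilon\ge1$ and $\sigma_{\max}=\bar\sigma$. The choice of $\sigma_0$ makes $\alpha(1+\ln\gamma)+(2+\gamma)\bar\sigma\le1+\varepsilon$, so every agent keeps residual online capacity at least $\bar\sigma$ at every prefix.

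For comparisons with charity, fix a set $S\subseteq A_{\charity}^t$ with $s(S)\le1$. Let $g^*$ be a good of maximum density $\rho^*$ among the positively valued goods of $S$. The good $g^*$ had positive value and satisfied $s(g^*)\le1$, and it fit every agent's online budget. Its rejection therefore forces $v(A_i^{t^*-1})\ge\alpha\rho^*$, while $v(S)\le\rho^* s(S)\le\rho^*$; this gives $v(A_i)\ge\alpha v(S)$. For agent--agent comparisons, the residual-capacity invariant gives \eqref{eq:universal-arrival-feasibility} with $B'=1+\varepsilon$. Lemma~\ref{lem:balanced-agent-envy} then gives exact FEF1, because that lemma holds for any common $B'$ and for comparison sets feasible under the original budget. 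Both arguments hold at every prefix.

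Part~(b) follows directly from Theorem~\ref{thm:small-good-lower}. Its hypothesis $\alpha(1+\ln\gamma)>1+\varepsilon$ is exactly the assumption here. For each deterministic algorithm respecting online budget $1+\varepsilon$, and for each $\sigma_{\max}>0$, that theorem produces a fixed instance in $\Icom$ with realized spread exactly $\gamma$ and sizes at most $\sigma_{\max}$ on which $\alpha$-FEF1 fails. This holds under both horizon models.

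Parts~(c) and~(d) are specializations with $\alpha=1$. If $\varepsilon>\ln\gamma$, then \eqref{eq:augmentation-strict-upper} holds and $\sigma_0=(\varepsilon-\ln\gamma)/(2+\gamma)$. If $\varepsilon<\ln\gamma$, the hypothesis of part~(b) holds.

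For the headline limit, let $F$ be defined as in Definition~\ref{def:limiting-supremum} for augmented budget $1+\varepsilon$. Part~(a) gives $F(\sigma)\ge\alpha$ for every admissible $\alpha<\alpha^*(\varepsilon)$ and all small $\sigma$; when $\alpha^*=1$ this covers every $\alpha<1$ and also $\alpha=1$ when $\varepsilon>\ln\gamma$. So $\lim_{\sigma\to0}F(\sigma)\ge\alpha^*(\varepsilon)$. If $\alpha^*<1$, part~(b) gives $F(\sigma)\le\alpha$ for every $\alpha\in(\alpha^*,1]$ and every $\sigma$, so $F(\sigma)\le\alpha^*$.

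The main obstacle is checking that Lemmas~\ref{lem:capacity-no-fill} and~\ref{lem:balanced-agent-envy} are insensitive to $B'>1$. The only subtle point is that comparison sets are restricted to the unit budget while agents may hold up to $1+\varepsilon$. This is handled by the individual-feasibility test $s(g)\le1$ in \textsc{Threshold}, which keeps accepted densities in $[1,\gamma]$, and by the bound $v(S)\le\rho^*$ using $s(S)\le1$.
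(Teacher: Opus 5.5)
Your proposal is correct and follows essentially the same route as the paper. Lemma~\ref{lem:capacity-no-fill} with $B'=1+\varepsilon$ gives the residual-capacity invariant. The maximum-density charity argument together with Lemma~\ref{lem:balanced-agent-envy} then gives part~(a), and part~(b) is a direct invocation of Theorem~\ref{thm:small-good-lower}. Parts~(c), (d), and the identification of the limiting supremum are the same specializations and monotonicity argument the paper uses. The only omissions are trivial edge cases, such as $v(S)=0$ or an instance with no positively valued individually feasible good, which the paper dispatches in one line each.
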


\begin{proof}
We first establish part \textnormal{(a)}. Since $\alpha(1+\ln\gamma)<1+\varepsilon$, we have that
$\sigma_0
    :=
    \frac{
        (1+\varepsilon)-\alpha(1+\ln\gamma)
    }{
        2+\gamma
    }$
is strictly positive.
If the input contains no goods, every fairness condition is trivial.
Assume henceforth that $G\ne\varnothing$.
Fix an instance whose goods all have size at most $\sigma_0$, and
define $\bar\sigma := \max_{g\in G}s(g)$.
Then $\bar\sigma\le\sigma_0$.
Applying Lemma~\ref{lem:capacity-no-fill} with $B'=1+\varepsilon$ and $\sigma_{\max}=\bar\sigma$ gives us $s(A_i^t)+\bar\sigma \le 1+\varepsilon$ for every agent $i$ and prefix $t$, because
\begin{equation*}
    \alpha(1+\ln\gamma)
    +(2+\gamma)\bar\sigma \le
    \alpha(1+\ln\gamma)
    +(2+\gamma)\sigma_0 =
    1+\varepsilon.
\end{equation*}
As in the proof of Theorem~\ref{thm:limiting-density-frontier}, this
residual-capacity invariant allows us to analyze charity and
agent--agent comparisons separately.
Fix an agent $i$ and a feasible charity set $S\subseteq A_{\charity}^t$.
If $v(S)=0$, then $v(A_i)\ge0=\alpha v(S)$.
Assume $v(S)>0$, and let $g^*\in S$ maximize the common
density $\rho(g)$ over positive-value goods in $S$; let
$t^*$ be its arrival round.  Since $g^*\in S$, it satisfies $s(g^*)\le1$ under the original
unit budget. The residual-capacity invariant also makes it feasible for
every agent under the online budget. Since $g^*$ has positive value but
was sent to charity, agent $i$ must therefore have failed the
threshold value condition: $ v(A_i^{t^*-1})
    \ge
    \alpha\rho(g^*)$.
By monotonicity of accepted value,
\[
    v(A_i)
    \ge
    v(A_i^{t^*-1})
    \ge
    \alpha\rho(g^*).
\]
Also,
\[
    v(S) =
    \sum_{g\in S:v(g)>0}\rho(g)s(g) \le
    \rho(g^*)s(S) \le
    \rho(g^*),
\]
because $S$ is feasible for the original unit budget.  Therefore $v(A_i)\ge\alpha v(S)$.
\paragraph{Agent--agent comparisons.}
At every prefix,
\[
    s(A_i^t)+\bar\sigma\le1+\varepsilon
    \quad
    \text{for every }i.
\]
Every arriving good has size at most $\bar\sigma$, so it is feasible
for every agent immediately before any assignment.  Lemma
\ref{lem:balanced-agent-envy} gives exact agent--agent FEF1.
Since $\alpha\le1$, the
allocation is prefix-wise $\alpha$-FEF1 against all recipients.

The lower bound in part \textnormal{(b)} follows directly from
Theorem~\ref{thm:small-good-lower}. Indeed, the hypothesis of that
theorem is precisely
\[
\alpha(1+\ln\gamma)>1+\varepsilon.
\]

Parts \textnormal{(c)} and \textnormal{(d)} are obtained by substituting
$\alpha=1$.

It remains to identify the best limiting factor as the maximum good size tends to zero.
Let
\[
c
=
\alpha^*(\varepsilon)
=
\min\left\{
1,
\frac{1+\varepsilon}{1+\ln\gamma}
\right\}.
\]

Suppose first that $c<1$. Then $c=\frac{1+\varepsilon}{1+\ln\gamma}$.
The upper bound in part \textnormal{(a)} established above yields every factor
$\alpha<c$ for sufficiently small goods, whereas the lower bound in part
\textnormal{(b)} excludes every factor $\beta>c$ for every positive
item-size bound. Consequently,
$\lim_{\sigma\rightarrow 0}F(\sigma)=c$, so $c$ is the limiting supremum
approximation factor.

Now suppose that $c=1$. The upper bound in part \textnormal{(a)} yields every factor
$\alpha<1$ for sufficiently small goods, while no approximation
factor can exceed $1$. Hence, the limiting supremum is $1$.

Moreover, if
$\varepsilon>\ln\gamma$, then the upper-bound construction with
$\alpha=1$ (part \textnormal{(c)}) gives exact FEF$1$ for sufficiently small goods.
At the boundary $\varepsilon=\ln\gamma$, the limiting supremum remains $1$, although no claim is made that
exact FEF$1$ is attained for any fixed positive item-size bound.

Therefore, $\alpha^*(\varepsilon)
=
\min\left\{
1,
\frac{1+\varepsilon}{1+\ln\gamma}
\right\}$ is the limiting supremum deterministic approximation factor.
Moreover, $\ln\gamma$ is the infimum augmentation threshold:
every $\varepsilon>\ln\gamma$ guarantees exact FEF$1$ for all
sufficiently small goods, every $\varepsilon<\ln\gamma$ does not,
and attainment at the boundary $\varepsilon=\ln\gamma$ remains
open.
\end{proof}

The frontier theorems above rely critically on the assumption that
item sizes are common across agents. The charity-side argument extends
to agent-specific sizes, since the residual-capacity analysis of
Lemma~\ref{lem:capacity-no-fill} is carried out independently for each
agent and requires only the corresponding small-item condition.
However, the exact agent--agent FEF$1$ argument does not extend
directly. When a good is assigned to agent $j$, its density and
feasibility threshold for another agent $i$ may differ, so the
balanced-acceptance argument of
Lemma~\ref{lem:balanced-agent-envy} no longer applies.

Consequently,
Theorems~\ref{thm:limiting-density-frontier} and
\ref{thm:limiting-augmentation-frontier} establish the limiting
frontiers only for the model with sizes common across agents. The ascending
construction still yields the charity-side lower-bound obstruction $\alpha(1+\ln\gamma)\le1+\varepsilon$
under the structured common-across-agent-size assumptions, but the
corresponding upper bound for fully agent-specific sizes remains open.

\begin{remark}[Only individually feasible goods need be small]
Since Threshold$(\alpha)$ rejects every good with $s(g)>1$, the
upper-bound proofs of Theorems~\ref{thm:limiting-density-frontier}
and~\ref{thm:limiting-augmentation-frontier} use the small-item bound
only for goods that are individually feasible under the original unit
budget. Goods with $s(g)>1$ are never assigned to an agent and cannot
belong to an original-budget feasible comparison set. The lower bound in
Theorem~\ref{thm:small-good-lower} already uses globally small goods.
Hence both frontier theorems remain valid, under known and unknown
horizons, if the size bound is imposed only on individually feasible
goods.
\end{remark}

% % %====================================================================

\section{Why Small Items Do Not Make Aug-GreedyFit Exact}
\label{app:counter}
%====================================================================

It is tempting to conjecture that \emph{small items alone} suffice for exact
FEF$1$ under augmentation: "if $\size_i(g)\le\eps\bud_i$ for all $i,g$, then
\textsc{Aug-GreedyFit}$(\eps)$ is exactly FEF$1$." This is \emph{false} when the
density spread exceeds $1$; \Cref{thm:aug-ub} correctly gives only
$\min\{1,(1+\eps)/\spread\}$, and the following instance shows the factor cannot be
improved to $1$ merely by shrinking items.

\begin{example}[Arbitrarily small goods do not make
\textsc{Aug-GreedyFit} exact when \(\Gamma>1\)]
Take one agent with original budget \(B=1\) and online budget
\(1.1\). Fix any \(\sigma>0\), and choose a multiple \(Q\) of \(10\)
such that \(Q\ge10\) and \(1/Q\le\sigma\). Every good has size
\(1/Q\).

First, \(11Q/10\) low goods arrive, each having value \(1\). They have
density \(Q\). The algorithm accepts all of them, filling the online
budget exactly:
\[
\frac{11Q}{10}\cdot\frac1Q=1.1.
\]
Next, \(Q\) high goods arrive, each having value \(2\) and density
\(2Q\). The agent has no remaining online capacity, so all high goods
are assigned to charity. The realized density spread is therefore
\(\Gamma=2\).

The agent's value is \(11Q/10\). Let \(S\) consist of the \(Q\) high
goods in charity. Then \(s(S)=1\), so \(S\) is feasible under the
original budget, and for every \(x\in S\),
\[
v(S\setminus\{x\})=2(Q-1).
\]
Since \(Q\ge10\),
\[
\frac{11Q}{10}<2(Q-1),
\]
and hence the allocation is not exact FEF1. Because \(1/Q\le\sigma\)
and \(\sigma>0\) was arbitrary, the failure occurs for arbitrarily
small goods.
\end{example}

The example shows that merely imposing a small-item condition does not
upgrade \textsc{Aug-GreedyFit}'s arbitrary-item-size guarantee to
exact FEF1.  For that particular greedy algorithm, the proven general
factor remains $\min\left\{ 1, \frac{1+\varepsilon}{\Gamma}
    \right\}$, and exactness follows from this analysis only when $\varepsilon\ge\Gamma-1$.
This does not contradict the sharper threshold policy of
Section~\ref{sec:sharp-frontier}.  In the more structured
common-valuation, common-across-agent-size, globally-small model,
Threshold$(1)$ achieves exact FEF1 for every $\varepsilon>\ln\gamma$.
The example therefore separates the broad arbitrary-size greedy
analysis from the sharper structured small-item frontier; it is not a
lower bound against all online algorithms in the latter model.

\section{Density-Spread Lower Bound for Arbitrary Item Sizes}
\label{sec:aux-density}
%====================================================================
For arbitrary item sizes, the frontier characterizations of
Theorems~\ref{thm:limiting-density-frontier}
and~\ref{thm:limiting-augmentation-frontier} no longer apply. This appendix
records a complementary lower bound showing that, under the paper's
density-spread convention, no deterministic online algorithm can
guarantee an approximation factor exceeding
$\gamma/(2\gamma-1)$, even with a known horizon.

Unlike the frontier results, this lower bound does not characterize the
optimal approximation factor. Instead, it provides an auxiliary
impossibility result for the arbitrary-item-size setting. The
construction uses a high-density marker good in Continuation~A to make
the realized density spread exactly $\gamma$; all positive-valued goods
remain individually feasible.

\begin{proposition}[Density-spread lower bound for arbitrary item sizes]
\label{prop:arbitrary-size-density-lower}
Fix an integer $k\ge 1$, a density bound $\gamma>1$, and $\alpha>\frac{\gamma}{2\gamma-1}$.
Even with a known horizon, no deterministic online algorithm guarantees $\alpha$-FEF$k$ on all instances with two agents, equal unit budgets, common valuations, uniform item sizes, and realized density spread exactly $\gamma$.
\end{proposition}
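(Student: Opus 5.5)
The plan is an adversarial two-continuation construction in the style of the proof of \Cref{thm:main-lb}. Every good has size $1/Q$, budgets are $1$, and $Q>k$ is chosen large. The horizon is fixed in advance, for example $T=3Q+3Q+1$, and any unused rounds are padded with zero-value goods of size $1/Q$. Zero-value goods change neither any agent's value nor $\Gamma$.

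\textbf{Phase 1 (low goods).} Release $3Q$ goods of value $1/Q$, so each has density $1$. The two agents together can hold at most $2Q$ goods, so at least $Q$ low goods end up in charity. Let $x_iQ$ be the number of low goods held by agent $i$ after this phase.

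\textbf{Continuation B (some agent took few low goods).} Suppose some agent $i$ has $x_i<\alpha(Q-k)/Q-\eta$, for a small slack $\eta>0$ fixed in advance. Then release a single marker good of value $\gamma/Q$, so that the realized spread is exactly $\gamma$, and pad the rest with zero-value goods. Compare agent $i$ with $Q$ low goods $S$ in charity. For every $|X|\le k$ we have $v(S\setminus X)\ge (Q-k)/Q$. On the other side, $v(A_i)\le x_i+\gamma/Q<\alpha(Q-k)/Q$ once $Q$ is large relative to $\eta$. This violates $\alpha$-FEF$k$.

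\textbf{Continuation A (both agents took many low goods).} Otherwise every agent has $x_i\ge \alpha(Q-k)/Q-\eta$. Release $3Q$ high goods of value $\gamma/Q$, so each has density $\gamma$. Agent $i$ has room for at most $(1-x_i)Q$ of them, and the two agents together absorb at most $2Q$, so charity keeps at least $Q$ high goods. Take $S$ to be $Q$ of these; it is feasible. Then $v(S\setminus X)\ge\gamma(Q-k)/Q$, while $v(A_i)\le x_i+\gamma(1-x_i)=\gamma-(\gamma-1)x_i$. Since $\gamma>1$, this is decreasing in $x_i$. Plugging in $x_i\approx\alpha$ gives $v(A_i)\lesssim\gamma-(\gamma-1)\alpha$, while we need $v(A_i)<\alpha\gamma(Q-k)/Q\approx\alpha\gamma$. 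The exact inequality $\gamma-(\gamma-1)\alpha<\alpha\gamma$ is equivalent to $\alpha(2\gamma-1)>\gamma$, which is our hypothesis. Because it is strict, one can choose $\eta$ small and $Q$ large so that the $k/Q$ and $\eta$ error terms are absorbed. The realized spread is exactly $\gamma$: the low goods have density $1$ and the high goods density $\gamma$. All goods are individually feasible.

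\textbf{Fixing the sequence and the main obstacle.} Finally, as in \Cref{thm:main-lb}, simulate the deterministic algorithm through Phase 1 to decide which continuation applies. Then fix that full sequence of length $T$ in advance. The announced horizon and the prefix coincide with the simulation, so the algorithm behaves identically and the violation persists against an oblivious adversary. The main obstacle is the bookkeeping of the error terms: the $k$ removed goods, the marker in Continuation B, and the integrality of $x_iQ$. These must be dominated using the strict gap $\alpha(2\gamma-1)-\gamma>0$. I would fix $\eta$ to be a constant fraction of this gap first, and only then choose $Q$.
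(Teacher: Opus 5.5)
Your proposal is correct and is essentially the paper's proof up to a value rescaling. The construction is the same: a $3Q$ low-good prefix, followed either by a single high-density marker with zero padding or by $3Q$ high goods. The capacity count $v(A_i)\le \gamma-(\gamma-1)x_i$ and the strict gap $\alpha(2\gamma-1)>\gamma$ then absorb the error terms. The only cosmetic difference is that you branch on an explicit slack threshold $\eta$. The paper instead branches on whether the marker continuation already fails, which gives $m_i\ge\alpha(Q-k)-\gamma$ directly and folds the marker's value into that bound rather than into $\eta$.
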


\begin{proof}
Choose an integer $Q>k$ sufficiently large that
\begin{equation}
\alpha(Q-k)>\gamma
\quad\text{and}\quad
\gamma Q+\gamma(\gamma-1)
<
\alpha(2\gamma-1)(Q-k).
\label{eq:arbitrary-size-Q}
\end{equation}
Such a choice is possible because
$\alpha>\gamma/(2\gamma-1)$.

Every good has size $1/Q$, and both agents have budget $1$. Define three
kinds of goods:
\begin{itemize}
    \item a low good has value $1$ and density $Q$;
    \item a high good has value $\gamma$ and density $\gamma Q$;
    \item a zero good has value $0$.
\end{itemize}

The known horizon is $6Q$. The first $3Q$ arrivals are low goods. Since
the two agents can hold at most $2Q$ goods in total, at least $Q$ low
goods are assigned to charity during this prefix.

Consider the following two continuations.

\paragraph{Continuation A.}
The remaining arrivals are one high good followed by $3Q-1$ zero goods.

Let $m_i$ be the number of low goods assigned to agent $i$ during the
common prefix. The final value of agent $i$ in this continuation is at most
$m_i+\gamma$, because there is only one high good.

Let $S$ be any set of $Q$ low goods in charity. For every
$X\subseteq S$ with $|X|\le k$, $v(S\setminus X)\ge Q-k$.
Therefore, if the final allocation is $\alpha$-FEF$k$, then $m_i+\gamma\ge \alpha(Q-k)$,
and hence
\begin{equation}
m_i\ge \alpha(Q-k)-\gamma
\quad\text{for each }i\in\{1,2\}.
\label{eq:low-prefix-count}
\end{equation}

\paragraph{Continuation B.}
The remaining $3Q$ arrivals are high goods.

The two continuations have the same announced horizon and the same first
$3Q$ arrivals. Since the algorithm is deterministic, the values $m_1,m_2$
after the common prefix are the same in both continuations.

Agent $i$ can receive at most $Q-m_i$ high goods. Thus,
\begin{equation*}
v(A_i) \le m_i+\gamma(Q-m_i)  = \gamma Q-(\gamma-1)m_i \le \gamma Q-(\gamma-1)(\alpha(Q-k)-\gamma)  < \alpha\gamma(Q-k),
\end{equation*}
where the last inequality follows from
\eqref{eq:arbitrary-size-Q}.

The number of high goods assigned to charity is at least
\begin{equation*}
    3Q-\sum_{i=1}^2(Q-m_i) =Q+m_1+m_2 \ge Q+2(\alpha(Q-k)-\gamma) \ge Q,
\end{equation*}
where the final inequality follows from
\eqref{eq:arbitrary-size-Q}.

Let $S$ be any set of $Q$ high goods in charity. Then, for every
$X\subseteq S$ with $|X|\le k$,
\[
v(S\setminus X)\ge \gamma(Q-k).
\]
Since $v(A_i)<\alpha\gamma(Q-k)$, both agents violate
$\alpha$-FEF$k$ toward charity.

If the algorithm fails on Continuation A, that continuation is the required
instance. Otherwise, the preceding argument shows that it fails on
Continuation B. Both continuations have positive densities $Q$ and
$\gamma Q$, and all positive-valued goods are individually feasible.
Therefore, their realized density spread under
Definition~\ref{def:density-spread} is exactly $\gamma$.
\end{proof}

All positive-valued goods in both continuations have size
$1/Q\le1$ and are therefore included in the density-spread parameter of
Definition~\ref{def:density-spread}. The single high good in
Continuation~A serves only to ensure that this continuation, like
Continuation~B, has realized spread exactly $\gamma$.

\section{An Ascending-Capacity Optimization Lemma}
\label{app:lp}
%====================================================================

This appendix proves a technical optimization lemma used in the
lower-bound proof of \Cref{thm:small-good-lower}. That proof reduces
to minimizing the total mass assigned to an increasing sequence of
density levels subject to ascending cumulative-capacity constraints.
The lemma solves this optimization problem in closed form, allowing the
same argument to be invoked without reproving the underlying
linear-programming calculation.

\begin{lemma}[Ascending-capacity optimization]
\label{lem:ascending-capacity}
Let $0<h_0<h_1<\cdots<h_M$ and $0\le c_0\le c_1\le\cdots\le c_M$.
Among all vectors $ x_0,\dots,x_M\ge0$
satisfying $W_j
    := \sum_{\ell=0}^j h_\ell x_\ell
    \ge c_j$ for all $j\in \{0,\dots,M\}$, the minimum of $\sum_{\ell=0}^M x_\ell$ is
\begin{equation}
\label{eq:ascending-lp-optimum}
    \frac{c_0}{h_0}
    +
    \sum_{\ell=1}^M
    \frac{c_\ell-c_{\ell-1}}{h_\ell}.
\end{equation}
It is attained by $x_0=\frac{c_0}{h_0},
    \quad
    x_\ell
    =
    \frac{c_\ell-c_{\ell-1}}{h_\ell}
    \quad(\ell=1,\dots,M)$, 
for which $W_j=c_j
    \quad(j=0,\dots,M)$.
\end{lemma}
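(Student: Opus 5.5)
We will prove the lemma in two parts: the proposed vector is feasible and attains the value \eqref{eq:ascending-lp-optimum}, and no feasible vector does better.

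\textbf{Feasibility of the proposed vector.} Since $c_0\ge0$ and $c_\ell\ge c_{\ell-1}$, every proposed coordinate is nonnegative. The partial sums telescope, because $h_\ell x_\ell=c_\ell-c_{\ell-1}$ for $\ell\ge1$ and $h_0x_0=c_0$. Hence $W_j=c_j$ for every $j$. Summing the coordinates gives exactly \eqref{eq:ascending-lp-optimum}.

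\textbf{Lower bound.} We use summation by parts. Write $W_{-1}:=0$, so that $h_\ell x_\ell=W_\ell-W_{\ell-1}$. Then
\[
\sum_{\ell=0}^M x_\ell=\sum_{\ell=0}^M\frac{W_\ell-W_{\ell-1}}{h_\ell}
=\frac{W_M}{h_M}+\sum_{\ell=0}^{M-1}W_\ell\left(\frac1{h_\ell}-\frac1{h_{\ell+1}}\right).
\]
Each weight $\frac1{h_\ell}-\frac1{h_{\ell+1}}$ is positive because the $h_\ell$ are strictly increasing, and $\frac1{h_M}>0$. So we may replace each $W_\ell$ by its lower bound $c_\ell$, which yields
\[
\sum_\ell x_\ell\ \ge\ \frac{c_M}{h_M}+\sum_{\ell=0}^{M-1}c_\ell\left(\frac1{h_\ell}-\frac1{h_{\ell+1}}\right).
\]
Reversing the summation by parts with $c_{-1}:=0$ turns the right-hand side into exactly \eqref{eq:ascending-lp-optimum}.

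The only step needing care is the sign bookkeeping in the Abel summation: positivity of the weights is what allows each $W_\ell$ to be replaced by $c_\ell$. The hypothesis that the $c_j$ are nondecreasing is not used in the lower bound; it is needed only to make the proposed vector nonnegative, and hence to show the bound is attained.
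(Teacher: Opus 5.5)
Your proposal is correct and follows essentially the same route as the paper's proof: you check feasibility by telescoping, then rewrite $x_\ell=(W_\ell-W_{\ell-1})/h_\ell$, apply summation by parts to obtain positive weights $\frac{1}{h_\ell}-\frac{1}{h_{\ell+1}}$ and $\frac{1}{h_M}$, and replace each $W_\ell$ by $c_\ell$. Your remark that monotonicity of the $c_j$ is needed only for attainment, not for the lower bound, is also accurate.
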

\begin{proof}
We first verify that the proposed solution is feasible and then prove
that no feasible solution can have a smaller objective value.
Since $0\le c_0\le\cdots\le c_M$,
the proposed vector is nonnegative. Moreover,
\[
W_j =
h_0\frac{c_0}{h_0}
+
\sum_{\ell=1}^{j}
h_\ell
\frac{c_\ell-c_{\ell-1}}{h_\ell} =
c_0+
\sum_{\ell=1}^{j}
(c_\ell-c_{\ell-1}) = c_j.
\]
Thus, every constraint is satisfied with equality, and the objective
value is exactly \eqref{eq:ascending-lp-optimum}.

It remains to show that no feasible solution achieves a smaller
objective value. Let $\mathbf x=(x_0,\dots,x_M)$
be any feasible solution, and define $W_{-1}:=0$.
Since $W_\ell-W_{\ell-1}=h_\ell x_\ell,$
we may rewrite the variables as $x_\ell=\frac{W_\ell-W_{\ell-1}}{h_\ell}$.
Substituting into the objective and applying summation by parts gives
\begin{equation}
\sum_{\ell=0}^{M}x_\ell
=
\sum_{\ell=0}^{M}
\frac{W_\ell-W_{\ell-1}}{h_\ell} =
\sum_{\ell=0}^{M-1}
W_\ell
\left(
\frac1{h_\ell}
-
\frac1{h_{\ell+1}}
\right)
+
\frac{W_M}{h_M}.
\label{eq:ascending-abel}
\end{equation}

Since $h_0<h_1<\cdots<h_M$,
every coefficient on the right-hand side is strictly positive.
Moreover, feasibility implies
\[
W_\ell\ge c_\ell,
\quad
\ell=0,\dots,M.
\]
Replacing each $W_\ell$ by its lower bound therefore yields
\begin{equation*}
\sum_{\ell=0}^{M}x_\ell \ge
\sum_{\ell=0}^{M-1}
c_\ell
\left(
\frac1{h_\ell}
-
\frac1{h_{\ell+1}}
\right)
+
\frac{c_M}{h_M} =
\frac{c_0}{h_0}
+
\sum_{\ell=1}^{M}
\frac{c_\ell-c_{\ell-1}}{h_\ell},
\end{equation*}
which is precisely the objective value attained by the proposed
solution. Hence, the proposed solution is optimal.
\end{proof}

\begin{corollary}[Specialization used in the lower-bound proofs]
\label{cor:ascending-lp}
Let $h_j=\gamma^{j/M}$ and $c_j=\alpha h_j(1-2\delta)-d$,
where $0\le d<\alpha(1-2\delta)$.
Then, $c_0=\alpha(1-2\delta)-d>0$,
so $(c_j)$ is nonnegative and nondecreasing. Consequently,
\[
\sum_{\ell=0}^M x_\ell
\ge
\alpha(1-2\delta)(1+S_M)-d,
\]
where $S_M
=
M(1-\gamma^{-1/M})
\longrightarrow
\ln\gamma$.
\end{corollary}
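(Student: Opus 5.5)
The plan is to apply Lemma~\ref{lem:ascending-capacity} directly, with $h_j=\gamma^{j/M}$ and $c_j=\alpha h_j(1-2\delta)-d$, and then compute the resulting closed form explicitly. Everything reduces to checking the lemma's hypotheses and summing a geometric-type expression.

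\textbf{Checking the hypotheses.} Since $\gamma>1$, the sequence $h_j=\gamma^{j/M}$ is strictly increasing and positive, with $h_0=1$. Hence $c_0=\alpha(1-2\delta)-d$, which is strictly positive by the assumption $d<\alpha(1-2\delta)$. Next,
\[
c_j-c_{j-1}=\alpha(1-2\delta)(h_j-h_{j-1})\ge 0 ,
\]
using $\alpha>0$ and $1-2\delta>0$. This second fact is implicit: it follows from $\alpha(1-2\delta)>d\ge0$. So the sequence $(c_j)$ is nonnegative and nondecreasing, and the lemma applies.

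\textbf{Evaluating the bound.} The lemma bounds $\sum_\ell x_\ell$ below by
\[
\frac{c_0}{h_0}+\sum_{\ell=1}^M\frac{c_\ell-c_{\ell-1}}{h_\ell}.
\]
The first term equals $\alpha(1-2\delta)-d$. Each summand equals
\[
\alpha(1-2\delta)\left(1-\frac{h_{\ell-1}}{h_\ell}\right)=\alpha(1-2\delta)\left(1-\gamma^{-1/M}\right).
\]
Summing the $M$ equal terms gives $\alpha(1-2\delta)M(1-\gamma^{-1/M})=\alpha(1-2\delta)S_M$. Adding the first term yields exactly $\alpha(1-2\delta)(1+S_M)-d$.

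\textbf{The limit.} Write $u=1/M$, so that $S_M=\frac{1-e^{-u\ln\gamma}}{u}$. As $u\to0$, this converges to the derivative of $1-e^{-u\ln\gamma}$ at $u=0$, which is $\ln\gamma$. Equivalently, $1-e^{-x}=x+O(x^2)$ gives $S_M=\ln\gamma+O(1/M)$.

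No step is a real obstacle. The only point needing care is to note explicitly that $1-2\delta>0$ follows from the hypothesis, since monotonicity of $(c_j)$ depends on it.
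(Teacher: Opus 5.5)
Your proposal is correct and follows the paper's proof: apply Lemma~\ref{lem:ascending-capacity} with $h_j=\gamma^{j/M}$ and telescope each increment $(c_\ell-c_{\ell-1})/h_\ell$ into $\alpha(1-2\delta)(1-\gamma^{-1/M})$ to get $\alpha(1-2\delta)(1+S_M)-d$. Your explicit check of the lemma's hypotheses and your derivative argument for $S_M\to\ln\gamma$ only spell out steps the paper states without proof.
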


\begin{proof}
Applying \Cref{lem:ascending-capacity} gives
\begin{equation*}
\sum_{\ell=0}^M x_\ell \ge
\frac{c_0}{h_0}
+
\sum_{\ell=1}^M
\frac{c_\ell-c_{\ell-1}}{h_\ell} =
\alpha(1-2\delta)-d
+
\alpha(1-2\delta)
\sum_{\ell=1}^M
\left(
1-\frac{h_{\ell-1}}{h_\ell}
\right) =
\alpha(1-2\delta)(1+S_M)-d.
\end{equation*}
\end{proof}

Theorem~\ref{thm:small-good-lower} applies this corollary with
$d=\gamma y_i$, where $y_i\in\{0,\delta\}$.
Condition~\eqref{eq:small-positive-target} ensures
$d<\alpha(1-2\delta)$.

\section{Omitted Proofs for Section~\ref{sec:la}}
\label{app:omitted:la}

\subsection{Proof of Lemma~\ref{lem:reserve}}
\label{app:proof:lem:reserve}
\ReserveFeasibility*

\begin{proof}
We prove that the invariant $\size_i(A_i)+\mathrm{res}_i\le\bud_i$ holds for every agent $i$
throughout the execution. The claimed
feasibility properties then follow immediately. Initially, $A_i=\varnothing$ and $\mathrm{res}_i=\sum_\theta
q_{i,\theta}\size_{i,\theta}=\size_i(\widehat A_i)\le\bud_i$, the last inequality
because $\widehat A$ is feasible for the planner. Consider the three assignment
branches. Suppose the invariant holds immediately before processing an arriving
good. We verify that it is preserved in each branch of
\textsc{Plan-Reserve-Fulfill}.

\paragraph{Filling a planned agent quota.} If a type-$\theta$ good fills an unfilled quota of
agent $i$, then $q^{\mathrm{unf}}_{i,\theta}>0$ before the step, so $\mathrm{res}_i$
includes a term $\ge\size_{i,\theta}$; thus
$\size_i(A_i)+\size_{i,\theta}\le\size_i(A_i)+\mathrm{res}_i\le\bud_i$, so the assignment
is feasible. After it, $\size_i(A_i)$ increases by $\size_{i,\theta}$ and $\mathrm{res}_i$
decreases by exactly $\size_{i,\theta}$, leaving $\size_i(A_i)+\mathrm{res}_i$ unchanged.

\paragraph{Assigning to charity.}Sending a good to charity changes no agent's $A_i$ and does not
increase any $\mathrm{res}_i$ (a charity quota decrement leaves $\mathrm{res}_i$ untouched), so the
invariant is preserved.

\paragraph{Assigning an excess good to agent $i$.} This branch is taken only when
$\size_i(A_i)+\size_{i,\theta}+\mathrm{res}_i\le\bud_i$ is checked explicitly; afterwards
$\size_i(A_i)$ grows by $\size_{i,\theta}$ and $\mathrm{res}_i$ is unchanged, so the
invariant holds. In all cases $\size_i(A_i)\le\size_i(A_i)+\mathrm{res}_i\le\bud_i$ at
termination.

Thus, the invariant is preserved throughout the execution. At
termination,
\[
\size_i(A_i)\le
\size_i(A_i)+\mathrm{res}_i\le\bud_i
\]
for every agent, so the returned allocation is budget feasible.
Moreover, every good used to satisfy a planned agent quota is accepted
at the moment it arrives without violating feasibility.
\end{proof}

\subsection{Proof of Theorem~\ref{thm:consistency}}
\label{app:proof:thm:consistency}
\PerfectPredictionConsistency*

\begin{proof}
Suppose $f=\widehat f$. Then, for every type $\theta$, exactly
\[
f_\theta=\widehat f_\theta=\sum_{r\in\Rec} q_{r,\theta}
\]
goods of type $\theta$ arrive, matching the total planned quota of that type.

By \Cref{lem:reserve}, every planned agent quota can be filled
immediately upon the arrival of a matching good without violating
budget feasibility. Moreover,
\textsc{Plan-Reserve-Fulfill} always gives priority to unfilled
planned quotas, first for agents and then for charity, before treating
a good as excess. Since the total number of arriving type-$\theta$
goods exactly equals the total planned quota of that type, every
planned quota is filled, and no good is ever processed as an excess
good.

It follows that, for every recipient $r$ and type $\theta$, the final
allocation $A_r$ contains exactly $q_{r,\theta}$ goods of type $\theta$.
Hence, the online allocation realizes the planned allocation $\widehat A$ up
to a relabelling of identical goods.
\end{proof}

\subsection{Proof of Corollary~\ref{cor:fefx-lift}}
\label{app:proof:cor:fefx-lift}

By \Cref{thm:consistency}, every
fairness property of $\widehat A$ determined solely by the per-recipient
type counts is inherited by the online allocation, including FEF,
FEF$k$, $\alpha$-FEF$k$, additive $\eta$-FEF$k$, and FEFx.
\begin{restatable}[Lifting offline fairness guarantees]{corollary}{LiftingOfflineFairness}
\label{cor:fefx-lift}
Suppose that the offline planner returns an additive
$\eta_0$-FEF$k$ allocation on the predicted multiset
$\widehat M$. If $f=\widehat f$, then
\textsc{Plan-Reserve-Fulfill} returns an additive
$\eta_0$-FEF$k$ allocation.

In particular, an exact offline FEFx planner gives exact FEFx under
correct predictions, and hence also exact FEF1.
\end{restatable}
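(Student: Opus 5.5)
The plan is to derive the corollary directly from \Cref{thm:consistency}. Fix the planner's output $\widehat A$ on $\widehat M$, and assume it is additive $\eta_0$-FEF$k$. By \Cref{thm:consistency}, when $f=\widehat f$ the online allocation $A$ has $|A_r\cap\theta|=q_{r,\theta}=|\widehat A_r\cap\theta|$ for every recipient $r$ and type $\theta$. Here $A_r\cap\theta$ denotes the type-$\theta$ goods in $A_r$.

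The first step is to build an explicit bijection $\pi$ from the realized goods $G$ to the predicted multiset $\widehat M$. For each recipient $r$ and type $\theta$, match the realized type-$\theta$ goods in $A_r$ one-to-one with the predicted type-$\theta$ copies in $\widehat A_r$. The counts agree, so this is possible. Goods of the same type have identical values $v_{i,\theta}$ and sizes $s_{i,\theta}$ for every agent $i$. Hence $\pi$ preserves $v_i$ and $s_i$ for every $i$, and $\pi(A_r)=\widehat A_r$ for every $r$.

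The second step transfers the fairness inequality through $\pi$. Fix an agent $i$, a recipient $r$, and a set $S\subseteq A_r$ with $s_i(S)\le B_i$. Then $\pi(S)\subseteq \widehat A_r$ and $s_i(\pi(S))=s_i(S)\le B_i$, so $\pi(S)$ is a feasible comparison set in the planned instance. By the hypothesis on $\widehat A$, there is $\widehat X\subseteq\pi(S)$ with $|\widehat X|\le k$ and
\[
v_i(\widehat A_i)\ge v_i(\pi(S)\setminus\widehat X)-\eta_0 .
\]
Set $X:=\pi^{-1}(\widehat X)\subseteq S$, so $|X|\le k$. Since $\pi$ is value-preserving, $v_i(A_i)=v_i(\widehat A_i)$ and $v_i(S\setminus X)=v_i(\pi(S)\setminus\widehat X)$, and the inequality carries over to $A$. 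This shows $A$ is additive $\eta_0$-FEF$k$.

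For the FEFx claim the same transfer works. Strict subsets $S\subsetneq A_r$ map to strict subsets $\pi(S)\subsetneq\widehat A_r$ because $\pi$ is a bijection on bundles, and feasibility and values are preserved. So exact FEFx of $\widehat A$ yields exact FEFx of $A$, and FEFx implies FEF$1$ by the hierarchy noted after \Cref{def:fefx}.

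The only point needing care is that a type fixes all agents' values and sizes simultaneously. Without that, relabeling identical goods could alter some agent's feasibility or value. With it, the argument is routine bookkeeping.
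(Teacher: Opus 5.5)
Your proposal is correct and takes the same route as the paper. The paper invokes Theorem~\ref{thm:consistency} and notes in one line that additive $\eta_0$-FEF$k$ and FEFx depend only on the per-recipient type counts; your type-preserving bijection $\pi$, including the check that strict subsets map to strict subsets, spells out that observation in detail.
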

\begin{proof}
The first claim follows from \Cref{thm:consistency} because additive
$\eta_0$-FEF$k$ depends only on the type counts assigned to each
recipient. The second claim follows by applying the same argument to
an offline FEFx allocation. The implication from FEFx to FEF1 follows
by definition.
\end{proof}

\subsection{Characterization of $k$-Violation}
\label{app:k-violation}

The smallest additive slack required by an allocation $A$ can be expressed
through the \emph{$k$-violation}, defined as
\[
\operatorname{viol}_k(A)
:=
\max_{\substack{
i\in N,\ r\in R,\\
S\subseteq A_r,\ s_i(S)\le B_i
}}
\left[
\min_{\substack{X\subseteq S\\|X|\le k}}
v_i(S\setminus X)
-
v_i(A_i)
\right]_+,
\]
where $[x]_+:=\max\{x,0\}$.
Equivalently, $A$ is additive $\eta$-FEF$k$ if and only if $\operatorname{viol}_k(A)\le\eta$. In particular, we next prove that the above explicit expression is equivalent to the one specified in \Cref{sec:Robustness}.

\begin{lemma}[Characterization of $k$-violation]
\label{lem:viol-characterization}
For every allocation $A$ and every $\eta\ge 0$, $A$ is additive
$\eta$-FEF$k$ if and only if $\operatorname{viol}_k(A)\le \eta$.
Equivalently,
\[
\operatorname{viol}_k(A)
=
\min\{\eta\ge 0 : A \text{ is additive } \eta\text{-FEF}k\}.
\]
\end{lemma}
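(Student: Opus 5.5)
The plan is to unfold both definitions and show that, for fixed $A$, the quantity inside the outer maximum is exactly the per-comparison slack needed. First, for each triple $(i,r,S)$ with $S\subseteq A_r$ and $s_i(S)\le B_i$, define
\[
e(i,r,S):=\min_{X\subseteq S,\ |X|\le k} v_i(S\setminus X)-v_i(A_i).
\]
This minimum is over a finite nonempty family ($X=\varnothing$ is allowed), so it is attained. Because $G$ is finite, there are only finitely many triples, so the outer maximum in $\operatorname{viol}_k(A)$ is also attained. It is well defined and equals $\max_{(i,r,S)}[e(i,r,S)]_+$. (This uses that $[\cdot]_+$ is monotone, and the family is nonempty since $S=\varnothing$ is always feasible.)

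Next, I will prove the equivalence for a fixed $\eta\ge 0$. By Definition~\ref{def:etafefk}, $A$ is additive $\eta$-FEF$k$ if and only if, for every feasible triple, there is some $X$ with $|X|\le k$ and $v_i(S\setminus X)-v_i(A_i)\le\eta$. Since the minimum over $X$ is attained, this holds exactly when $e(i,r,S)\le\eta$.

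Because $\eta\ge0$, the condition $e\le\eta$ is equivalent to $[e]_+\le\eta$. Quantifying over all triples then gives that it is equivalent to $\max[e]_+\le\eta$, that is, $\operatorname{viol}_k(A)\le\eta$.

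Finally, the "equivalently" formula follows directly. The set $\{\eta\ge0: A\text{ is additive }\eta\text{-FEF}k\}$ equals $\{\eta\ge0:\eta\ge\operatorname{viol}_k(A)\}=[\operatorname{viol}_k(A),\infty)$, because $\operatorname{viol}_k(A)\ge0$. Its minimum exists and equals $\operatorname{viol}_k(A)$.

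The only delicate point is justifying that both the minimum and the maximum are attained, so that the two "min"/"max" expressions are not just inf/sup. This holds because there are finitely many goods, hence finitely many sets $S$ and $X$. I would state this explicitly before the equivalence step.
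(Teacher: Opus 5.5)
Your proposal is correct and follows essentially the same route as the paper: unfold Definition~\ref{def:etafefk}, observe that the existence of a suitable $X$ is equivalent to the inner minimum being at most $\eta$, use $\eta\ge0$ to pass to positive parts, and quantify over all feasible triples. The paper first forms the maximum of the unclipped minima and then takes its positive part, whereas you clip each triple separately, which matches the displayed definition of $\operatorname{viol}_k(A)$ directly. Your explicit remarks that finiteness of $G$ makes every min and max attained, and that the set of admissible $\eta$ is exactly $[\operatorname{viol}_k(A),\infty)$, fill in details the paper leaves implicit.
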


\begin{proof}
By Definition~\ref{def:etafefk}, \(A\) is additive \(\eta\)-FEF\(k\) if and only if,
for every agent \(i\in N\), recipient \(r\in R\), and set
\(S\subseteq A_r\) satisfying \(s_i(S)\le B_i\), there exists
\(X\subseteq S\) with \(|X|\le k\) such that
\[
v_i(S\setminus X)-v_i(A_i)\le \eta.
\]
For a fixed admissible triple \((i,r,S)\), this is equivalent to
\[
\min_{\substack{X\subseteq S\\ |X|\le k}}
\bigl(v_i(S\setminus X)-v_i(A_i)\bigr)
\le \eta.
\]
Define
\[
M(A):=
\max_{\substack{
i\in N,\ r\in R,\ S\subseteq A_r\\
s_i(S)\le B_i
}}
\;
\min_{\substack{X\subseteq S\\ |X|\le k}}
\bigl(v_i(S\setminus X)-v_i(A_i)\bigr).
\]
It follows that \(A\) is additive \(\eta\)-FEF\(k\) if and only if
\(M(A)\le\eta\). Since \(\eta\ge 0\),
\[
M(A)\le\eta
\quad\Longleftrightarrow\quad
[M(A)]_+\le\eta.
\]
By definition, \(\operatorname{viol}_k(A)=[M(A)]_+\). Therefore,
\[
A\text{ is additive }\eta\text{-FEF}k
\quad\Longleftrightarrow\quad
\operatorname{viol}_k(A)\le\eta.
\]
Taking the minimum over \(\eta\ge0\) proves
\[
\operatorname{viol}_k(A)
=
\min\{\eta\ge0:A\text{ is additive }\eta\text{-FEF}k\}. \qedhere
\]
\end{proof}

\subsection{Robustness via Local Prediction Errors}
\label{app:proof:thm:robust}

\begin{restatable}{theorem}{RobustnessLocal}
\label{thm:robust}
Suppose the planned allocation $\widehat A$ is additive
$\eta_0$-FEF$k$. Let $A$ be the allocation returned by
\textsc{Plan-Reserve-Fulfill}. Let $m_i$ denote the total
$v_i$-value of unfilled planned quotas for agent $i$ and
$e_{i,r}$ the total $v_i$-value of excess realized
goods assigned to recipient $r$. Then, $\viol_k(A)\le
\eta_0+\max_{i,r}(m_i+e_{i,r})$.
\end{restatable}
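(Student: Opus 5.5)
The plan is to compare the realized allocation $A$ with the planned allocation $\widehat A$ recipient by recipient, using the type-count bookkeeping of \textsc{Plan-Reserve-Fulfill}. The first step is a structural decomposition. For each recipient $r$, the realized bundle $A_r$ consists of \emph{planned} goods, which fill quotas $q_{r,\theta}$, and \emph{excess} goods, which are assigned by the greedy branch or by the final charity branch. The planned goods of $r$ correspond, type by type, to a sub-multiset of $\widehat A_r$ via a relabelling of identical goods; what is missing are the unfilled quotas. So I would write
\[
A_r \cong (\widehat A_r \setminus U_r)\,\cup\, E_r,
\]
where $U_r$ collects the unfilled planned goods of $r$ and $E_r$ the excess goods assigned to $r$. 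By definition, $v_i(U_i)=m_i$ for an agent $i$ and $v_i(E_r)=e_{i,r}$. Since the algorithm never removes goods, $v_i(A_i)\ge v_i(\widehat A_i)-m_i$.

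Next I fix an agent $i$, a recipient $r$, and a set $S\subseteq A_r$ with $s_i(S)\le B_i$. I split $S=S^{\mathrm{pl}}\cup S^{\mathrm{ex}}$ into planned and excess parts. Under the type identification, $S^{\mathrm{pl}}$ is a subset of $\widehat A_r$ with the same sizes for $i$, so it is feasible for $i$. The planner's additive $\eta_0$-FEF$k$ guarantee (via the characterization in Lemma~\ref{lem:viol-characterization}) then gives $X\subseteq S^{\mathrm{pl}}$ with $|X|\le k$ and $v_i(\widehat A_i)\ge v_i(S^{\mathrm{pl}}\setminus X)-\eta_0$. Using this same $X$ for $S$, I would chain
\[
v_i(S\setminus X)\le v_i(S^{\mathrm{pl}}\setminus X)+v_i(S^{\mathrm{ex}})\le v_i(\widehat A_i)+\eta_0+e_{i,r}\le v_i(A_i)+\eta_0+m_i+e_{i,r}.
\]
Taking the maximum over $i$, $r$, $S$ yields $\viol_k(A)\le\eta_0+\max_{i,r}(m_i+e_{i,r})$. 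The case $r=i$ is trivial, or is covered by the same chain.

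The main obstacle is making the identification between planned goods in $A_r$ and goods of $\widehat A_r$ rigorous, so that $S^{\mathrm{pl}}$ is genuinely a subset of some bundle of the planned allocation. The planner's guarantee is only stated for subsets of $\widehat A_r$, and goods are distinct objects. I would handle this by fixing, for each type $\theta$, an injection from the filled quota slots of $(r,\theta)$ into the type-$\theta$ goods of $\widehat A_r$. Values and sizes depend only on type, so feasibility and all values transfer exactly. I would also check the two branch cases of the algorithm. First, an arriving good of an unpredicted type always lands in the excess category. Second, a good that fills a charity quota counts as planned for charity, while a good sent to charity in the final else-branch counts as excess for charity, so that it contributes to $e_{i,\charity}$.
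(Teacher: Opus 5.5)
Your proposal is correct and follows essentially the same route as the paper's proof. Both split $S$ into $S^{\mathrm{pl}}\uplus S^{\mathrm{ex}}$ and map $S^{\mathrm{pl}}$ to the predicted copies in $\widehat A_r$ whose quotas it filled. Both then apply the planner's additive $\eta_0$-FEF$k$ guarantee and close with $v_i(A_i)\ge v_i(\widehat A_i)-m_i$ and $v_i(S^{\mathrm{ex}})\le e_{i,r}$.

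One small correction: the reason $v_i(A_i)\ge v_i(\widehat A_i)-m_i$ holds is that filled quotas contribute exactly their planned value and excess goods have nonnegative value, not that the algorithm never removes goods.
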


\begin{proof}
Fix an agent $i$, a recipient $r$, and a set $S\subseteq A_r$
that is feasible for $i$. Partition $S=S^{\mathrm{pl}}\uplus S^{\mathrm{ex}}$, where $S^{\mathrm{pl}}$ contains the realized goods that filled
planned quotas of $r$, and $S^{\mathrm{ex}}$ contains the excess
goods assigned to $r$.

For each good in $S^{\mathrm{pl}}$, choose the distinct predicted
copy whose quota it filled. Let
$\widehat S^{\mathrm{pl}}\subseteq\widehat A_r$ be the set of these
predicted copies. Each realized good and its corresponding predicted
copy have the same type. Therefore,
\[
s_i(\widehat S^{\mathrm{pl}})
=s_i(S^{\mathrm{pl}})
\le s_i(S)
\le B_i
\]
and
\[
v_i(\widehat S^{\mathrm{pl}})
=v_i(S^{\mathrm{pl}}).
\]

Since $\widehat A$ is additive $\eta_0$-FEF$k$, there exists
$\widehat X\subseteq\widehat S^{\mathrm{pl}}$ with
$|\widehat X|\le k$ such that
\[
v_i(\widehat A_i)
\ge
v_i(\widehat S^{\mathrm{pl}}\setminus\widehat X)-\eta_0.
\]
Let $X\subseteq S^{\mathrm{pl}}$ be the realized goods corresponding
to the predicted copies in $\widehat X$. Then $|X|\le k$ and
\[
v_i(\widehat S^{\mathrm{pl}}\setminus\widehat X)
=
v_i(S^{\mathrm{pl}}\setminus X).
\]

The filled planned quotas of agent $i$ contribute exactly the value
of the corresponding predicted copies. The only planned value missing
from $A_i$ is counted by $m_i$, while excess goods assigned to
$i$ have nonnegative value. Hence,
\[
v_i(A_i)\ge v_i(\widehat A_i)-m_i.
\]
Combining the preceding inequalities gives
\[
v_i(A_i)
\ge
v_i(S^{\mathrm{pl}}\setminus X)-\eta_0-m_i.
\]

Finally, $v_i(S^{\mathrm{ex}})\le e_{i,r}$, and therefore
\[
v_i(S\setminus X) =
v_i(S^{\mathrm{pl}}\setminus X)+v_i(S^{\mathrm{ex}}) \le
v_i(A_i)+\eta_0+m_i+e_{i,r}.
\]
Rearranging proves
\[
v_i(A_i)
\ge
v_i(S\setminus X)-\eta_0-m_i-e_{i,r}.
\]
Taking the maximum over all feasible comparisons proves the stated
bound on $\viol_k(A)$.
\end{proof}

\subsection{Proof of Theorem~\ref{thm:robust-global}}
\label{app:proof:thm:robust-global}
\PredictionErrorRobustness*

\begin{proof}
We bound the two local error terms appearing in
\Cref{thm:robust} by the corresponding prediction error.
For each type $\theta$, an unfilled planned quota can arise only if
fewer than $\widehat f_\theta$ goods of that type are realized. Hence, the
number of unfilled type-$\theta$ quotas is at most
$(\widehat f_\theta-f_\theta)_+$.
Similarly, an excess good of type $\theta$ can arise only if more than
$\widehat f_\theta$ goods of that type are realized. Thus, the number of
excess type-$\theta$ goods is at most
$(f_\theta-\widehat f_\theta)_+$.

Consequently, for every agent $i$ and recipient $r$,
\[
m_i\le\sum_{\theta}\val_{i,\theta}(\widehat f_\theta-f_\theta)_+,
\quad
e_{i,r}\le\sum_{\theta}\val_{i,\theta}(f_\theta-\widehat f_\theta)_+.
\]
For every type $\theta$, at most one of
$(\widehat f_\theta-f_\theta)_+$ and
$(f_\theta-\widehat f_\theta)_+$ is nonzero, and their sum equals
$|f_\theta-\widehat f_\theta|$. Therefore,
\[
m_i+e_{i,r}\ \le\ \sum_\theta\val_{i,\theta}\,|f_\theta-\widehat f_\theta|\ =\ \Delta_i(f,\widehat f)\ \le\ \Delta_\infty(f,\widehat f).
\]
Substituting this bound into \Cref{thm:robust} gives $\viol_k(A)\le
\eta_0+\Delta_\infty(f,\widehat f)$,
proving the first claim. The common-valuation specialization follows
immediately from the definition of $\Delta(f,\widehat f)$.
\end{proof}

\subsection{Proof of Theorem~\ref{thm:pred-lb}}
\label{app:proof:thm:pred-lb}
\PredictionErrorLowerBound*

\begin{proof}
We construct two instances sharing the same prediction and the same
arrival prefix. The first instance satisfies the prediction exactly and
forces every perfectly consistent algorithm to reserve almost all
capacity for low-valued goods. The second instance differs only in its
final arrivals, replacing the predicted zero-valued goods by
high-valued goods. Since the algorithm cannot distinguish the two
instances before this point, it has too little remaining capacity to
accept many of the high-valued goods.

Formally, let $\bud_1=\bud_2=1$. Choose $Q$ sufficiently large that
\[
\frac{Q-2k}{Q+2k}>1-\frac{\zeta}{2},
\]
and choose $\delta>0$ to be small, as fixed later. Every good has size $1/Q$.

The prediction consists of $3Q$ \emph{low} goods of value $\delta$ and size $1/Q$, and
$Q+2k$ \emph{zero-padding} goods of value $0$ and size $1/Q$. Thus, the predicted
horizon is $\horizon=4Q+2k$.

\paragraph{Perfect-prediction instance.}
Assume the prediction is correct and the first $3Q$ arrivals are the low
goods. After they arrive, since the two agents together can hold at most $2Q$ goods in total, at
least $Q$ low goods must be assigned to charity.

We claim that after this prefix each agent must already hold at least
$Q-k$ low goods. Otherwise, let $S$ consist of $Q$ low goods currently
held by charity. Then, $S$ is feasible since $\size_i(S)=1$, and for every
$X\subseteq S$ with $|X|\le k$,
\[
\val(S\setminus X)\ge(Q-k)\delta,
\]
while the agent's value is strictly smaller than $(Q-k)\delta$, contradicting exact
FEF$ k$. Since the remaining predicted goods all have value $0$, later
arrivals cannot repair such a violation.

\paragraph{Actual instance.}
Now, keep the same prediction, the same horizon, and the same first
$3Q$ arrivals, but replace the final $Q+2k$ zero-padding goods by
\emph{unpredicted high} goods of value $1$ and size $1/Q$.

The zero-padding goods are missing (since they have value $0$, they contribute nothing to error), and the high goods are excess. The prediction error is therefore
\[
\Delta(f,\widehat f)=\sum_\theta\val_\theta|f_\theta-\widehat f_\theta| =1\cdot(Q+2k)=Q+2k
\]
since only the value-$1$ and value-$0$ types differ, and the latter
contributes nothing to the weighted error.

Since the two instances have identical low prefixes,
$\Alg$ makes exactly the same decisions during the first $3Q$
arrivals. Hence, each agent already holds at least $Q-k$ low goods and
has capacity for at most $k$ additional high goods, due to the capacity of size $Q$. Consequently, at most
$2k$ of the $Q+2k$ high-valued goods can be accepted, so at least $Q$
are assigned to charity.

Let $S$ be any $Q$ of these high-valued charity goods. Then,
$\size_i(S)=1$, and for every $X\subseteq S$ with $|X|\le k$,
\[
\val(S\setminus X)\ge Q-k.
\]
On the other hand, each agent receives value at most $Q\delta+k$,
consisting of at most $Q$ low goods of value $\delta$ together with at
most $k$ high-valued goods. Therefore,
\[
\viol_k(A)\ge(Q-k)-(Q\delta+k)
=Q-2k-Q\delta.
\]

Finally, choose a sufficiently small $\delta$ so that $Q\delta\le\frac{\zeta}{2}(Q+2k)$.
Then,
\begin{equation*}
    Q-2k-Q\delta \ge
(Q-2k)-\frac{\zeta}{2}(Q+2k) \ge
\left(1-\frac{\zeta}{2}\right)(Q+2k)
-\frac{\zeta}{2}(Q+2k) =
(1-\zeta)(Q+2k),
\end{equation*}
where the second inequality uses
\[
Q-2k>
\left(1-\frac{\zeta}{2}\right)(Q+2k).
\]
Since $\Delta(f,\widehat f)=Q+2k$, the claimed lower bound follows.
\end{proof}

\subsection{Proof of Theorem~\ref{thm:marginals}}
\label{app:proof:thm:marginals}
\MarginalsDoNotSuffice*

\begin{proof}
We construct two instances having identical predicted value and size
marginals, together with the same horizon, total value, maximum item
value, and arrival prefix. They differ only in the correlation between
values and sizes. The first instance forces every correct algorithm to
reserve almost all of each agent's capacity for low-valued goods,
whereas the second requires that capacity for high-valued goods. Since
a deterministic algorithm cannot distinguish the two instances before
the common prefix ends, it cannot be correct on both.

Let $\bud_1=\bud_2=1$. Choose $Q>2k$ and $\delta>0$ so that
$k+Q\delta<Q-k$, and let $H=Q+2k$. The predicted value multiset
consists of $3Q$ values $\delta$, $H$ values $1$, and $H$ values $0$.
The predicted size multiset consists of $3Q+H$ sizes $1/Q$ and $H$
sizes $2$. Both predicted multisets contain $3Q+2H$ entries. Hence, the
announced horizon is $T:=3Q+2H$.
The total value is $3Q\delta+H$, and the maximum item value is $1$.
The first $3Q$ arrivals are low-valued goods of size $1/Q$. As before,
at least $Q$ of these goods must be assigned to charity. We now
consider two continuations of this common prefix, both consistent with
the same predicted value and size marginals.

\paragraph{Continuation I (high value paired with infeasible size).}
The remaining goods consist of $H$ goods of value $1$ and size $2$,
together with $H$ goods of value $0$ and size $1/Q$. Since every
value-$1$ good has size $2>1$, it is infeasible for both agents and
can never appear in a feasible comparison set. The only feasible goods
remaining therefore have value $0$. As in the proof of
\Cref{thm:pred-lb}, exact FEF$ k$ requires each agent to hold at least
$Q-k$ low-valued goods after the common prefix. 
Otherwise, $Q$ low charity goods yield a violation as in \Cref{thm:pred-lb}.

\paragraph{Continuation II (high value paired with feasible size).}
Now, interchange the sizes of the remaining goods. The last arrivals are
$H$ goods of value $1$ and size $1/Q$, together with $H$ goods of value
$0$ and size $2$. Since the common prefix is unchanged, the algorithm
reaches exactly the same state as in Continuation~I. Thus, each agent
already holds at least $Q-k$ low-valued goods and has capacity for at
most $k$ additional feasible goods. Of the $H=Q+2k$ feasible
high-valued goods, at least $Q$ must therefore be assigned to charity.

Let $S$ consist of any $Q$ of these high-valued charity goods. Then,
$\size_i(S)=1$, while $\val(S\setminus X)\ge Q-k$ for every $X\subseteq S$ with $|X|\le k$. On the other hand, each
agent's value is at most $Q\delta+k<Q-k$, contradicting exact
FEF$ k$.

The two continuations have identical value marginals, size marginals,
horizon, total value, and maximum item value. They differ only in the
correlation between values and sizes. Thus, no deterministic online
algorithm receiving only these marginal predictions can distinguish the
two instances before their common prefix ends, even though correctness
requires different decisions. Therefore, predicting marginal
information alone is insufficient to guarantee exact FEF$ k$.
\end{proof}

\section{Lower Bound for Weak Scalar Advice}
\label{app:advice}\label{app:weak-advice}

The learning-augmented framework of \Cref{sec:la} assumes predictions
over the joint value-size type space. This appendix considers a much
weaker information model in which the algorithm receives only a single
scalar parameter describing the density range. The resulting lower
bound serves as a sanity check: knowing the exact minimum positive
density together with a valid upper bound on the density spread does
not improve the sharp spread-bounded frontier.

\begin{definition}[Minimum-density advice with a spread bound]
\label{def:min-density-advice}

We consider instances having at least one positively valued good that
is individually feasible under the original unit budget. The advice
supplied to the algorithm consists of:

\begin{enumerate}
    \item the exact minimum positive density
    \[
        \rho_{\min}
        :=
        \min\{\rho(g):v(g)>0,\ s(g)\le1\};
    \]
    \item a valid upper bound $\gamma\ge1$ on the ratio of maximum
    to minimum positive density among individually feasible goods, as in
    Definition~\ref{def:density-spread}.
\end{enumerate}
The value $\rho_{\min}$ is supplied in the original numerical scale
of the instance. No additional information is provided: in particular,
the algorithm is not told the realized maximum density, the horizon,
the multiplicities of the density levels, or the stopping prefix.
\end{definition}

The algorithm may use the numerical value of $\rho_{\min}$
arbitrarily; in particular, no scale-invariance assumption is imposed.

\begin{theorem}[Weak scalar advice does not improve the frontier]
\label{thm:min-density-advice-lb}

Fix $n\ge1,
\rho_{\min}>0,
\gamma>1,$
and let $\rho_{\max}:=\gamma\rho_{\min}.$
For every $\alpha\in
\left(
\frac1{1+\ln\gamma},
1
\right]$
and every deterministic online algorithm supplied with the exact
minimum positive density $\rho_{\min}$ together with the spread bound
$\gamma$, there exists, for every
$\sigma_{\max}>0$, a finite unknown-horizon instance with common
valuations and sizes common across agents such that

\begin{enumerate}
\item every good has size at most $\sigma_{\max}$;
\item the minimum and maximum positive densities are exactly
      $\rho_{\min}$ and $\rho_{\max}$, respectively;
\item the realized density spread is exactly $\gamma$; and
\item the algorithm does \emph{not} produce an
      $\alpha$-FEF1 allocation.
\end{enumerate}

\end{theorem}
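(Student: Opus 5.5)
The approach is to re-run the ascending-phase construction of Theorem~\ref{thm:small-good-lower} with $\varepsilon=0$, rescaled so that the densities sit exactly in $[\rho_{\min},\rho_{\max}]$. The point is that the advice in Definition~\ref{def:min-density-advice} is then a \emph{fixed} pair $(\rho_{\min},\gamma)$, identical across all candidate hard sequences. The algorithm therefore receives the same advice on every continuation, and the adversary's simulation argument goes through unchanged.

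\textbf{Parameters.} Fix $\alpha>1/(1+\ln\gamma)$. Choose $M$ and then $Q$ with $\delta=1/Q<\min\{1/4,\sigma_{\max}\}$ so that \eqref{eq:small-positive-target} and \eqref{eq:small-capacity-contradiction} hold with $\varepsilon=0$. This is possible since $\alpha(1+\ln\gamma)>1$.

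\textbf{Instance.} Every good has size $\delta$. First release a calibration good of density $\rho_{\max}$, i.e.\ value $\rho_{\max}\delta$. Then, for $j=0,\dots,M$, release $P=\lceil (1+n)Q\rceil$ goods of density $\rho_{\min}h_j$ with $h_j=\gamma^{j/M}$. All values are simply the previous values multiplied by $\rho_{\min}$. Dividing every value by $\rho_{\min}$ turns both the $\alpha$-FEF$1$ inequalities and the capacity bookkeeping exactly into those of Theorem~\ref{thm:small-good-lower}. The algorithm itself may of course use the numerical scale, but the lower-bound argument never relies on the algorithm's behaviour, only on the budget constraint.

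\textbf{Argument.} Simulate the deterministic algorithm, which is given the advice $(\rho_{\min},\gamma)$, on the master sequence.

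First, each level leaves at least $Q$ goods of that level in charity, because the agents hold at most $nQ$ goods in total.

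Second, suppose every agent met the target $v(A_i^j)\ge \alpha\rho_{\min}h_j(1-2\delta)$ after every level. Normalize by $\rho_{\min}$ and apply Corollary~\ref{cor:ascending-lp} with $d=\gamma y_i$. This forces total assigned size above $1$, contradicting the budget. Hence some first level $j^*$ and some agent $i$ fail the target.

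Third, truncate the sequence after level $j^*$. No horizon is announced, so nothing needs to be padded; alternatively, pad with zero-value goods, which changes neither the densities nor the advice. The algorithm sees the same prefix and the same advice, so it makes the same assignments. Let $S$ be $Q$ charity goods from level $j^*$. This set is feasible, and removing any single good leaves value at least $\rho_{\min}h_{j^*}(1-\delta)$. Therefore $v(A_i)<\alpha\, v(S\setminus X)$ for every $|X|\le1$.

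\textbf{Density checks and the main obstacle.} The key step to verify is that the truncated instance still has minimum density exactly $\rho_{\min}$ and maximum density exactly $\rho_{\max}$, so that the advice is truthful.

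\begin{itemize}
\item The maximum is guaranteed by the calibration good, which is released first.
\item The minimum requires that level $0$ (density $\rho_{\min}$) is always released. It is, since $j^*\ge0$ and level $0$ precedes any truncation.
\end{itemize}

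All other densities lie in between, so the realized spread is exactly $\gamma$. This is precisely why the calibration good must come first and be positively valued. With that ordering, the advice is valid on every truncation, and the proof is complete.
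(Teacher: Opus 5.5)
Your proposal is correct and follows essentially the paper's approach. The hard instance is the ascending construction of Theorem~\ref{thm:small-good-lower} with every value multiplied by $\rho_{\min}$, and because the advice $(\rho_{\min},\gamma)$ is fixed, the advised algorithm is just an ordinary deterministic algorithm on every truncation. The only difference is packaging: the paper invokes Theorem~\ref{thm:small-good-lower} as a black box on an auxiliary algorithm $\widehat{\mathcal A}$ that rescales each normalized arrival by $\rho_{\min}$ before passing it to $\mathcal A$, whereas you re-run the construction and the Corollary~\ref{cor:ascending-lp} capacity argument directly at the original scale.
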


\begin{proof}
We reduce the claimed lower bound to the frontier lower bound of
\Cref{thm:small-good-lower} by scaling every value by the advised
minimum density.

Fix a deterministic online algorithm $\mathcal A$ receiving the
advice $(\rho_{\min},\gamma)$. We define a deterministic online
algorithm $\widehat{\mathcal A}$ for normalized instances whose
positive densities lie in $[1,\gamma]$.

Whenever a normalized good $g$ with value $\widehat v(g)$ and size
$s(g)$ arrives, algorithm $\widehat{\mathcal A}$ forms a simulated
good having the same size and value $v(g):=\rho_{\min}\widehat v(g)$,
and presents this simulated arrival to $\mathcal A$, together with
the advice $(\rho_{\min},\gamma)$. Algorithm
$\widehat{\mathcal A}$ then assigns the normalized good to the same
recipient selected by $\mathcal A$ for the simulated good. This defines a deterministic online algorithm because the transformation uses only the current normalized arrival and the fixed constant $\rho_{\min}$. Since only the values are
rescaled, while sizes and budgets are unchanged, the simulation
preserves feasibility exactly when the corresponding normalized assignment is feasible.

Since $\alpha\in ( \frac1{1+\ln\gamma}, 1 ]$, applying \Cref{thm:small-good-lower} to
$\widehat{\mathcal A}$ with
$\varepsilon=0$ yields a fixed finite normalized instance
$\widehat{\mathcal H}$ such that

\begin{enumerate}[label=\textnormal{(\alph*)}]
\item every good has size at most $\sigma_{\max}$;
\item the minimum and maximum positive densities are exactly
      $1$ and $\gamma$, respectively; and
\item the allocation returned by $\widehat{\mathcal A}$ is \textit{not}
      $\alpha$-FEF1.
\end{enumerate}

Construct the corresponding original-scale instance
$\mathcal H$ by multiplying every value in
$\widehat{\mathcal H}$ by
$\rho_{\min}$ to obtain $v(g):=\rho_{\min}\widehat v(g)$, while leaving every size unchanged. Then, every positive density is multiplied by $\rho_{\min}$. Hence, $\min_{g:v(g)>0,\,s(g)\le1}\rho(g)=\rho_{\min}$ and $\max_{g:v(g)>0,\,s(g)\le1}\rho(g)
=\gamma\rho_{\min}
=\rho_{\max}$,
so the realized density spread is exactly $\gamma$.

By construction, given advice $\rho_{\min}$ and spread bound $\gamma$, the execution of $\mathcal A$ on
$\mathcal H$ is recipient-by-recipient identical to the execution of
$\widehat{\mathcal A}$ on
$\widehat{\mathcal H}$, since
$\widehat{\mathcal A}$ presents exactly the rescaled arrival sequence
to $\mathcal A$.

Finally, multiplying every value by the positive constant
$\rho_{\min}$ preserves every multiplicative FEF1 comparison:
\[
\widehat v(A_i)
<
\alpha\widehat v(S\setminus\{x\})
\iff
\rho_{\min}\widehat v(A_i)
<
\alpha\rho_{\min}\widehat v(S\setminus\{x\}).
\]
Hence the $\alpha$-FEF1 violation produced by
$\widehat{\mathcal A}$ on
$\widehat{\mathcal H}$ is preserved under scaling, yielding an
$\alpha$-FEF1 violation of $\mathcal A$ on
$\mathcal H$. Since $\mathcal H$ inherits the unknown-horizon
property and the same item sizes as
$\widehat{\mathcal H}$, it satisfies all the required properties.
\end{proof}

The simulation does not assume that $\mathcal A$ is scale invariant.
Instead, the auxiliary algorithm
$\widehat{\mathcal A}$ explicitly rescales each normalized arrival to
the original numerical scale before presenting it to
$\mathcal A$. The frontier lower bound is therefore applied to
$\widehat{\mathcal A}$, while the simulation transfers the resulting
hard instance back to the original advice model.

Although the constructed hard instance satisfies $\rho_{\max}=\gamma\rho_{\min}$, the algorithm is never given $\rho_{\max}$ explicitly. The advice
consists only of the exact minimum positive density
$\rho_{\min}$ together with the valid spread bound $\gamma$.

Theorem~\ref{thm:min-density-advice-lb} should be viewed as a
sanity-check lower bound for a particularly weak advice model. Through
the simulation reduction, the advice effectively specifies only the
lower endpoint of the normalized density interval, while revealing
nothing about the stopping prefix or the distribution of mass across
higher density levels. Consequently, the theorem does not constitute an
advice-complexity result and does not rule out richer forms of side
information, such as predicted quantiles, density histograms,
reservation profiles, or the joint value-size type counts used by
\textsc{Plan-Reserve-Fulfill}. Rather, it shows only that knowing the
minimum positive density together with a spread bound does not improve
the sharp spread-bounded frontier.

%====================================================================

\end{document}